\newtheorem{Assumption}{Assumption}
\newtheorem{Algorithm}{Algorithm}
\newtheorem{Definition}{Definition}
\newtheorem{Lemma}{Lemma}
\newtheorem{Problem}{Problem}
\newtheorem{Remark}{Remark}
\newtheorem{Theorem}{Theorem}
\newtheorem{Corollary}{Corollary}
\begin{document}
\title{Distributed Stochastic Optimization for \\ Weakly Coupled Systems  with Applications to Wireless Communications}

\author
{\IEEEauthorblockN{Fan Zhang, \emph{StMIEEE}, Ying Cui, \emph{MIEEE}, Vincent K. N. Lau, \emph{FIEEE}, An Liu, \emph{MIEEE}} \thanks{Fan Zhang, Vincent K. N. Lau and An Liu are with Department of Electronic and Computer Engineering, Hong Kong University of Science and Technology, Hong Kong. Ying Cui is with Department of Electrical and Computer Engineering, Northeastern University, USA.}
}
\maketitle

\begin{abstract} 
In this paper, a framework is proposed to simplify solving the infinite horizon average cost problem for the weakly coupled multi-dimensional systems. Specifically, to address the computational complexity issue,   we first introduce a \emph{virtual continuous time system} (VCTS) and  obtain the associated \emph{fluid value function}.  The relationship between  the VCTS and the original discrete time system is further established.  To facilitate  the low complexity distributed implementation and  address the coupling challenge, we model the weakly coupled system as a perturbation of a \emph{decoupled base system} and  study the  decoupled base system. The fluid value function of the VCTS is approximated by the sum of the \emph{per-flow fluid value functions} and  the approximation error is established using perturbation analysis. Finally,  we obtain a low complexity distributed solution  based on the per-flow fluid value function approximation. We apply the framework to solve a delay-optimal control problem  for the $K$-pair interference networks and obtain a distributed power control algorithm.  The proposed algorithm is compared  with various baseline schemes through simulations and it is shown that significant delay performance gain can be achieved.
\end{abstract}



\section{Introduction}
Stochastic optimization plays a key role in solving various optimal control problems under stochastic evolutions and it has  a wide range of applications in multi-disciplinary areas such as control of complex computer networks \cite{CompApp}, radio resource optimization in wireless systems \cite{WirelessApp}  as well as financial engineering \cite{EcomApp}. A common approach to solve stochastic optimization problems is via Markov Decision Process (MDP) \cite{DP_Bertsekas}, \cite{Cao}.  In the MDP approach, the state process evolves stochastically as a controlled Markov chain.  The optimal control policy is obtained by solving the well-known \emph{Bellman equation}. However, brute-force value iteration or policy iteration \cite{NeuNet} cannot lead to viable solutions  due to the \emph{curse of dimensionality}. Specifically, the size of the state space and action space grows exponentially with the dimension of the state variables. The huge state space issue not only makes  MDP problems intractable from computational complexity  perspective but also has an exponentially large memory requirement for storing the value functions and policies.  Furthermore, despite the complexity and storage issues, brute-force solution of  MDP problems is also undesirable because it leads to centralized control, which induces huge signaling overhead in collecting the global system state information and broadcasting the overall control decisions at the controller. 

To address the above issues, \emph{approximate dynamic programming}\footnote{Approximate dynamic programming can be classified into two categories, namely \emph{explicit value function approximation} and \emph{implicit value function approximation} \cite{DP_Bertsekas}. The approximation technique discussed in this paper belongs  to the former category.} (ADP)  is proposed in \cite{ADP1, ADP2} to obtain approximate solutions to  MDP problems.  One approach in ADP is called \emph{state aggregation} \cite{StateAggr_1}, \cite{StateAggr_2},  where the state  space of the Markov chain is partitioned  into disjoint regions. The states belonging to a partition region share the same control action and the same value function.  Instead of finding solutions of the Bellman equation in the original huge state space, the controller  solves a simpler problem in the aggregated state space (or reduced state space), ignoring irrelevant state information.  While the size of the state space could be significantly reduced,  it still involves solving a system of Bellman equations in the reduced state space, which is hard to deal with for large systems. Another approach in ADP is called  \emph{basis function approximation} \cite{BasisFunc1}, \cite{BasisFunc2}, i.e., approximating the value function by a linear combination of preselected basis functions. In such  approximation architecture, the value function is computed by mapping it to a low dimensional  function space  spanned by the basis functions.  In order to reduce the approximation error, proper weight associated with each basis function is then calculated by solving the \emph{projected Bellman equation} \cite{DP_Bertsekas}. Some existing works \cite{ParaBasis1}, \cite{ParaBasis2} discussed the basis function adaptation problem where the basis functions are parameterized and their parameters are tuned by minimizing an objective function according to some evaluation criteria (e.g., Bellman error). State aggregation technique can be viewed as a special case of the basis function approximation technique \cite{ADP_Bertsekas}, where the basis function space is  determined accordingly when  a method of constructing an aggregated state space is given. Both approaches can be used to solve MDP problems for systems with general dynamic structures but they fail to exploit the potential problem structures and it is also non-trivial to apply these techniques to obtain distributed solutions. 

In this paper, we are interested in distributed stochastic optimization for multi-dimensional systems with weakly coupled dynamics in control variables. Specifically, there are $K$ control agents in the system with  $K$ sub-system  state variables $(\mathbf{x}_1, \dots, \mathbf{x}_K)$. The evolution of the $k$-th sub-system state $\mathbf{x}_k$ is weakly affected by the control actions of the $j$-th agent for all $j\neq k$. To solve the stochastic optimization problem, we first construct a \emph{virtual continuous time system} (VCTS) using the \emph{fluid limit approximation approach}. The Hamilton-Jacobi-Bellman (HJB) equation associated with the optimal control problem of the VCTS is closely related to the Bellman equation associated with the optimal control problem of the original discrete time system (ODTS). Note that although the VCTS approach is related to the fluid limit approximation approach as discussed in \cite{CompApp}, \cite{fluidiff}--\cite{closeformJ}, there is a subtle difference between them. For instance, the fluid limit approximation approach is based on the {\em functional law of large numbers}  \cite{FLLN}  for the  state dynamic evolution while the proposed VCTS approach  is based on problem transformation. In order to obtain a viable solution  for the multi-dimensional  systems with weakly coupled dynamics, there are several first order technical challenges that need to be addressed.
\begin{itemize}
	\item	\textbf{Challenges due to the Coupled State Dynamics and Distributed Solutions:}
	For multi-dimensional systems with coupled state dynamics, the HJB equation associated with the VCTS is a multi-dimensional partial differential equation (PDE), which is quite challenging in general. Furthermore, despite the complexity issue involved, the solution structure requires centralized implementation, which is undesirable from both  signaling and computational perspectives. There are some existing works using the fluid limit approximation approach to solve MDP problems \cite{fluidiff}--\cite{fluidRel} but they focus mostly on single dimensional problems \cite{fluidiff} or centralized solutions for multi-dimensional problems in large scale networks \cite{fluid2, fluidRel}. It is highly non-trivial to apply these existing fluid techniques to derive distributed solutions. 

	\item \textbf{Challenges on the Error-Bound between  the VCTS and the ODTS:}
	It is well-known that the fluid value function  is closely related to the relative value function of the discrete time system. For example, for single dimensional systems \cite{fluidiff} or heavily congested networks with centralized control \cite{fluid2, fluid3}, the fluid value function is a useful approximator for the  relative value function of the ODTS. Furthermore, the error bound between the fluid value function and the relative value function  is shown to be $O( \| \mathbf{x}\| ^2)$ for large  state  $\mathbf{x}$ \cite{CompApp, fluidRel}.  However, extending these results to the multi-dimensional systems with distributed control policies is highly non-trivial. 
	
	\item \textbf{Challenges due to  the Non-Linear Per-Stage Cost in Control Variables:}
	There are a number of existing works using fluid limit approximation to solve MDP problems \cite{fluidiff}--\cite{closeformJ}. However, most of the related works only considered linear cost function \cite{fluid2}--\cite{fluid3} and relied on the exact closed-form solution of the associated fluid value function \cite{fluidiff}, \cite{closeformJ}. Yet, in many applications, such as wireless communications, the data rate is a highly non-linear function of the transmit power and these existing  results based on the closed-form fluid limit approximation cannot be directly applied to the general case with non-linear per-stage cost in control variables. 
\end{itemize}

In this paper, we shall focus on distributed stochastic optimization for multi-dimensional systems with weakly coupled dynamics. We consider an infinite horizon average cost stochastic optimization problem with general non-linear per-stage cost function. We first introduce a VCTS to transform the original discrete time average cost optimization problem into a continuous time total cost problem. The motivation of solving the problem in the continuous time domain is to leverage the well-established mathematical tools from calculus and differential equations. By solving the \emph{HJB equation} associated with the total cost problem for the VCTS, we obtain the \emph{fluid value function}, which can be used to approximate the relative value function of the ODTS. We then establish the relationship between the fluid value function of the VCTS and the relative value function of the discrete time system. To address the low complexity distributed solution requirement and the coupling challenge, the weakly coupled system can be modeled as a \emph{perturbation} of a  decoupled base system.  The solution to the stochastic optimization problem for the weakly coupled system can be expressed as  solutions of $K$ distributed \emph{per-flow HJB Equations}. By solving the per-flow HJB equation, we obtain the \emph{per-flow fluid value function}, which can be used to generate localized control actions at each agent based on locally observable system states. Using perturbation analysis, we establish the gap between the fluid value function of the  VCTS and the sum of the per-flow fluid value functions.  Finally, we show that solving the Bellman equation associated with the original stochastic optimization problem using  per-flow fluid value function approximation is equivalent to solving a deterministic network utility maximization (NUM) problem \cite{NUM1}, \cite{NUM2}  and we propose a distributed algorithm for solving the associated NUM problem. We shall illustrate the above framework of distributed stochastic optimization using an application example in wireless communications. In the example, we consider the delay optimal control problem for $K$-pair interference networks, where the queue state evolution of each transmitter is weakly affected by the control actions of the other transmitters due to the cross-channel interference.  The delay performance of the proposed distributed solution is compared with various baseline schemes  through simulations and it is shown that substantial delay performance gain can be achieved.

\section{System Model and Problem Formulation}
In this section, we  elaborate on the weakly coupled multi-dimensional systems. We  then formulate the associated  infinite horizon average cost stochastic optimization problem and discuss the general solution. We  illustrate the application of the framework using an application example in wireless communications.

\begin{figure}
  \centering
  \includegraphics[width=6.4in]{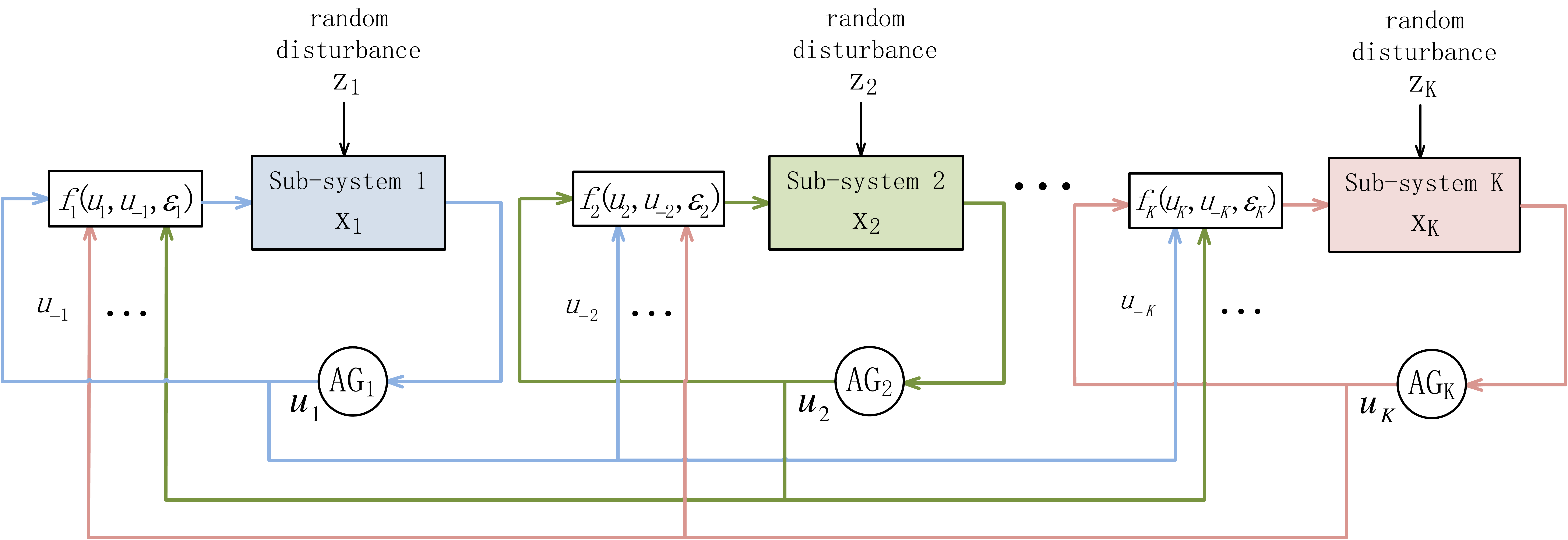}
  \caption{System model of a $K$-dimensional  multi-agent control system. $\text{AG}_k$ is the control agent associated with the $k$-th sub-system.}
  \label{Kagent}
\end{figure}

\subsection{Multi-Agent Control System with Weakly Coupled Dynamics}		\label{multidim_sys}
We consider a  multi-dimensional system with weakly coupled dynamics as shown in Fig.~\ref{Kagent}. The  system consists of $K$ control agents indexed by $k \in \mathcal{K}$, where $\mathcal{K}=\{1,\dots,K\}$.  The time dimension is discretized  into \emph{decision epochs} indexed by $t$ with  epoch duration $\Delta$. The weakly coupled multi-agent control system is a time-homogeneous MDP model which can be characterized by four elements, namely the \emph{state space}, the \emph{action space}, the \emph{state transition kernel} and the \emph{system cost}. We elaborate on the multi-agent control system as follows.

The global system state $\mathbf{x}\left(t\right) = \left(\mathbf{x}_1\left(t\right),\dots, \mathbf{x}_K\left(t\right)\right) \in  \boldsymbol{\mathcal{X}} \triangleq \mathcal{X}_1 \times \dots \times \mathcal{X}_K$ at the $t$-th epoch is partitioned into $K$ sub-system states, where $\mathbf{x}_k\left(t\right)\in \mathcal{X}_k $ denote the $k$-th sub-system state at the $t$-th epoch. $\boldsymbol{\mathcal{X}} $ and  $\mathcal{X}_k$ are the global system state space and sub-system state space, respectively.  The $k$-th control agent generates a set of control actions $\mathbf{u}_k\left(t\right) \in \mathcal{U}_k$ at the $t$-th epoch, where $\mathcal{U}_k$ is a compact action space for the $k$-th sub-system. Furthermore, we denote $\mathbf{u}\left(t\right)= \left(\mathbf{u}_1\left(t\right), \dots, \mathbf{u}_K\left(t\right) \right) \in \boldsymbol{\mathcal{U}} \triangleq \mathcal{U}_1 \times \dots \times \mathcal{U}_K$ as the global control action, where $ \boldsymbol{\mathcal{U}}$ is the global  action space. Given an initial global system state $\mathbf{x}(0) \in \boldsymbol{\mathcal{X}}$, the $k$-th sub-system state $\mathbf{x}_k\left(t\right)$ evolves according to the following dynamics:
\begin{equation}	\label{DiscreteDym}
	\mathbf{x}_k(t+1) = \mathbf{x}_k\left(t\right) +  \mathbf{f}_k\left(\mathbf{u}_k\left(t\right), \mathbf{u}_{-k}\left(t\right),\boldsymbol{\epsilon}_k\right) \Delta + \mathbf{z}_k\left(t\right)\Delta, \quad \mathbf{x}_k(0) \in \mathcal{X}_k
\end{equation}
where $ \mathbf{u}_{-k}\left(t\right) = \{\mathbf{u}_j\left(t\right): \forall j \neq k\}$ and $\boldsymbol{\epsilon}_k = \{\epsilon_{kj}: \forall j \neq k\}$. $\epsilon_{kj}$ ($ k \neq j$) is the \emph{coupling parameter} that measures the extent to which the control actions of the $j$-th agent affect the  state evolution of the $k$-th sub-system. In this paper, we assume that $\mathbf{f}_k\left(\mathbf{u}_k\left(t\right), \mathbf{u}_{-k}\left(t\right),\boldsymbol{\epsilon}_k\right)$ has the form $\mathbf{f}_k\left(\mathbf{u}_k\left(t\right), \epsilon_{k,1}\mathbf{u}_{1}\left(t\right),\cdots, \epsilon_{k,k-1}\mathbf{u}_{k-1}\left(t\right), \right.\\ \left. \epsilon_{k,k+1}\mathbf{u}_{k+1}\left(t\right),\cdots,\epsilon_{k,K}\mathbf{u}_{K}\left(t\right)\right)$, and $\mathbf{f}_k\left(\mathbf{u}_k, \mathbf{u}_{-k},\boldsymbol{\epsilon}_k\right)$ is  continuously differentiable  w.r.t. each element in  $\mathbf{u}_k$, $\mathbf{u}_{-k}$ and $\boldsymbol{\epsilon}_k$.  $\mathbf{z}_k  \in \boldsymbol{\mathcal{Z}}$  is a random disturbance process for the $k$-th sub-system, where $\boldsymbol{\mathcal{Z}}$ is the  disturbance space.  We have the following assumption on the disturbance process. 
\begin{Assumption} [Disturbance Process Model]	\label{assum_dis}
	The disturbance process $\mathbf{z}_{k}\left(t\right)$  is i.i.d. over decision epochs according to a general distribution  $\Pr[\mathbf{z}_{k}]$ with finite mean $\mathbb{E}[\mathbf{z}_{k}]=\overline{\mathbf{z}}_{k}$ for all $k \in \mathcal{K}$.	Furthermore, the disturbance processes $\{ \mathbf{z}_{k}\left(t\right) \}$ are independent w.r.t. $k$.~\hfill\IEEEQED
\end{Assumption}

Given the global system state $\mathbf{x}\left(t\right)$ and the global control action $\mathbf{u}\left(t\right)$ at the $t$-th epoch, the system evolves as a \emph{controlled Markov chain} according to the following  transition kernel:
\begin{align}
	 &\Pr[ \mathbf{x}(t+1) \in  A|\mathbf{x}\left(t\right), \mathbf{u}\left(t\right)] = \prod_k  \underbrace{\Pr[\mathbf{x}_k(t+1) \in A_k|\mathbf{x}_k\left(t\right),\mathbf{u}\left(t\right) ]}_{\text{transition kernel of the $k$-th sub-system}}   \notag \\
	  =&\prod_k  \Pr[ \mathbf{x}_k\left(t\right) +  \mathbf{f}_k\left(\mathbf{u}_k\left(t\right), \mathbf{u}_{-k}\left(t\right),\boldsymbol{\epsilon}_k\right)\Delta + \mathbf{z}_k\left(t\right)\Delta \in A_k ] \label{transition kernel}
\end{align}
where $A = \otimes_{k=1}^K A_k$ is a  measurable set\footnote{For discrete sub-system state space $\mathcal{X}_k$, $A_k$ is a singleton with exactly one element that belongs to $\mathcal{X}_k$. For continuous  sub-system state space $\mathcal{X}_k$, $A_k$ is a measurable subset of $\mathcal{X}_k$.}  with $A_k \in \mathcal{X}_k$  for all $k \in \mathcal{K}$. Note that the transition kernel in (\ref{transition kernel}) is time-homogeneous.

We consider a  system cost function $c:\boldsymbol{\mathcal{X}} \times \boldsymbol{\mathcal{U}} \rightarrow [0, \infty]$ given by
\begin{equation}	\label{syscost}
	c(\mathbf{x}, \mathbf{u}) = \sum_{k=1}^K  c_k(\mathbf{x}_k, \mathbf{u}_k)
\end{equation}
where  $c_k: \mathcal{X}_k \times \mathcal{U}_n \rightarrow [0, \infty)$ is the cost function of the $k$-th sub-system given by
\begin{equation}	\label{costfunc}
	c_k(\mathbf{x}_{k}, \mathbf{u}_{k}) = \alpha_k \| \mathbf{x}_k \| _{\mathbf{v}_k,1} + g_k\left(\mathbf{u}_k \right)  
\end{equation}
where $\alpha_k$ is  some  positive constant, $ \| \cdot \| _{\mathbf{v}_k,1} $ is a weighted $\mathcal{L}_1$ norm with $\mathbf{v}_k$ being the corresponding weight vector and $g_k$ is a continuously differentiable  function w.r.t. each element in vector $\mathbf{u}_k$.  In addition, we assume $\left| g_k \left(\mathbf{u}_k \right) \right| < \infty$ for all $\mathbf{u}_k \in \mathcal{U}_k$.
\begin{Definition} [Decoupled and Weakly Coupled  Systems]		\label{WCandDC}
	A multi-dimensional system in (\ref{DiscreteDym}) is called \emph{decoupled} if  all the coupling parameters  are equal to zero. A multi-dimensional system in (\ref{DiscreteDym}) is called \emph{weakly coupled} if  all the coupling parameters $\{  \epsilon_{kj}: \forall k, j \in \mathcal{K},  k \neq j \}$  are very small.  For weakly coupled systems, the state dynamics of each sub-system is weakly affected by the control actions of the other sub-systems as shown in the state evolution equation in  (\ref{DiscreteDym}).  In addition, define $\epsilon = \max \left\{\left|\epsilon_{kj}\right|: \forall k, j \in \mathcal{K},  k \neq j  \right\}$. Then, we have   $|\epsilon_{kj}| \leq \epsilon$ for all $k, j \in \mathcal{K}, k \neq j$.~\hfill\IEEEQED
\end{Definition}

Note that the above framework considers the coupling due to the control actions only. Yet, it has already covered  many  examples in wireless communications such as the interference networks with weak interfering cross links and the multi-user MIMO broadcast channels with perturbated channel state information. We shall illustrate the application of the framework using interference networks as an example  in Section \ref{ExampleSec}. 

\subsection{Control Policy and Problem Formulation}
In this subsection, we  introduce the  stationary centralized   control policy and formulate the infinite horizon average cost  stochastic optimization problem.
\begin{Definition} [Stationary Centralized  Control Policy]		\label{StationDef}
A  stationary centralized  control policy of the $k$-th sub-system $\Omega_k: \boldsymbol{\mathcal{X}} \rightarrow \mathcal{U}_k$ is defined as a mapping from the global system state space $\boldsymbol{\mathcal{X}}$ to the  action space of the $k$-th sub-system $\mathcal{U}_k$. Given a global system state realization $\mathbf{x} \in \boldsymbol{\mathcal{X}}$, the control action of the $k$-th sub-system is given by $\Omega_k\left( \mathbf{x} \right) = \mathbf{u}_k  \in \mathcal{U}_k$.  Furthermore, let $\Omega= \{ \Omega_k:\forall  k \}$ denote the aggregation of the control policies for all the $K$ sub-systems.~\hfill\IEEEQED
\end{Definition}

\begin{Assumption}	[Admissible Control Policy]
	A policy $\Omega$ is assumed to be admissible if the following requirements are satisfied:
	\begin{itemize}
		\item it is unichain, i.e., the controlled Markov chain $\left\{\mathbf{x}\left(t \right)\right\}$ has a single recurrent class (and possibly some transient states) \cite{DP_Bertsekas}.
		\item it is $n$-th order stable, i.e., $\lim_{T \rightarrow \infty} \mathbb{E}^{\Omega} \left[\left|\mathbf{Q} \right|^n \right] < \infty$.
		\item $\mathbf{u}=\Omega(\mathbf{x})$ satisfies some constraints\footnote{We shall illustrate the specific constraints for the feasible control policy of the application example in Section \ref{ExampleSec}.} depending on the different application scenarios.~\hfill\IEEEQED
	\end{itemize}
\end{Assumption}

Given an admissible control policy $\Omega$, the average cost of the system starting from a given initial global system state $\mathbf{x}\left(0 \right)$ is given by
\begin{equation}		\label{Utility}
	\overline{L}^{\Omega}\left(\mathbf{x}\left(0 \right) \right) = \limsup_{T \rightarrow \infty} \frac{1}{T} \sum_{t=0}^{T-1} \mathbb{E}^{\Omega} [ c(\mathbf{x}\left(t\right), \Omega(\mathbf{x}\left(t\right))) ]
\end{equation}
where $\mathbb{E}^{\Omega}$ means taking expectation  w.r.t. the probability measure induced by the control policy $\Omega$. 

We consider an infinite horizon average cost problem. The objective is to find an optimal policy such that the average cost in (\ref{Utility}) is minimized\footnote{Substituting the expression of $c(\mathbf{x},\mathbf{u})$ in (\ref{syscost}) into  (\ref{Utility}), the average cost of the system can be written as $\overline{L}^{\Omega} \left(\mathbf{x}\left(0 \right) \right) =\sum_{k=1}^K \overline{L}_k^{\Omega}\left(\mathbf{x}\left(0 \right) \right) $, where $\overline{L}_k^{\Omega}\left(\mathbf{x}\left(0 \right) \right) =\limsup_{T \rightarrow \infty} \frac{1}{T} \sum_{t=1}^{T} \mathbb{E}^{\Omega} [c_k(\mathbf{x}_k,\mathbf{u}_k))]$  is the average cost of the $k$-th sub-system. Therefore, minimizing (\ref{Utility}) is equivalent to minimizing the  sum average cost of each sub-system.}. Specifically, we have:
\begin{Problem}	[Infinite Horizon Average Cost  Problem]  \label{IHAC_MDP}
	The infinite horizon average cost problem for  the multi-agent control system is formulated as follows:
 	\begin{equation}		\label{org_MDPp}
		\min_{\Omega} \overline{L}^{\Omega}\left(\mathbf{x}\left(0 \right) \right)
	\end{equation}
where $\overline{L}^{\Omega}\left(\mathbf{x}\left(0 \right) \right)$ is given in (\ref{Utility}).		~\hfill\IEEEQED
\end{Problem}		

Under the assumption of the admissible control policy, the optimal control policy of Problem \ref{IHAC_MDP} is independent of the initial state $\mathbf{x}\left(0 \right) $  and can be obtained by solving the \emph{Bellman equation} \cite{DP_Bertsekas}, which is summarized in the following lemma:
\begin{Lemma} [Sufficient Conditions for Optimality under ODTS]	\label{LemBel}
	If there exists a ($\theta^\ast, \{ V\left(\mathbf{x} \right) \}$) that satisfies the following Bellman equation:
	\begin{equation}	\label{OrgBel}
		\theta^\ast \Delta + V\left(\mathbf{x} \right) = \min_{ \Omega\left( \mathbf{x} \right)} \left[ c\left(\mathbf{x}, \Omega\left(\mathbf{x}\right)\right) \Delta + \left(\mathrm{\Gamma}_{\Omega}V\right)\left(\mathbf{x}\right)  \right], \quad \forall \mathbf{x} \in \boldsymbol{\mathcal{X}}
	\end{equation}
where the operator $\Gamma_{\Omega}$ on $V\left(\mathbf{x} \right)$ is defined as\footnote{Because the transition kernel in (\ref{transition kernel}) is time-homogeneous, $ (\Gamma_{\Omega}V)\left( \mathbf{x} \right)$ is independent of $t$.} $(\Gamma_{\Omega}V)\left( \mathbf{x} \right)  = \mathbb{E}\left[ V\left(\mathbf{x}\left(t+1\right)\right) \big| \mathbf{x}\left(t\right)=\mathbf{x}, \Omega\left(\mathbf{x}\right) \right]$. Suppose for all admissible control policy $\Omega$ and initial global system state $\mathbf{x}\left(0 \right)$, the following transversality condition is satisfied:
\begin{align}	\label{transodts}
	\lim_{T \rightarrow \infty} \frac{1}{T}\big[V\left(\mathbf{x}\left(0 \right) \right) - \mathbb{E}\left[ V\left(\mathbf{x}\left(T \right) \right) |\mathbf{x}\left(0 \right), \left\{\Omega(\mathbf{x}\left(t \right)): 0 \leq t \leq T \right\} \right]\big]=0
\end{align}
Then $\theta^\ast=\overline{L}^{\ast} = \min_{\Omega}\overline{L}(\Omega) $ is the optimal average cost. If $\Omega^{\ast}\left( \mathbf{x} \right)$ attains the minimum of the R.H.S. of (\ref{OrgBel}) for all $\mathbf{x} \in \boldsymbol{\mathcal{X}}$, $\Omega^{\ast}$ is the optimal control policy. $V\left(\mathbf{x} \right)$  is called the \emph{relative value function}.~\hfill\IEEEQED
\end{Lemma}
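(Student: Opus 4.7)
The plan is to follow the standard dynamic programming verification argument for average-cost MDPs, adapted to the epoch length $\Delta$ appearing in the per-stage cost. The basic idea is to use the Bellman equation as a one-step inequality, iterate it to a telescoping bound over a horizon $T$, divide by $T$, and then invoke the transversality condition to make the boundary terms disappear.

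Concretely, first I would fix any admissible policy $\Omega$ and observe that, because the right-hand side of (\ref{OrgBel}) is a minimum over actions, we have the pointwise inequality
\begin{equation*}
\theta^\ast \Delta + V(\mathbf{x}) \;\leq\; c(\mathbf{x},\Omega(\mathbf{x}))\,\Delta + (\Gamma_\Omega V)(\mathbf{x}), \qquad \forall \mathbf{x}\in\boldsymbol{\mathcal{X}}.
\end{equation*}
Evaluating this at $\mathbf{x}=\mathbf{x}(t)$, taking conditional expectations, and using the tower property for the controlled Markov chain with transition kernel (\ref{transition kernel}), I would iterate over $t=0,1,\dots,T-1$ to obtain a telescoping bound
\begin{equation*}
T\,\theta^\ast\Delta + V(\mathbf{x}(0)) \;\leq\; \Delta\sum_{t=0}^{T-1}\mathbb{E}^\Omega\!\bigl[c(\mathbf{x}(t),\Omega(\mathbf{x}(t)))\bigr] + \mathbb{E}^\Omega\!\bigl[V(\mathbf{x}(T))\,\big|\,\mathbf{x}(0)\bigr].
\end{equation*}

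Next I would divide by $T$, rearrange, and take $\limsup_{T\to\infty}$. The cost term converges to $\Delta\,\overline{L}^\Omega(\mathbf{x}(0))$ by definition (\ref{Utility}), while the boundary term $\tfrac{1}{T}\bigl[V(\mathbf{x}(0)) - \mathbb{E}^\Omega[V(\mathbf{x}(T))\mid\mathbf{x}(0)]\bigr]$ vanishes precisely by the transversality assumption (\ref{transodts}). This yields $\theta^\ast \leq \overline{L}^\Omega(\mathbf{x}(0))$ for every admissible $\Omega$, hence $\theta^\ast \leq \min_\Omega \overline{L}^\Omega(\mathbf{x}(0))$. Finally, if $\Omega^\ast$ attains the minimum on the right-hand side of (\ref{OrgBel}) for every $\mathbf{x}$, the one-step inequality above becomes an equality for $\Omega=\Omega^\ast$; repeating the telescoping argument with equality throughout gives $\overline{L}^{\Omega^\ast}(\mathbf{x}(0)) = \theta^\ast$, which simultaneously identifies $\theta^\ast$ as the optimal average cost and certifies $\Omega^\ast$ as an optimal policy, independent of $\mathbf{x}(0)$.

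The main obstacle is not any single step of the algebra, which is routine, but the technical justification of taking conditional expectations and interchanging limits: one needs $V$ to be measurable and integrable under the law of the controlled chain so that $(\Gamma_\Omega V)(\mathbf{x})$ is well-defined and the tower property applies at every iterate. These measurability and integrability issues are absorbed into the statement that ($\theta^\ast,\{V(\mathbf{x})\}$) solves (\ref{OrgBel}) and that the transversality condition (\ref{transodts}) holds for all admissible $\Omega$; together with Assumption~1 on the disturbance and the $n$-th order stability built into admissibility, this makes the telescoping argument rigorous and the proof goes through.
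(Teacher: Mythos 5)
Your argument is correct and is essentially the standard average-cost verification proof that the paper itself invokes by simply citing \cite{DP_Bertsekas}: the one-step inequality from the minimum in (\ref{OrgBel}), telescoping with the tower property, division by $T$, and the transversality condition (\ref{transodts}) to kill the boundary term, with equality for the minimizing policy $\Omega^{\ast}$. No substantive difference from the paper's (referenced) approach, and your closing remarks on measurability/integrability correctly identify the only technical points that need to be absorbed into the hypotheses.
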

\begin{proof}
	Please refer to \cite{DP_Bertsekas} for details.
\end{proof}

Based on Lemma \ref{LemBel}, we establish the following corollary on  the approximate optimal solution.
\begin{Corollary}	[Approximate Optimal Solution]	\label{cor1}
	If there exists a ($\tilde{\theta}^\ast, \{ \widetilde{V}\left(\mathbf{x} \right) \}$) that satisfies the following \emph{approximate} Bellman equation:
	\begin{equation}		\label{simbelman}
		\tilde{\theta}^\ast = \min_{ \Omega\left( \mathbf{x} \right)} \left[ c\left(\mathbf{x}, \Omega\left(\mathbf{x}\right)\right) +  \nabla_{\mathbf{x}}V \left(\mathbf{x} \right) \left[  \mathbf{f}\left( \Omega\left( \mathbf{x}\right), \boldsymbol{\epsilon}\right)+\overline{\mathbf{z}}  \right]^T  \right], \quad \forall \mathbf{x} \in \boldsymbol{\mathcal{X}}
	\end{equation}
	where\footnote{$\nabla_{\mathbf{x}_k} \widetilde{V}\left(\mathbf{x}\right)$  is a row  vector with each element being the first order partial derivative of $\widetilde{V}\left(\mathbf{x}\right)$ w.r.t. each component in vector $\mathbf{x}_k$.} $\nabla_{\mathbf{x}} \widetilde{V}\left(\mathbf{x}\right) \triangleq \left( \nabla_{\mathbf{x}_1} \widetilde{V}\left(\mathbf{x}\right), \dots, \nabla_{\mathbf{x}_K} \widetilde{V}\left(\mathbf{x}\right) \right)$,    $\mathbf{f}\left(\mathbf{u}, \boldsymbol{\epsilon}\right) \triangleq \left(\mathbf{f}_1\left(\mathbf{u}_1, \mathbf{u}_{-1},\boldsymbol{\epsilon}_1 \right),\dots,\mathbf{f}_K\left(\mathbf{u}_K, \mathbf{u}_{-K},\boldsymbol{\epsilon}_K \right)  \right)$, $\overline{\mathbf{z}}=\left(\overline{\mathbf{z}}_1, \dots, \overline{\mathbf{z}}_K \right)$ and $\boldsymbol{\epsilon} \triangleq \left\{\boldsymbol{\epsilon}_k:\forall k \right\}$. Suppose for all admissible control policy $\Omega$ and initial global system state $\mathbf{x}\left(0 \right)$, the transversality condition  in (\ref{transodts}) is satisfied. Then, 
	\begin{align}
		\theta^\ast&=\tilde{\theta}^\ast+\mathcal{O}\left(\Delta \right)	\\
		V\left(\mathbf{x} \right)&=\widetilde{V}\left(\mathbf{x} \right)+\mathcal{O}\left(\Delta \right)
	\end{align}~\hfill\IEEEQED
\end{Corollary}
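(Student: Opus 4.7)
The plan is to connect the original Bellman equation (\ref{OrgBel}) with the approximate Bellman equation (\ref{simbelman}) through a first-order Taylor expansion of the relative value function, and then translate the $\mathcal{O}(\Delta)$ residual in the equation into an $\mathcal{O}(\Delta)$ gap between solutions. First, I would expand $V(\mathbf{x}(t+1))$ around the point $\mathbf{x}(t)=\mathbf{x}$, substituting the discrete dynamics (\ref{DiscreteDym}), which yields
\begin{equation*}
V(\mathbf{x}(t+1)) = V(\mathbf{x}) + \nabla_{\mathbf{x}} V(\mathbf{x}) \bigl[\mathbf{f}(\Omega(\mathbf{x}),\boldsymbol{\epsilon}) + \mathbf{z}\bigr]^T \Delta + \mathcal{O}(\Delta^2),
\end{equation*}
where the remainder is uniform in $\mathbf{x}$ thanks to the continuous differentiability of $\mathbf{f}_k$ and of the value function that is implicit in the statement. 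Taking the conditional expectation with respect to $\mathbf{z}$ and using Assumption \ref{assum_dis} ($\mathbb{E}[\mathbf{z}_k] = \overline{\mathbf{z}}_k$ and independence across $k$) yields
\begin{equation*}
(\Gamma_{\Omega} V)(\mathbf{x}) = V(\mathbf{x}) + \nabla_{\mathbf{x}} V(\mathbf{x}) \bigl[\mathbf{f}(\Omega(\mathbf{x}),\boldsymbol{\epsilon}) + \overline{\mathbf{z}}\bigr]^T \Delta + \mathcal{O}(\Delta^2).
\end{equation*}

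Next, I would substitute this expansion into (\ref{OrgBel}), cancel the common $V(\mathbf{x})$ term, and divide through by $\Delta$. The resulting equality has exactly the same minimization structure as (\ref{simbelman}), except that it is perturbed by an $\mathcal{O}(\Delta)$ residual. Thus $(\theta^{\ast},V)$ satisfies the approximate Bellman equation (\ref{simbelman}) up to an $\mathcal{O}(\Delta)$ error, while $(\tilde\theta^{\ast}, \widetilde V)$ satisfies it exactly. Subtracting the two equations and exploiting the fact that the minimizer is attained at a common compact action set $\boldsymbol{\mathcal{U}}$ produces a functional identity for the difference $\widetilde V - V$ that is forced by an $\mathcal{O}(\Delta)$ source term.

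To convert this residual estimate into the stated bounds on both $\theta^{\ast}$ and $V$, I would adopt a verification-theorem-style argument: plug the pair $(\tilde\theta^{\ast},\widetilde V)$ into the right-hand side of (\ref{OrgBel}), iterate the recursion forward along any admissible trajectory generated by the approximately optimal policy $\Omega^{\ast}$, take expectations, and then normalize by $T$ and let $T\to\infty$. The transversality condition (\ref{transodts}) kills the boundary term $\mathbb{E}[\widetilde V(\mathbf{x}(T))]/T$, while the per-stage $\mathcal{O}(\Delta^2)$ residuals accumulate into a per-epoch $\mathcal{O}(\Delta)$ discrepancy after dividing by $\Delta$, giving $\theta^{\ast} = \tilde\theta^{\ast} + \mathcal{O}(\Delta)$. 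Repeating the same comparison without the long-run average, i.e., reading off the induced difference in the value functions from the functional identity, gives $V(\mathbf{x}) = \widetilde V(\mathbf{x}) + \mathcal{O}(\Delta)$ up to the usual additive constant inherent to average-cost relative value functions.

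The main obstacle will be the last step: ensuring that the $\mathcal{O}(\Delta)$ residual in the functional equation propagates into a uniform $\mathcal{O}(\Delta)$ gap between $V$ and $\widetilde V$ rather than growing in $\mathbf{x}$ or in time. This requires control of the remainder in the Taylor expansion that is uniform in the state (leveraging the polynomial growth implicit in the $n$-th order stability requirement and the weighted $\mathcal{L}_1$ structure of $c_k$), together with the unichain and transversality conditions to rule out drift of the accumulated error along trajectories. If $V$ or $\widetilde V$ were allowed to grow faster than polynomially, the Taylor remainder could dominate and the $\mathcal{O}(\Delta)$ claim would fail, so the bulk of the technical care lies in verifying the regularity that justifies the uniform remainder estimate.
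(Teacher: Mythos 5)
Your first step reproduces the paper's opening move exactly: Taylor-expand $V(\mathbf{x}(t+1))$ using the dynamics (\ref{DiscreteDym}), take the conditional expectation so that $\mathbf{z}$ is replaced by $\overline{\mathbf{z}}$, cancel $V(\mathbf{x})$ and divide by $\Delta$, so that $(\theta^\ast,V)$ satisfies (\ref{simbelman}) up to an $\mathcal{O}(\Delta)$ residual. Where you diverge is in how the residual is converted into the two claimed bounds. The paper does not iterate the recursion; it writes the value equation \emph{together with} the first-order stationarity condition in the control as a fixed-point system $\mathbf{F}(\theta,V,\mathbf{u})=\mathbf{0}$, observes that the exact solution of (\ref{simbelman}) solves the perturbed system $\mathbf{F}=\mathcal{O}(\Delta)$, and concludes by a first-order perturbation (implicit-function-type differential of $\mathbf{F}$) that $\delta_\theta$, $\delta_V$ and $\boldsymbol{\delta}_{\mathbf{u}}$ are all $\mathcal{O}(\Delta)$ simultaneously. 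Your verification-style argument for $\theta^\ast=\tilde{\theta}^\ast+\mathcal{O}(\Delta)$ --- plugging $(\tilde{\theta}^\ast,\widetilde{V})$ into (\ref{OrgBel}), iterating, invoking (\ref{transodts}) to kill the boundary term --- is a legitimate alternative route for that half, and it has the merit of actually using the transversality hypothesis that appears in the statement.

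The gap is in the value-function bound. Subtracting the exact equation for $(\tilde{\theta}^\ast,\widetilde{V})$ from the $\mathcal{O}(\Delta)$-perturbed one for $(\theta^\ast,V)$ does not directly "read off" $V-\widetilde{V}$: because each side is a minimum over $\Omega(\mathbf{x})$, an envelope-type argument (the minimizers differ, and one must show they differ by $\mathcal{O}(\Delta)$, which is exactly the role of the stationarity equation the paper carries along in $\mathbf{F}$) only yields a constraint on the directional derivative $\nabla_{\mathbf{x}}(V-\widetilde{V})(\mathbf{x})\left[\mathbf{f}(\mathbf{u}^\ast(\mathbf{x}),\boldsymbol{\epsilon})+\overline{\mathbf{z}}\right]^T=\theta^\ast-\tilde{\theta}^\ast+\mathcal{O}(\Delta)$, not on the difference itself. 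To finish you would still need either to integrate this directional bound along closed-loop trajectories down to a common normalization point of the two relative value functions, or to do what the paper does and perturb the joint system (value equation plus first-order condition in $\mathbf{u}$), implicitly assuming the linearization is invertible. As written, the sentence "reading off the induced difference in the value functions from the functional identity" asserts the conclusion rather than proving it, so this half of the corollary is not yet established by your argument; the regularity/uniformity caveats you raise at the end are real but are shared by (and glossed over in) the paper's own proof.
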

\begin{proof}  
Please refer to Appendix	 A.	
\end{proof}

Deriving the optimal control policy from (\ref{OrgBel}) (or from (\ref{simbelman})) requires the knowledge of the relative value function $\{ V\left(\mathbf{x} \right)\}$. However, obtaining the relative value function is not trivial as it involves solving  a large  system of nonlinear fixed point equations.  Brute-force approaches  such as value iteration or policy iteration \cite{DP_Bertsekas} require huge complexity and cannot lead to any implementable solutions. Furthermore, deriving the optimal control policy $\Omega^{\ast}$ requires knowledge of the \emph{global system state}, which is undesirable from the signaling loading perspective.  We shall obtain low complexity  distributed solutions using the virtual continuous time system (VCTS) approach   in  Section \ref{Fluid_Approach}.

\begin{figure}
  \centering
  \includegraphics[width=4in]{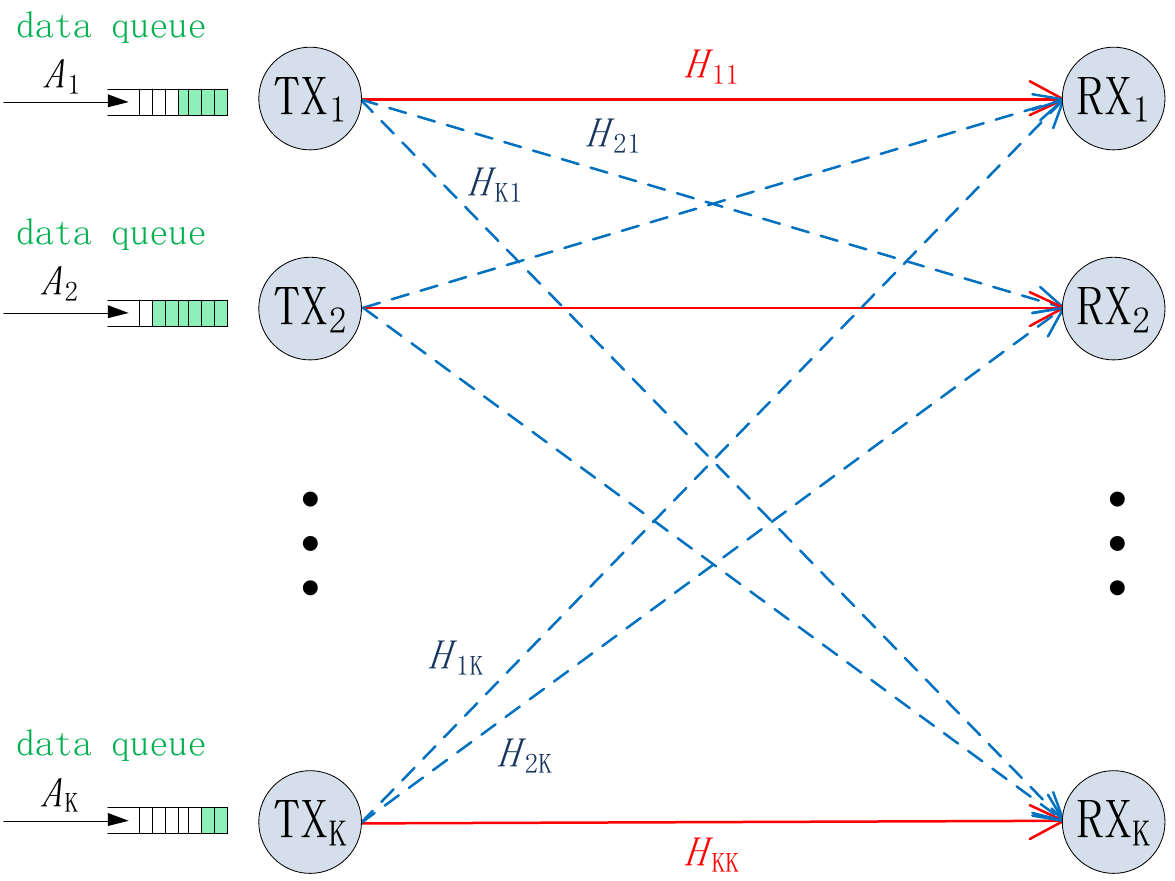}
  \caption{System model of  a K-pair interference network. Each transmitter maintains a data queue  for the bursty traffic flow towards the desired receiver in the network.}
  \label{K-pair}
\end{figure}

\subsection{Application Example -- K--Pair Interference Networks} \label{ExampleSec}

In this subsection, we  illustrate the application of the above multi-agent control framework using interference networks as an example.  We consider a $K$-pair interference network as illustrated in Fig.~\ref{K-pair}. The $k$-th transmitter sends information  to the $k$-th receiver and all the $K$ transmitter-receiver (Tx-Rx) pairs share the same spectrum (and hence, they potentially interfere with each other).  The received signal at the $k$-th receiver is given by
\begin{equation}
	r_k = \sqrt{L_{kk}} H_{kk} s_k + \sum_{j \neq k} \sqrt{L_{kj}} H_{kj} s_j + z_k
\end{equation}
where $L_{kj}$ and $H_{kj}$ are the long term path gain and microscopic  fading coefficient  from the $j$-th transmitter to the $k$-th receiver, respectively. $H_{kj}$ follows a complex Gaussian distribution with unit variance, i.e., $H_{kj} \sim \mathcal{CN}(0,1)$. $s_k$ is the symbol sent by the $k$-th transmitter, and $z_k \sim \mathcal{CN}(0,1)$ is i.i.d.  complex Gaussian channel noise. Each Tx-Rx pair in Fig.~\ref{K-pair} corresponds to one sub-system according to the general model in Section \ref{multidim_sys}. The $k$-th transmitter in this example corresponds to  the $k$-th control agent in the general model.  Denote the global CSI (microscopic  fading coefficient) as $\mathbf{H} =\{H_{kj}: \forall k,j \in \mathcal{K} \}$.  The time dimension is partitioned into decision epochs indexed by $t$ with  duration $\tau$. We have the following assumption on the channel model.

\begin{figure}
  \centering
  \includegraphics[width=5.5in]{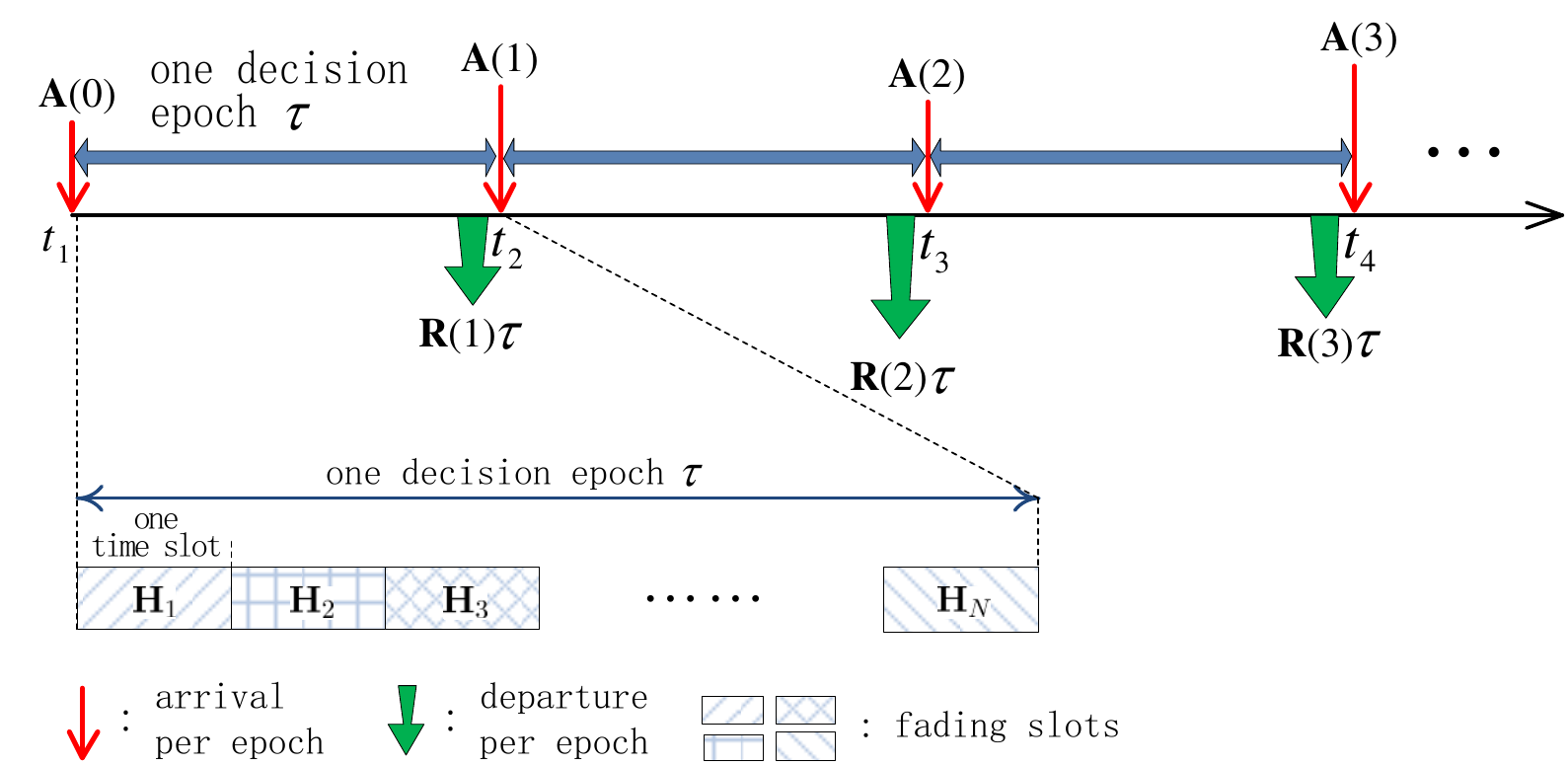}
  \caption{Illustration of the decision epochs, time slots and the associated arrival and departure events in the application example.  $\mathbf{A}\left( t\right)$ is the random packet arrival at the beginning of the $(t+1)$-th decision epoch. The decision epoch is divided into equal-sized time slots with i.i.d. microscopic fading coefficient $\mathbf{H}_1, \dots, \mathbf{H}_N$. $\mathbf{R}\left(t\right)=\left(R_1\left(t\right),\dots,R_K\left(t\right) \right)$ is the controlled global departure process at the end of the $t$-th epoch,  where $R_k\left( t\right)$ is the average rate (averaged over $\mathbf{H}$) of the $k$-th Tx-Rx pair given in (\ref{rate1}).}
  \label{blockfading}
\end{figure}

\begin{Assumption}	[Channel Model]	\label{CSIassum}	
We assume fast fading on the microscopic fading coefficient  $\mathbf{H}$. The decision epoch is divided into equal-sized time slots as shown in Fig.~\ref{blockfading}. The slot duration is sufficiently small compared with $\tau$. $H_{kj}\left(t \right)$ remains constant within each slot and is i.i.d. over slots\footnote{The assumption on the microscopic fading coefficient could be justified in many applications. For example,  in frequency hopping  systems, the channel fading remains constant within one slot (hop) and is i.i.d. over slots (hops) when the frequency is hopped from one channel to another.}. Furthermore, $\left\{H_{kj}\left(t \right)\right\}$ is independent w.r.t. $\left\{k,j \right\}$. The path gain  $\{L_{kj}: \forall k, j \in \mathcal{K}\}$ remains constant for the duration of the communication session.	~\hfill\IEEEQED
\end{Assumption}

There is a bursty data source at each transmitter and let  $\mathbf{A}\left(t\right)=\left(A_1\left(t\right),\dots,A_K\left(t\right) \right)$ be the random new arrivals (number of packets per second) for the $K$ transmitters at the beginning of the $(t+1)$-th  epoch. We have the following assumption on the arrival process.
\begin{Assumption} [Bursty Source Model]	\label{assumeA}
	The arrival process $A_k\left(t\right)$ is i.i.d. over decision epochs according to a general distribution $\Pr(A_k)$ with finite average arrival rate $\mathbb{E}[A_k]=\lambda_k$ for all $k \in \mathcal{K}$. Furthermore, the arrival process $\{ A_k\left(t\right) \}$ is independent w.r.t. $k$.~\hfill\IEEEQED
\end{Assumption}

Let $\mathbf{Q}\left(t\right)=\left( Q_1\left(t\right), \dots, Q_K\left(t\right) \right)\in\boldsymbol{\mathcal{Q}} \triangleq \mathcal{Q}^K$ denote the global queue state information (QSI) at the beginning of the $t$-th  epoch, where $Q_k\left(t\right) \in \mathcal{Q}$ is the local QSI denoting the number of packets (pkts) at the data queue for the $k$-th transmitter and $\mathcal{Q}$ is the local QSI state space of each transmitter. $Q_k\left(t\right)$  corresponds to the state of the $k$-th sub-system in the general model in Section \ref{multidim_sys}.  

Treating interference as noise, the ergodic capacity (pkts/second) of the $k$-th Tx-Rx pair at the $t$-th epoch  is given by
\begin{equation}		\label{rate1}
	R_k\left(\mathbf{p}\left(t\right)\right) = \mathbb{E} \left[ \log\left( 1+\frac{p_k^{\mathbf{H}}\left(t\right)L_{kk}|H_{kk}|^2}{1 + \sum_{j \neq k} p_j^{\mathbf{H}}\left(t\right) L_{kj}|H_{kj}|^2}\right) \bigg| \mathbf{Q}\left(t\right)\right]
\end{equation}
where $\mathbb{E}[ \cdot | \mathbf{Q}\left(t\right)]$ denotes the conditional expectation given $\mathbf{Q}\left(t\right)$, and $p_k^{\mathbf{H}}\left(t\right)$ is the transmit power of the $k$-th transmitter at the $t$-th epoch when the global CSI realization is  $\mathbf{H}$. Since the transmitter cannot transmit more than $Q_k\left(t\right)$ at any decision epoch $t$, we require $ R_k\left(t\right)\tau \leq Q_k\left(t\right)$, i.e.,
\begin{align}	\label{constraintO}
	\mathbb{E} \left[ \log\left( 1+\frac{p_k^{\mathbf{H}}\left(t\right)L_{kk}|H_{kk}|^2}{1 + \sum_{j \neq k} p_j^{\mathbf{H}}\left(t\right) L_{kj}|H_{kj}|^2}\right) \bigg| \mathbf{Q}\left(t\right)\right]\tau \leq Q_k\left(t\right)
\end{align}
Hence, the queue dynamics for the $k$-th transmitter is given by\footnote{We assume that the transmitter of each Tx-Rx pair  is  causal so that new arrivals are observed after the transmitter's actions at each decision epoch.}
\begin{align}		
	Q_k(t+1) &= Q_k\left(t\right) -   R_k\left(t\right) \tau  + A_k\left(t\right)\tau	     	\label{Qdyn}
\end{align}	
The time epochs for the arrival and departure events  of the global system are  illustrated in Fig.~\ref{blockfading}.  

Comparing with the state dynamics of the general system framework in (\ref{DiscreteDym}), we have $\Delta=\tau$, $ \mathbf{f}_k\left(\mathbf{u}_k\left(t\right), \mathbf{u}_{-k}\left(t\right),{\boldsymbol{\epsilon}}_k\right)  = -   R_k\left(t\right) \tau = - \mathbb{E} \left[ \log\left( 1+\frac{p_k^{\mathbf{H}}\left(t\right)L_{kk}|H_{kk}|^2}{1 + \sum_{j \neq k} p_j^{\mathbf{H}}\left(t\right) L_{kj}|H_{kj}|^2}\right) \bigg| \mathbf{Q}\left(t\right) \right] \tau $, $\epsilon_{kj}=L_{kj}$ and $\mathbf{z}_k\left(t\right)=A_k\left(t\right)$ in the example. The control action generated by the $k$-th control agent ($k$-th transmitter) is $\mathbf{u}_k\left(t\right) = \mathbf{p}_k\left(t\right) = \{p_k^{\mathbf{H}}\left(t\right): \forall \mathbf{H}\}$. Furthermore, we have $\mathbf{u}_{-k}\left(t\right) =   \boldsymbol{\mathbf{p}}_{-k}\left(t\right) = \{p_j^{\mathbf{H}}\left(t\right): \forall \mathbf{H}, j \neq k\} $, and $\boldsymbol{\epsilon}_k=\mathbf{L}_k \triangleq \left\{L_{kj}: \forall j \neq k \right\}$.  Define $L = \max \{L_{kj}: \forall k, j\in \mathcal{K}, k \neq j  \}$, then we have $L_{kj} \leq L$ for all $k,j \in \mathcal{K}, k \neq j$. This corresponds to a physical interference network with weak interfering cross links due to the small cross-channel path gain. For simplicity, let $\boldsymbol{\mathbf{p}}\left(t\right)=\{\mathbf{p}_k\left(t\right): \forall k \}$ be the collection of power control actions of all the $K$ transmitters.  We define the system cost function as
\begin{equation}  \label{eg1:cost}
	c\left(\mathbf{Q}, \boldsymbol{\mathbf{p}}\right) = \sum_{k=1}^K c_k(Q_k, \mathbf{p}_k)
\end{equation}
where $c_k(Q_k, \mathbf{p}_k)$ is the cost function of the $k$-th Tx-Rx pair which is given by
\begin{equation}	\label{eg: per_pair_cost}
	c_k(Q_k, \mathbf{p}_k) =  \beta_k \frac{Q_k}{\lambda_k} + \gamma_k  \mathbb{E} [ p_k^{\mathbf{H}}|\mathbf{Q} ]
\end{equation}
where $\beta_k > 0$ is a positive weight of the delay cost\footnote{The delay cost is specifically defined in (\ref{average_util}).} and $\gamma_k>0$ is  the Lagrangian weight of the  transmit power cost of the $k$-th transmitter. 

The $K$-pair interference network with small cross-channel path gain  is a  weakly coupled multi-dimensional system according to Definition \ref{WCandDC}. The specific association  with the general model  in Section \ref{multidim_sys} is summarized in Table \ref{assoc_eg1}.
\begin{table} \label{Cores1}
	\centering
\begin{tabular}{|c|c|}
	\hline
		Multi-agent control systems & K-pair interference networks  \\
	\hline
		control agent & transmitter  			\\
		sub-system & Tx-Rx pair 	 \\
		$\Delta$	& $\tau$	\\
		$\mathbf{x}$ & $\mathbf{Q}$  		\\
		$\mathbf{x}_k$& $Q_k$	\\
		$\epsilon_{kj} $ &  $L_{kj} $	\\
		$\mathbf{u}$ & $\boldsymbol{\mathbf{p}}$	\\
		$\mathbf{u}_k$ & $\mathbf{p}_k$	\\
		$\mathbf{u}_{-k}$ & $\boldsymbol{\mathbf{p}}_{-k}$		\\	
		$\boldsymbol{\epsilon}_k$ &$\mathbf{L}_k$\\
		$\mathbf{z}_k$ &  $A_k$ 	\\
		$ \mathbf{f}_k\left(\mathbf{u}_k\left(t\right), \mathbf{u}_{-k}\left(t\right),\boldsymbol{\epsilon}_k\right) $ & $-    \mathbb{E} \left[ \log\left( 1+\frac{p_k^{\mathbf{H}}\left(t\right)L_{kk}|H_{kk}|^2}{1 + \sum_{j \neq k} p_j^{\mathbf{H}}\left(t\right) L_{kj}|H_{kj}|^2}\right) \bigg| \mathbf{Q}\left(t\right) \right]  $\\
		$c(\mathbf{x}, \mathbf{u}) = \sum_{k=1}^K  c_k(\mathbf{x}_k, \mathbf{u}_k)$ &  $c\left(\mathbf{Q}, \mathbf{p}\right) = \sum_{k=1}^K c_k(Q_k, \mathbf{p}_k)$	\\
		$c_k(\mathbf{x}_{k}, \mathbf{u}_{k}) = \| \mathbf{x}_k \| _{\mathbf{v}_k,1} + g_k\left(\mathbf{u}_k \right) $ &  $c_k(Q_k, \mathbf{p}_k) =  \beta_k \frac{Q_k}{\lambda_k} + \gamma_k  \mathbb{E} [ p_k^{\mathbf{H}}|\mathbf{Q} ]$ \\
	\hline
\end{tabular}
	\caption{Association between the general model in Section \ref{multidim_sys} and $K$-pair interference networks}
	\label{assoc_eg1}
\end{table}

Define a control policy $\Omega_k$ for the $k$-th Tx-Rx pair according to Definition \ref{StationDef}.  A policy $\Omega_k$ in this example is  feasible if the power allocation action $\mathbf{p}_k$ satisfies the constraint in (\ref{constraintO}). Denote $\Omega= \{ \Omega_k: \forall k \in \mathcal{K} \}$. For a given control policy $\Omega$, the average cost of the system starting from a given initial global QSI $\mathbf{Q}\left( 0\right)$ is given by
\begin{align}	
	\overline{L}^{\Omega}\left( \mathbf{Q}\left( 0\right) \right) &= \limsup_{T \rightarrow \infty} \frac{1}{T} \sum_{t=0}^{T-1} \mathbb{E}^{\Omega} \left[ c\left(\mathbf{Q}\left(t\right),\boldsymbol{\mathbf{p}}\left(t\right)\right)\right]	 \\
	&=\sum_{k=1}^K \left( \beta_k \overline{D}_k^{\Omega}\left( \mathbf{Q}\left( 0\right) \right)   + \gamma_k \overline{P}_k^{\Omega} \left( \mathbf{Q}\left( 0\right) \right)  \right)
  	\label{average_util}
\end{align} 
where the first term $\overline{D}_k^{\Omega}\left( \mathbf{Q}\left( 0\right) \right)  = \limsup_{T \rightarrow \infty} \frac{1}{T} \sum_{t=0}^{T-1} \mathbb{E}^{\Omega} \left[\beta_k\frac{Q_k\left(t\right)}{\lambda_k} \right]$ is the average delay of the $k$-th Tx-Rx pair according to \emph{Little's Law} \cite{littlelaw}, and  the second term $\overline{P}_k^{\Omega}\left( \mathbf{Q}\left( 0\right) \right) = \limsup_{T \rightarrow \infty} \frac{1}{T} \sum_{t=0}^{T-1} \mathbb{E}^{\Omega}  \left[ \mathbb{E}\left[p_k^{\mathbf{H}}\left(t\right)\right]   \right]$  is the average power consumption of the $k$-th transmitter. Similar to Problem \ref{IHAC_MDP}, the associated stochastic optimization problem for this example is given as follows:

\begin{Problem} [Delay-Optimal Control Problem for Interference Networks] \label{cmdp}
For some positive weight constants $\beta_k$, $\gamma_k$ ($\forall k$), the delay-optimal control problem for the $K$-pair interference networks is formulated as
\begin{align} \
	\min_{\Omega}  \overline{L}^{\Omega}\left( \mathbf{Q}\left( 0\right) \right) 
\end{align}
where $ \overline{L}^{\Omega}\left( \mathbf{Q}\left( 0\right) \right) $ is given in 	(\ref{average_util}).~\hfill\IEEEQED
\end{Problem}

The delay-optimal control problem in Problem \ref{cmdp} is an infinite horizon average cost MDP problem \cite{WirelessApp}, \cite{DP_Bertsekas}. Under the stationary unichain policy, the optimal control policy $\Omega^{\ast}$ can be obtained by solving the following Bellman equation w.r.t.  $(\theta, \{ V \left(\mathbf{Q} \right) \})$ according to Lemma \ref{LemBel}:
\begin{equation} 		\label{eq1 org bel}
 \theta \tau+ V \left(\mathbf{Q} \right) = \min_{\Omega(\mathbf{Q})} \left[ c\left( \mathbf{Q}, \Omega(\mathbf{Q})\right) \tau+  (\mathrm{\Gamma}_{\Omega}V)(\mathbf{Q})    \right], \quad \forall \mathbf{Q} \in \boldsymbol{\mathcal{Q}}
\end{equation}
Based on Corollary \ref{cor1}, the associated approximate optimal solution can be obtained by solving the following approximate Bellman equation:
\begin{align}
	 \tilde{\theta}^\ast= \min_{\Omega(\mathbf{Q})} \left[ c\left( \mathbf{Q}, \Omega(\mathbf{Q})\right) + \sum_{k=1}^K \frac{\partial V(\mathbf{Q})  }{\partial Q_k}\left(\lambda_k-R_k\left(\Omega(\mathbf{Q}) \right)\right)    \right], \quad \forall \mathbf{Q} \in \boldsymbol{\mathcal{Q}}
\end{align}

\section{Low Complexity  Distributed Solutions  under Virtual Continuous Time System}   \label{Fluid_Approach} 
In this section, we first define a virtual continuous time system (VCTS) using the fluid limit approximation approach. We  then establish the relationship between  the fluid value function of the VCTS and the relative value function of the ODTS. To address the distributed solution requirement and challenges due to the coupling in control variables, we model the weakly coupled system in  (\ref{DiscreteDym}) as a perturbation of a decoupled base system and derive per-flow fluid value functions to approximate the fluid value function of the multi-dimensional VCTS.  We also establish the associated approximation error using perturbation theory.  Finally, we show that solving the Bellman equation using  per-flow fluid value function approximation is equivalent to solving a deterministic network utility maximization (NUM) problem and we propose a distributed algorithm for solving the associated NUM problem.

\subsection{Virtual Continuous Time System (VCTS) and Total Cost Minimization Problem} \label{VCTS}
Given the multi-dimensional weakly coupled discrete time system in (\ref{DiscreteDym}) and the associated infinite horizon average cost minimization problem in Problem \ref{IHAC_MDP}, we can \emph{reverse-engineer} a  \emph{virtual continuous time system} and an associated total cost minimization problem. While the VCTS can be viewed as a characterization of the mean behavior of the ODTS in (\ref{DiscreteDym}), we will show that the total cost minimization problem of the VCTS has some interesting relationships with the original average cost minimization problem of the  discrete time  system and the solution to the VCTS problem can be used as an approximate solution to the original problem in Problem \ref{IHAC_MDP}. As a result, we can leverage the well-established theory of calculus in continuous time domain to solve the original stochastic optimization problem. 

The VCTS is characterized by a continuous  system state variable $\overline{\mathbf{x}}\left(t\right) = \left( \overline{\mathbf{x}}_1\left(t\right),\dots,\overline{\mathbf{x}}_K\left(t\right)\right) \in \overline{\boldsymbol{\mathcal{X}}} \triangleq \overline{\mathcal{X}}_1 \times \dots \times \overline{\mathcal{X}}_K$,  where  $\overline{\mathbf{x}}_k\left(t\right) \in \overline{\mathcal{X}}_k$ is the virtual state of the k-th sub-system at time $t$. $\overline{\mathcal{X}}_k$ is the virtual sub-system state space\footnote{The virtual sub-system state space $\overline{\mathcal{X}}_k$ is a continuous state space and has the same boundary as  the original sub-system state space $\mathcal{X}_k$.}, which contains the discrete time sub-system state space $\mathcal{X}_k$, i.e., $\overline{\mathcal{X}}_k \supseteq \mathcal{X}_k$.  $\overline{\boldsymbol{\mathcal{X}}} $ is the global virtual system state space\footnote{Because the virtual sub-system state space contains the original discrete time sub-system state space, i.e., $\overline{\mathcal{X}}_k \supseteq \mathcal{X}_k$, we have $\overline{\boldsymbol{\mathcal{X}}} \supseteq \boldsymbol{\mathcal{X}}$.}. Given an initial global virtual system state\footnote{Note that we focus on the initial states that satisfy $\overline{\mathbf{x}}(0)  \in \boldsymbol{\mathcal{X}}$, where $\boldsymbol{\mathcal{X}}$ is the global discrete time system state space.  } $\overline{\mathbf{x}}(0)\in \boldsymbol{\mathcal{X}}$,  the VCTS state trajectory of the $k$-th sub-system state is  described by the following  differential equation:
\begin{equation}		\label{VCTS_dyn}
	\frac {\mathrm{d}}{\mathrm{d}t} \overline{\mathbf{x}}_k\left(t\right)= \mathbf{f}_k\left(\mathbf{u}_k\left(t\right), \mathbf{u}_{-k}\left(t\right),\boldsymbol{\epsilon}_k\right)  + \overline{\mathbf{z}}_k, \quad \overline{\mathbf{x}}_k(0) \in \mathcal{X}_k 
\end{equation}
where $\overline{\mathbf{z}}_{k}$ is the mean of the disturbance process $\mathbf{z}_{k}$ as defined in Assumption \ref{assum_dis}.

For technicality, we have the following assumptions on $ \mathbf{f}_k\left(\mathbf{u}_k, \mathbf{u}_{-k},\boldsymbol{\epsilon}_k\right)$ for all $k \in \mathcal{K}$ in (\ref{VCTS_dyn}).
\begin{Assumption}	[Existence of  Steady State]	\label{steadystatec0}
	We assume that  the VCTS dynamics $\overline{\mathbf{x}}\left(t\right)$ has a steady state, i.e., there exists a control action  $\mathbf{u}^{\infty}=\big(\mathbf{u}_1^{\infty},\dots, \mathbf{u}_K^{\infty} \big)$ such that $\mathbf{f}_k(\mathbf{u}_k^{\infty}, \mathbf{u}_{-k}^{\infty},\boldsymbol{\epsilon}_k)+\overline{\mathbf{z}}_k=\mathbf{0}$ for all $k \in \mathcal{K}$.  Any control action $\mathbf{u}^{\infty}$ that satisfies the above equations is called the \emph{steady state control action}.~\hfill\IEEEQED
\end{Assumption}

Let  $\Omega^v= \{ \Omega_k^v: \forall k \in \mathcal{K}\}$ be the control policy  for the VCTS, where $\Omega_k^v$ is the control policy for the $k$-th sub-system of the VCTS which a mapping from the global virtual system state space $\overline{\boldsymbol{\mathcal{X}}}$ to the  action space $\mathcal{U}_k$.  Given a control policy $\Omega^v$, we define the total cost of the VCTS starting from a given initial global virtual system state $\overline{\mathbf{x}}\left(0 \right)$ as
\begin{equation}		\label{totalU}
	J^{\Omega^v} \left(\overline{\mathbf{x}}\left(0\right) \right)  = \int_0^{\infty} \widetilde{c}\left(\overline{\mathbf{x}}\left(t\right), \Omega^v\left(\overline{\mathbf{x}}\left(t\right)\right)\right) \  \mathrm{d}t, \quad \overline{\mathbf{x}}\left(0\right)  \in \boldsymbol{\mathcal{X}}
\end{equation}
where $\widetilde{c}\left(\overline{\mathbf{x}}, \mathbf{u}\right)=c\left(\overline{\mathbf{x}}, \mathbf{u}\right)-c^{\infty}$ is a \emph{modified} cost function for VCTS.  $c^{\infty} = \sum_{k=1}^K g_k\left(\mathbf{u}_k^{\infty}\right)$ where $\left\{\mathbf{u}_k^{\infty}: \forall k\right\}$ is a steady state control action, i.e., $\mathbf{f}_k(\mathbf{u}_k^{\infty}, \mathbf{u}_{-k}^{\infty},\boldsymbol{\epsilon}_k) + \overline{\mathbf{z}}_k= \mathbf{0}$ for all $k \in \mathcal{K}$.   Note that $c^{\infty}$ is chosen to guarantee that $J^{\Omega^v} \left(\overline{\mathbf{x}}\left(0\right) \right)$ is finite for some policy $\Omega^v$.  

We consider an infinite horizon total cost problem associated with the VCTS as below:
\begin{Problem}		[Infinite Horizon Total Cost Problem for  VCTS]  \label{fluid problem1}
For any initial global virtual system state $\overline{\mathbf{x}}(0)  \in \boldsymbol{\mathcal{X}}$, the infinite horizon total cost problem for the VCTS is formulated as
	\begin{align}
		\min_{\Omega^v} J^{\Omega^v} \left(\overline{\mathbf{x}}\left(0\right)  \right)
	\end{align}
	where $J^{\Omega^v} \left(\overline{\mathbf{x}}\left(0\right)  \right)$ is given in (\ref{totalU}).~\hfill\IEEEQED
\end{Problem}

The above total cost problem has been well-studied in the continuous time optimal control in \cite{DP_Bertsekas} and the solution can be obtained by solving the \emph{Hamilton-Jacobi-Bellman} (HJB) equation as summarized below. 
\begin{Lemma}	[Sufficient Conditions for Optimality  under VCTS]	\label{orghjblem}
If there exists a function $J\left( \mathbf{x}\right)$ of class\footnote{Class $\mathcal{C}^1$ function are those functions whose first order derivatives are continuous.} $\mathcal{C}^1$ that satisfies the following HJB equation:
	\begin{equation}	\label{cenHJB}
		\min_{\mathbf{u}} \left[\widetilde{c}\left(\mathbf{x}, \mathbf{u}\right) + \nabla_{\mathbf{x}} J\left(\mathbf{x}\right) \left[  \mathbf{f}\left( \mathbf{u}, \boldsymbol{\epsilon}\right)+\overline{\mathbf{z}}  \right]^T \right] =0, \quad \mathbf{x} \in \boldsymbol{\mathcal{X}}
	\end{equation}
	with  boundary condition $J(\mathbf{0})=0$, where\footnote{$\nabla_{\mathbf{x}_k} J\left(\mathbf{x}\right)$  is a row  vector with each element being the first order partial derivative of $J\left(\mathbf{x}\right)$ w.r.t. each component in vector $\mathbf{x}_k$.} $\nabla_{\mathbf{x}} J\left(\mathbf{x}\right) \triangleq \left( \nabla_{\mathbf{x}_1} J\left(\mathbf{x}\right), \dots, \nabla_{\mathbf{x}_K} J\left(\mathbf{x}\right) \right)$. For any  initial condition $\overline{\mathbf{x}}\left(0 \right)=\mathbf{x} \in \boldsymbol{\mathcal{X}}$, suppose that a given control $\Omega^{v \ast}$ and the corresponding state trajectory $\overline{\mathbf{x}}^{\ast}\left( t\right)$ satisfies 
	\begin{equation}	\label{suffcond}
 \left\{
	\begin{aligned}	
		& \lim_{t \rightarrow \infty} J\left( \overline{\mathbf{x}}^{\ast}\left(t\right)\right) =0	 \\
		&\Omega^{v \ast} \left(\overline{\mathbf{x}}^{\ast}\left( t\right) \right) = \mathbf{u}^{v \ast}\left( t\right) \in \arg\min_{\mathbf{u}} \left[\widetilde{c}\left(\overline{\mathbf{x}}^{\ast}\left( t\right), \mathbf{u}\right) + \nabla_{\mathbf{x}} J\left(\overline{\mathbf{x}}^{\ast}\left( t\right)\right) \left[  \mathbf{f}\left( \mathbf{u}, \boldsymbol{\epsilon}\right)+\overline{\mathbf{z}}  \right]^T \right], \quad  t \geq 0
	   \end{aligned}
   \right.
  \end{equation}
	Then $J\left(\mathbf{x}\right) =\min_{\Omega^v} J^{\Omega^v} \left( \mathbf{x} \right)$ is the optimal total cost and $\Omega^{v \ast}$ is the optimal control policy for Problem \ref{fluid problem1}. $ J\left(\mathbf{x}\right) $ is called the \emph{fluid value function}.~\hfill\IEEEQED
\end{Lemma}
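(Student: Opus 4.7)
The plan is to prove this by the standard verification argument from continuous-time optimal control: differentiate $J$ along any admissible trajectory, invoke the HJB equation to turn the result into an integrable inequality, and then pass to the limit using the transversality condition in \eqref{suffcond}. Concretely, fix an admissible policy $\Omega^v$ with trajectory $\overline{\mathbf{x}}(t)$ starting from $\overline{\mathbf{x}}(0) = \mathbf{x} \in \boldsymbol{\mathcal{X}}$. Since $J \in \mathcal{C}^1$ and $\overline{\mathbf{x}}(t)$ is absolutely continuous with derivative given by \eqref{VCTS_dyn}, the chain rule gives
\begin{equation*}
\frac{\mathrm{d}}{\mathrm{d}t} J\bigl(\overline{\mathbf{x}}(t)\bigr) = \nabla_{\mathbf{x}} J\bigl(\overline{\mathbf{x}}(t)\bigr) \bigl[\mathbf{f}\bigl(\Omega^v(\overline{\mathbf{x}}(t)),\boldsymbol{\epsilon}\bigr) + \overline{\mathbf{z}}\bigr]^T.
\end{equation*}

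The next step is to couple this derivative with the HJB equation \eqref{cenHJB}. Because the minimum on the left-hand side of \eqref{cenHJB} is zero, for every admissible $\Omega^v$ one has the pointwise inequality
\begin{equation*}
\widetilde{c}\bigl(\overline{\mathbf{x}}(t),\Omega^v(\overline{\mathbf{x}}(t))\bigr) + \frac{\mathrm{d}}{\mathrm{d}t} J\bigl(\overline{\mathbf{x}}(t)\bigr) \;\geq\; 0,
\end{equation*}
with equality holding along the particular trajectory $\overline{\mathbf{x}}^{\ast}(t)$ generated by $\Omega^{v\ast}$, by the second condition in \eqref{suffcond}. Integrating from $0$ to $T$ and rearranging yields
\begin{equation*}
\int_0^T \widetilde{c}\bigl(\overline{\mathbf{x}}(t),\Omega^v(\overline{\mathbf{x}}(t))\bigr)\,\mathrm{d}t \;\geq\; J(\mathbf{x}) - J\bigl(\overline{\mathbf{x}}(T)\bigr),
\end{equation*}
with equality when $\Omega^v = \Omega^{v\ast}$. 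Sending $T \to \infty$ and using the transversality condition $\lim_{t\to\infty} J(\overline{\mathbf{x}}^{\ast}(t)) = 0$ immediately gives $J^{\Omega^{v\ast}}(\mathbf{x}) = J(\mathbf{x})$, so $J(\mathbf{x})$ is actually attained by the candidate policy $\Omega^{v\ast}$.

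For the lower-bound direction, one passes to the $\liminf$ as $T \to \infty$ in the inequality above for an arbitrary admissible $\Omega^v$. Combined with the fact that the right-hand side of the HJB equation is zero (so $J$ can be taken nonnegative along any trajectory that keeps the cost finite, by a standard argument using the boundary condition $J(\mathbf{0})=0$ and the definition of $\widetilde{c}$), this yields $J^{\Omega^v}(\mathbf{x}) \geq J(\mathbf{x})$, completing the optimality claim. The main obstacle — and the step that is not a routine calculation — is justifying that $\liminf_{T\to\infty} J(\overline{\mathbf{x}}(T)) \geq 0$ for every admissible policy with finite cost; this is typically handled by noting that if $J^{\Omega^v}(\mathbf{x})$ is finite, then the integrand $\widetilde{c}$ is integrable, which together with the structural form of $\widetilde{c}$ and the existence of the steady state from Assumption \ref{steadystatec0} forces the trajectory to approach a set on which $J$ is nonnegative in the limit. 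The remaining ingredients are standard and we may invoke the verification theorem of \cite{DP_Bertsekas} for the technical details.
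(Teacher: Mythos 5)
Your proposal is the standard verification-theorem argument (differentiate $J$ along admissible trajectories, use the HJB minimum to get a pointwise inequality with equality under $\Omega^{v\ast}$, integrate, and pass to the limit via the transversality condition), which is exactly the route the paper takes: its "proof" is simply a citation to \cite{DP_Bertsekas}, where this verification argument is carried out. Your explicit flagging of the $\liminf_{T\to\infty} J(\overline{\mathbf{x}}(T))\geq 0$ issue for arbitrary admissible policies is a fair observation about the technical content hidden in that citation, but it does not constitute a departure from the paper's approach.
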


\begin{proof}
	Please refer to \cite{DP_Bertsekas} for details.
\end{proof}

While the VCTS and the total cost minimization problem are not equivalent to the original weakly coupled discrete time system and the average cost minimization problem, it turns out that the relative value function $V\left(\mathbf{x} \right)$  in Problem \ref{IHAC_MDP}  is closely related to the fluid value function $J\left(\mathbf{x}\right)$ in Problem \ref{fluid problem1}. The following theorem establishes the relationship.  
\begin{Corollary}	[Relationship between VCTS and ODTS]	\label{orderoptlemma}
	If there is $J\left(\mathbf{x} \right)=\mathcal{O}\left(x^n \right)$ for some positive $n$ that satisfies the conditions in (\ref{cenHJB}) and (\ref{suffcond}), then ($c^{\infty}, \{ J\left(\mathbf{x} \right) \}$) is $\mathcal{O}\left(\Delta \right)$ optimal to the Bellman equation of the ODTS in (\ref{OrgBel}).
\end{Corollary}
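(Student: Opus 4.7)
The plan is to reduce this corollary to a direct application of Corollary \ref{cor1} by observing that the HJB equation (\ref{cenHJB}) is structurally identical to the approximate Bellman equation (\ref{simbelman}), once one accounts for the constant offset $c^{\infty}$ built into $\widetilde{c}$.

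First I would rewrite the HJB equation. Substituting $\widetilde{c}(\mathbf{x},\mathbf{u}) = c(\mathbf{x},\mathbf{u}) - c^{\infty}$ into (\ref{cenHJB}) and moving $c^{\infty}$ to the left-hand side gives
\begin{equation*}
c^{\infty} = \min_{\mathbf{u}} \Bigl[c(\mathbf{x},\mathbf{u}) + \nabla_{\mathbf{x}} J(\mathbf{x})\bigl[\mathbf{f}(\mathbf{u},\boldsymbol{\epsilon}) + \overline{\mathbf{z}}\bigr]^T\Bigr], \quad \forall \mathbf{x} \in \boldsymbol{\mathcal{X}}.
\end{equation*}
This is precisely the form of the approximate Bellman equation (\ref{simbelman}) after identifying $\tilde{\theta}^{\ast} \leftarrow c^{\infty}$ and $\widetilde{V}(\mathbf{x}) \leftarrow J(\mathbf{x})$. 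Hence $(c^{\infty}, J(\mathbf{x}))$ solves (\ref{simbelman}).

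Second, I would verify that the transversality condition (\ref{transodts}) required by Corollary \ref{cor1} holds with $\widetilde{V} = J$. This is where the hypothesis $J(\mathbf{x}) = \mathcal{O}(\|\mathbf{x}\|^n)$ is used: together with the $n$-th order stability of admissible policies (part of the Admissible Control Policy Assumption, which ensures $\lim_{T \to \infty} \mathbb{E}^{\Omega}[\|\mathbf{x}(T)\|^n] < \infty$), we obtain a uniform bound $\sup_T \mathbb{E}^{\Omega}\bigl[|J(\mathbf{x}(T))|\bigr] < \infty$. Dividing by $T$ and passing to the limit yields
\begin{equation*}
\lim_{T \to \infty} \frac{1}{T}\Bigl[J(\mathbf{x}(0)) - \mathbb{E}\bigl[J(\mathbf{x}(T)) \bigm| \mathbf{x}(0), \{\Omega(\mathbf{x}(t))\}\bigr]\Bigr] = 0,
\end{equation*}
which is (\ref{transodts}) with $V$ replaced by $J$.

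Having established both the HJB-to-approximate-Bellman correspondence and the transversality condition, I would invoke Corollary \ref{cor1} directly, which yields $\theta^{\ast} = c^{\infty} + \mathcal{O}(\Delta)$ and $V(\mathbf{x}) = J(\mathbf{x}) + \mathcal{O}(\Delta)$; this is exactly the claim that $(c^{\infty}, \{J(\mathbf{x})\})$ is $\mathcal{O}(\Delta)$-optimal for the Bellman equation (\ref{OrgBel}). The main obstacle I anticipate is the transversality step: we must be sure that the polynomial growth rate $n$ of $J$ matches the order of stability guaranteed by admissibility, so the argument in the second step goes through cleanly; any looser growth bound on $J$ would force us to impose stronger moment conditions on $\mathbf{x}(T)$ than those provided by the admissibility assumption.
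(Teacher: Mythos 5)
Your proposal matches the paper's own argument: the paper likewise observes that $(c^{\infty},\{J(\mathbf{x})\})$ satisfying the HJB equation (\ref{cenHJB}) also satisfies the approximate Bellman equation (\ref{simbelman}), uses the polynomial growth $J(\mathbf{x})=\mathcal{O}(\|\mathbf{x}\|^n)$ together with admissibility to verify the transversality condition, and then invokes Corollary \ref{cor1} to conclude $\theta^\ast=c^{\infty}+\mathcal{O}(\Delta)$ and $V(\mathbf{x})=J(\mathbf{x})+\mathcal{O}(\Delta)$. Your explicit remark about matching the growth order of $J$ to the stability order in the admissibility assumption is a fair refinement of a step the paper leaves implicit, but the route is the same.
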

\begin{proof}  
Please refer to Appendix	 B.	
\end{proof}

\begin{Theorem}	[General Relationship  between  VCTS and  ODTS]  	\label{RelashipVJ}
	For nonlinear system cost function in (\ref{syscost}), the difference between the fluid value function $J\left(\mathbf{x}\right)$ for the VCTS in (\ref{VCTS_dyn}) and the relative value function $V\left(\mathbf{x} \right)$ for the ODTS in (\ref{DiscreteDym}) can be expressed as
	\begin{equation}		\label{diff_VJ}
		 |V( \mathbf{x} ) - J( \mathbf{x})|   = \mathcal{O}\left( \| \mathbf{x}\|\sqrt{\| \mathbf{x}\| \log \log \| \mathbf{x}\|} \right), \quad \text{as }  \|\mathbf{x}\|\rightarrow \infty
	\end{equation}		
\end{Theorem}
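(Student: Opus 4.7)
The plan is to compare $V(\mathbf{x})$ and $J(\mathbf{x})$ by representing them as trajectory integrals starting from the same state $\mathbf{x}$ and then coupling the two systems by feeding a common control policy into both. The $\sqrt{\log\log}$ factor in the bound is the signature of the law of iterated logarithm (LIL), which will be applied to the accumulated disturbance noise $\sum_{s=0}^{t-1}(\mathbf{z}(s)-\overline{\mathbf{z}})\Delta$. Since Assumption~\ref{assum_dis} makes the $\mathbf{z}_k$'s i.i.d.\ with finite mean (and, implicitly, enough moments), the LIL yields an almost-sure bound $\bigl\|\sum_{s=0}^{t-1}(\mathbf{z}(s)-\overline{\mathbf{z}})\Delta\bigr\|=\mathcal{O}\!\bigl(\sqrt{t\log\log t}\bigr)$ as $t\to\infty$.

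First, I would write $V(\mathbf{x})$ in its hitting-time form: choosing a reference state (say $\mathbf{0}$) and subtracting $V(\mathbf{0})$, the Poisson equation associated with (\ref{OrgBel}) gives $V(\mathbf{x}) - V(\mathbf{0}) = \mathbb{E}^{\Omega^\ast}\bigl[\sum_{t=0}^{\mathcal T-1}(c(\mathbf{x}(t),\Omega^\ast(\mathbf{x}(t)))-\theta^\ast)\Delta\bigr]$, where $\mathcal T$ is the first return time to a neighborhood of $\mathbf{0}$. Analogously, Lemma~\ref{orghjblem} gives the trajectory representation $J(\mathbf{x}) = \int_0^{\infty}\widetilde c(\overline{\mathbf{x}}^\ast(t),\Omega^{v\ast}(\overline{\mathbf{x}}^\ast(t)))\,\mathrm{d}t$ along the optimal VCTS trajectory. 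To compare these, I would couple the systems by applying the VCTS optimal policy $\Omega^{v\ast}$ (suitably interpolated to discrete epochs) to drive the ODTS starting at the same $\mathbf{x}$; the resulting sample-path $\mathbf{x}(t)$ differs from $\overline{\mathbf{x}}^\ast(t)$ only through the zero-mean disturbance fluctuation above, so $\|\mathbf{x}(t)-\overline{\mathbf{x}}^\ast(t)\|=\mathcal{O}\!\bigl(\sqrt{t\log\log t}\bigr)$ almost surely. The symmetric coupling (apply $\Omega^\ast$ to the VCTS) yields the reverse inequality.

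Next I would exploit the linear-in-state structure of the cost in (\ref{costfunc}): because $c_k(\mathbf{x}_k,\mathbf{u}_k)=\alpha_k\|\mathbf{x}_k\|_{\mathbf{v}_k,1}+g_k(\mathbf{u}_k)$ is Lipschitz in $\mathbf{x}_k$ with $g_k$ bounded over the compact $\mathcal U_k$, the per-epoch cost gap between the coupled trajectories is controlled by $\|\mathbf{x}(t)-\overline{\mathbf{x}}^\ast(t)\|$, i.e.\ by $\mathcal{O}\!\bigl(\sqrt{t\log\log t}\bigr)$. Combined with the fact that under Assumption~\ref{steadystatec0} and the boundedness of $\mathbf{f}$, the deterministic trajectory reaches the steady-state region in time $T=\mathcal{O}(\|\mathbf{x}\|)$, the cumulative cost gap over $[0,T]$ is
\begin{equation*}
\int_0^{T}\mathcal{O}\!\bigl(\sqrt{t\log\log t}\bigr)\,\mathrm{d}t \;=\; \mathcal{O}\!\bigl(T^{3/2}\sqrt{\log\log T}\bigr) \;=\; \mathcal{O}\!\Bigl(\|\mathbf{x}\|\sqrt{\|\mathbf{x}\|\log\log\|\mathbf{x}\|}\Bigr),
\end{equation*}
which is exactly the claimed rate. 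The tail $[T,\mathcal T]$ contributes only lower-order terms since both trajectories are in a bounded neighborhood of steady state and the ODTS returns with finite expected time.

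The main obstacle is turning the almost-sure LIL bound into a bound on expectations while simultaneously controlling the random hitting time $\mathcal T$. Concretely, I expect to need a uniform-integrability argument (or a truncation combined with Burkholder--Davis--Gundy) to promote the a.s.\ LIL to an $L^1$ bound on the suprema $\sup_{s\le t}\|\mathbf{x}(s)-\overline{\mathbf{x}}^\ast(s)\|$, and then a Foster--Lyapunov-type tail bound on $\mathcal T$ to show the residual expected cost between $T$ and $\mathcal T$ is negligible relative to $\|\mathbf{x}\|^{3/2}$. A secondary technicality is that $\Omega^{v\ast}$ is a continuous-time feedback while the ODTS is discrete, so the coupling requires a piecewise-constant interpolation whose discretization error must be absorbed into the $\mathcal{O}(\Delta)$ gap already identified in Corollary~\ref{cor1}.
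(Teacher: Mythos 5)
Your overall skeleton is the same as the paper's: both arguments compare the controlled discrete trajectory with an optimal fluid trajectory over a horizon proportional to $\|\mathbf{x}\|$, use a law-of-iterated-logarithm (strong approximation) rate for the accumulated disturbance, and convert the trajectory gap into a cost gap via the Lipschitz-in-state cost, arriving at $\mathcal{O}\left(\|\mathbf{x}\|\sqrt{\|\mathbf{x}\|\log\log\|\mathbf{x}\|}\right)$. The paper, however, works with the fluid scaling $\overline{\mathbf{x}}^n(t)=\frac{1}{n}\mathbf{x}(nt;n\mathbf{x}_0)$ and a finite-horizon representation of $V$ obtained by iterating the Bellman equation (\ref{OrgBel}) exactly $\lfloor nT\rfloor$ steps and keeping the terminal term $\mathbb{E}\left[V\left(\mathbf{x}(\lfloor nT\rfloor)\right)\right]$, rather than your Poisson-equation/first-return-time form; this sidesteps the Foster--Lyapunov control of the random return time $\mathcal{T}$ that your plan needs and replaces it by the growth bound $V(\mathbf{x})=\mathcal{O}(\|\mathbf{x}\|^2)$.

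The genuine gap is in the direction $V(\mathbf{x})\leq J(\mathbf{x})+\mathcal{O}(\cdot)$. Your coupling, ``apply $\Omega^{v\ast}$, suitably interpolated, to the ODTS,'' does not deliver it: if the fluid control is used as a feedback, the trajectory error is no longer the bare accumulated noise but compounds through the dynamics (a Gronwall factor over a horizon that grows like $\|\mathbf{x}\|$), while if it is used open loop the resulting control is generally not admissible --- it can violate the state-dependent feasibility constraints (in the application, the constraint (\ref{constraintO}) and nonnegativity of the queues) and does not stabilize the chain once the fluid trajectory has reached the origin, so the residual/terminal term cannot be dismissed as lower order. Moreover, upper-bounding the relative value function requires evaluating the Bellman inequality under some admissible policy and controlling the terminal value; the paper does this by invoking the randomized tracking policy of \cite{CompApp} (Lemma 10.6.6 and Proposition 10.5.3), which follows an $\epsilon$-perturbed piecewise-linear fluid trajectory while remaining admissible, together with $V(\mathbf{x})=\mathcal{O}(\|\mathbf{x}\|^2)$. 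These ingredients are the substance of this half of the proof and are missing from, not merely deferred in, your plan. Your other direction is fine in spirit and matches the paper: one does not literally run $\Omega^{\ast}$ on the noiseless VCTS, but takes the fluid limit (with LIL rate) of the optimally controlled chain and uses that $J$ is the infimum of the fluid cost over all admissible fluid trajectories.
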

where $\| \mathbf{x}\| $ denotes the Euclidean norm of  the system state $\mathbf{x}$.~\hfill\IEEEQED

\begin{proof}  
Please refer to Appendix	 A.	
\end{proof}

\begin{Remark} [Interpretation of Theorem \ref{RelashipVJ}]
Theorem \ref{RelashipVJ} suggests that as the norm of the system state vector  increases, the difference between $V\left(\mathbf{x} \right)$ and $J\left(\mathbf{x}\right)$ is $ \mathcal{O}(\| \mathbf{x}\|\sqrt{\| \mathbf{x}\| \log \log \| \mathbf{x}\|})$\footnote{Throughout the paper, $f\left(x\right)=o\left(g\left(x \right) \right)$ as $x \rightarrow \infty$ ($x \rightarrow 0$) means $\lim_{x \rightarrow \infty} \frac{f\left(x\right)}{g\left(x\right)}=0$ $(\lim_{x \rightarrow 0} \frac{f\left(x\right)}{g\left(x\right)}=0)$.}, i.e., $|V\left(\mathbf{x} \right) - J\left(\mathbf{x}\right)| =\mathcal{O}(\| \mathbf{x}\|\sqrt{\| \mathbf{x}\| \log \log \| \mathbf{x}\|})$, as $\|\mathbf{x}\|\rightarrow \infty$. Therefore, for large system states, the  fluid value function $J\left(\mathbf{x}\right)$ is a useful approximator for the relative value function $V\left(\mathbf{x} \right)$.~\hfill\IEEEQED
\end{Remark}		

As a result of Theorem \ref{RelashipVJ}, we can use $J\left(\mathbf{x}\right)$  to approximate $V\left(\mathbf{x} \right)$ and  the optimal control policy $\Omega^{\ast}$  in (\ref{OrgBel}) can be approximated by solving the following problem:
\begin{align}	\label{JapproxPolicy}
	\Omega^{\ast}\left(\mathbf{x}\right) \approx \arg\min_{ \Omega\left( \mathbf{x} \right)} \left[ c\left(\mathbf{x}, \Omega\left(\mathbf{x}\right)\right) + \left(\mathrm{\Gamma}_{\Omega}J\right)\left(\mathbf{x}\right)  \right]
\end{align}

Note that the result on the relationship between   the VCTS and the ODTS in Theorem \ref{RelashipVJ} holds for any given  epoch duration $\tau$. In the following lemma, we establish an asymptotic equivalence  between  VCTS and  ODTS for sufficiently small epoch duration $\tau$.
\begin{Lemma} 	[Asymptotic Equivalence  between  VCTS and  ODTS]	\label{reasmalltau}
	For sufficiently small epoch duration $\tau$,   the solution of Problem \ref{fluid problem1}  in the VCTS asymptotically solves  Problem \ref{IHAC_MDP} in the ODTS.  In other words,  ($c^{\infty}, \{ J\left(\mathbf{x} \right) \}$) obtained from the HJB equation in (\ref{cenHJB}) solves the simplified Bellman equation in (\ref{simbelman}).~\hfill\IEEEQED
\end{Lemma}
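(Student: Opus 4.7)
The statement has two parts, and the key observation is that the two parts are related by a direct algebraic correspondence plus an application of Corollary~\ref{cor1}. My plan is to first prove the ``in other words'' clause by showing that the HJB equation in (\ref{cenHJB}) can be rewritten in exactly the form of the approximate Bellman equation (\ref{simbelman}), and then invoke Corollary~\ref{cor1} to obtain the asymptotic equivalence statement for small $\tau$.

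The first step is purely algebraic. Starting from the HJB equation in (\ref{cenHJB}) satisfied by $J(\mathbf{x})$, namely
\begin{equation*}
\min_{\mathbf{u}} \left[\widetilde{c}(\mathbf{x},\mathbf{u}) + \nabla_{\mathbf{x}} J(\mathbf{x}) \left[\mathbf{f}(\mathbf{u},\boldsymbol{\epsilon}) + \overline{\mathbf{z}}\right]^T\right] = 0,
\end{equation*}
I would substitute the definition $\widetilde{c}(\mathbf{x},\mathbf{u}) = c(\mathbf{x},\mathbf{u}) - c^{\infty}$ and move $c^{\infty}$ (which does not depend on $\mathbf{u}$) outside the minimization. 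This yields
\begin{equation*}
c^{\infty} = \min_{\mathbf{u}} \left[c(\mathbf{x},\mathbf{u}) + \nabla_{\mathbf{x}} J(\mathbf{x}) \left[\mathbf{f}(\mathbf{u},\boldsymbol{\epsilon}) + \overline{\mathbf{z}}\right]^T\right], \quad \forall \mathbf{x} \in \boldsymbol{\mathcal{X}},
\end{equation*}
which, upon identifying $\tilde{\theta}^{\ast} = c^{\infty}$ and $\widetilde{V}(\mathbf{x}) = J(\mathbf{x})$, coincides \emph{exactly} with the approximate Bellman equation (\ref{simbelman}). Hence $(c^{\infty}, \{J(\mathbf{x})\})$ is an exact solution of (\ref{simbelman}).

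The second step invokes Corollary~\ref{cor1}. Since $(c^{\infty}, J)$ satisfies (\ref{simbelman}), and assuming the transversality condition (\ref{transodts}) holds for $J$ (which follows from the growth order of $J$ established earlier, together with the admissibility assumption on $\Omega$), Corollary~\ref{cor1} immediately gives $\theta^{\ast} = c^{\infty} + \mathcal{O}(\tau)$ and $V(\mathbf{x}) = J(\mathbf{x}) + \mathcal{O}(\tau)$ for the average-cost Bellman equation (\ref{OrgBel}) of the ODTS. Thus as $\tau \to 0$, the solution $(c^{\infty}, J)$ of the VCTS's HJB equation asymptotically solves the ODTS's Bellman equation, which is precisely the asymptotic-equivalence claim of the lemma.

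The main conceptual content here is simply the recognition that the HJB equation of the VCTS is the first-order Taylor expansion (in $\tau$) of the ODTS's Bellman operator, so no real ``obstacle'' remains once Corollary~\ref{cor1} is in hand; the only subtle point is checking the transversality condition. For this, I would argue that the policy $\Omega^{\ast}$ obtained from the RHS minimization in (\ref{simbelman}) using $J$ in place of $V$ is admissible (so the Markov chain has a well-defined stationary distribution), and that the polynomial growth of $J(\mathbf{x})$ together with the $n$-th order stability assumption on $\Omega$ ensures $\tfrac{1}{T}\mathbb{E}[J(\mathbf{x}(T))] \to 0$, giving (\ref{transodts}). With that verification in place, Corollary~\ref{cor1} applies verbatim and the proof is complete.
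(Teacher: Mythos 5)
Your proposal is correct and follows essentially the same route as the paper: the core of the paper's proof is exactly your first step, i.e., rewriting the HJB equation (\ref{cenHJB}) by moving $c^{\infty}$ out of the minimization and identifying it term-by-term with the simplified Bellman equation (\ref{simbelman}) via $\theta = c^{\infty}$ and $V(\mathbf{x}) = J(\mathbf{x})$. Your additional second step (invoking Corollary \ref{cor1} and checking transversality via the polynomial growth of $J$) is not spelled out in the paper's proof of this lemma, but it mirrors what the paper does in Corollary \ref{orderoptlemma}, so it is consistent rather than a different approach.
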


\begin{proof}  
Please refer to Appendix	 B.	
\end{proof}

Hence, to solve the original average cost problem in Problem \ref{IHAC_MDP}, we can solve  the associated total cost problem in Problem \ref{fluid problem1} for the VCTS leveraging the well-established theory of calculus and PDE.  However,  let $N_k$ be the dimension of the sub-system state $\mathbf{x}_k$, then deriving $J\left(\mathbf{x}\right)$ involves solving  a $\sum_{k=1}^K N_k$ dimensional non-linear PDE in (\ref{cenHJB}), which is in general challenging. Furthermore, the VCTS fluid value function $J\left(\mathbf{x}\right)$ will not have decomposable structure in general and hence, global system state information is needed to implement the control policy which solves Problem \ref{IHAC_MDP}. In  Section \ref{PFVCTS}, we introduce the  \emph{per-flow fluid value functions} to further approximate  $J\left(\mathbf{x}\right)$ and use perturbation theory to derive the associated approximation error. In Section \ref{KnTbased}, we derive distributed solutions  based on the per-flow fluid value function approximation.

\subsection{Per-Flow Fluid Value Function of Decoupled Base VCTS and  the Approximation Error}  \label{PFVCTS}
We first define a \emph{decoupled base VCTS} as below:
\begin{Definition} [Decoupled Base VCTS]	\label{base_sys}
	A decoupled base VCTS is the VCTS in (\ref{VCTS_dyn}) with $\epsilon_{kj}=0$ for all $k, j \in \mathcal{K}, k \neq j$.~\hfill\IEEEQED
\end{Definition}

For notation convenience, denote the fluid value function of the VCTS in (\ref{VCTS_dyn}) as $J(\mathbf{x}; \boldsymbol{\epsilon})$. Note that the decoupled base VCTS is a special case of the VCTS in (\ref{VCTS_dyn}) with $\boldsymbol{\epsilon}=\mathbf{0}$ and we have the following lemma summarizing the solution $J(\mathbf{x};\mathbf{0})$ for the decoupled base VCTS. 
\begin{Lemma}	[Sufficient Conditions for Optimality  under Decoupled Base VCTS]	\label{linearAp}
	If there exists a function $J_k\left( \mathbf{x}_k\right)$ of class\footnote{Class $\mathcal{C}^1$ function are those functions whose first order derivatives are continuous.} $\mathcal{C}^1$ that satisfies the following HJB equation:
	\begin{equation}	\label{perflowHJB}
		\min_{\mathbf{u}_k} \left[\widetilde{c}_k(\mathbf{x}_k, \mathbf{u}_k) + \nabla_{\mathbf{x}_k} J_k (\mathbf{x}_k)   \left[ \ \overline{\mathbf{f}}_k\left(\mathbf{u}_k, \mathbf{0}, \mathbf{0}\right) \right]^T \right] =0, \quad \mathbf{x}_k \in \mathcal{X}_k
	\end{equation}
	with boundary condition $J_k(\mathbf{0})=0$. $\overline{\mathbf{f}}_k\left(\mathbf{u}_k, \mathbf{0},\mathbf{0}\right) =  \mathbf{f}_k\left(\mathbf{u}_k, \mathbf{0},\mathbf{0}\right)\tau + \overline{\mathbf{z}}_k$. $\widetilde{c}_k(\mathbf{x}_k, \mathbf{u}_k) =c_k(\mathbf{x}_k, \mathbf{u}_k) -c^{\infty}_k$.   For any  initial condition $\overline{\mathbf{x}}_k\left(0 \right)=\mathbf{x}_k \in \mathcal{X}_k$, suppose that a given control $\Omega_k^{v \ast}$ and the corresponding state trajectory $\overline{\mathbf{x}}_k^{\ast}\left( t\right)$ satisfies 
	\begin{equation}	\label{suffcond11}
 \left\{
	\begin{aligned}	
		& \lim_{t \rightarrow \infty} J_k\left( \overline{\mathbf{x}}_k^{\ast}\left(t\right)\right) =0	 \\
		&\Omega_k^{v \ast} \left(\overline{\mathbf{x}}^{\ast}\left( t\right) \right) = \mathbf{u}_k^{v \ast}\left( t\right) \in \arg\min_{\mathbf{u}_k} \left[\widetilde{c}_k\left(\overline{\mathbf{x}}_k^{\ast}\left( t\right), \mathbf{u}_k\right) + \nabla_{\mathbf{x}_k} J_k\left(\overline{\mathbf{x}}_k^{\ast}\left( t\right)\right) \left[ \ \overline{\mathbf{f}}_k\left(\mathbf{u}_k, \mathbf{0}, \mathbf{0}\right)  \right]^T \right], \quad  t \geq 0
	   \end{aligned}
   \right.
  \end{equation}
	Then  $J_k\left( \mathbf{x}_k\right) $ is the optimal total cost and $\left\{\Omega_k^{v \ast}: \forall k\right\}$ is the optimal control policy for decoupled base VCTS. Therefore, the fluid value function for the decoupled VCTS can be obtained by:
	\begin{align}	\label{linearA}
		J(\mathbf{x}; \mathbf{0}) = \sum_{k=1}^K J_k \left(\mathbf{x}_k \right)
	\end{align}~\hfill\IEEEQED
\end{Lemma}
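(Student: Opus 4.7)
The plan is to invoke Lemma \ref{orghjblem} with $\boldsymbol{\epsilon}=\mathbf{0}$ and verify that the separable ansatz $J(\mathbf{x};\mathbf{0}) = \sum_{k=1}^K J_k(\mathbf{x}_k)$ satisfies the centralized HJB equation together with its boundary and limit conditions. The additive structure of the cost and the decoupling of the dynamics make this verification essentially mechanical.

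First I would exploit the decoupling. By the standing assumption that each coupling term enters $\mathbf{f}_k$ only through the product $\epsilon_{kj}\mathbf{u}_j$, setting $\boldsymbol{\epsilon}=\mathbf{0}$ yields $\mathbf{f}_k(\mathbf{u}_k,\mathbf{u}_{-k},\mathbf{0})=\mathbf{f}_k(\mathbf{u}_k,\mathbf{0},\mathbf{0})$, so the $k$-th drift depends only on $\mathbf{u}_k$. Together with $c(\mathbf{x},\mathbf{u})=\sum_k c_k(\mathbf{x}_k,\mathbf{u}_k)$ and $c^{\infty}=\sum_k g_k(\mathbf{u}_k^{\infty})=\sum_k c_k^{\infty}$, this produces the cost decomposition $\widetilde{c}(\mathbf{x},\mathbf{u})=\sum_k \widetilde{c}_k(\mathbf{x}_k,\mathbf{u}_k)$.

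Next I would substitute the ansatz into the HJB in (\ref{cenHJB}). Since $\nabla_{\mathbf{x}_k}J(\mathbf{x};\mathbf{0}) = \nabla_{\mathbf{x}_k}J_k(\mathbf{x}_k)$ depends only on $\mathbf{x}_k$, the expression inside $\min_{\mathbf{u}}$ becomes $\sum_{k=1}^K \bigl[\widetilde{c}_k(\mathbf{x}_k,\mathbf{u}_k) + \nabla_{\mathbf{x}_k}J_k(\mathbf{x}_k)\,\overline{\mathbf{f}}_k(\mathbf{u}_k,\mathbf{0},\mathbf{0})^T\bigr]$, a sum whose $k$-th summand depends only on $\mathbf{u}_k$. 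Because the joint action set is the Cartesian product $\boldsymbol{\mathcal{U}}=\mathcal{U}_1\times\cdots\times\mathcal{U}_K$, the joint minimum equals the sum of the per-flow minima; by the per-flow HJB (\ref{perflowHJB}) each per-flow minimum equals zero, verifying (\ref{cenHJB}). The same separation shows that a minimizer can be taken to be $\Omega^{v\ast}(\mathbf{x})=(\Omega_1^{v\ast}(\mathbf{x}),\ldots,\Omega_K^{v\ast}(\mathbf{x}))$ constructed from the per-flow minimizers of (\ref{suffcond11}).

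Finally I would check the hypotheses required by Lemma \ref{orghjblem}. The boundary condition $J(\mathbf{0};\mathbf{0})=\sum_k J_k(\mathbf{0})=0$ is immediate. Under the decoupled closed-loop dynamics, the state ODE splits into $K$ independent per-flow ODEs whose solutions coincide with the trajectories $\overline{\mathbf{x}}_k^{\ast}(t)$ appearing in (\ref{suffcond11}); hence $J(\overline{\mathbf{x}}^{\ast}(t);\mathbf{0})=\sum_k J_k(\overline{\mathbf{x}}_k^{\ast}(t))\to 0$ as $t\to\infty$. Lemma \ref{orghjblem} then delivers both the optimality of $\Omega^{v\ast}$ for the decoupled base VCTS and the identity (\ref{linearA}). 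The only step that requires any care is the joint-to-per-flow argmin passage, but the Cartesian product structure of $\boldsymbol{\mathcal{U}}$ and the fact that each summand of the HJB integrand involves a disjoint set of control variables make this step routine rather than a genuine obstacle.
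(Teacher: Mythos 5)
Your proposal is correct and follows essentially the same route as the paper: substitute the separable ansatz $\sum_{k} J_k(\mathbf{x}_k)$ into the HJB equation (\ref{cenHJB}) specialized to $\boldsymbol{\epsilon}=\mathbf{0}$, use $\nabla_{\mathbf{x}_k}J(\mathbf{x};\mathbf{0})=\nabla_{\mathbf{x}_k}J_k(\mathbf{x}_k)$ and the Cartesian-product structure of $\boldsymbol{\mathcal{U}}$ to split the joint minimization into per-flow minimizations, each of which vanishes by (\ref{perflowHJB}). Your additional explicit check of the boundary condition $J(\mathbf{0};\mathbf{0})=0$ and the limit condition along the decoupled closed-loop trajectories is a harmless (indeed slightly more complete) elaboration of what the paper's Appendix~C leaves implicit.
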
			
\begin{proof} 
Please refer to Appendix C.		
\end{proof}

For the general coupled VCTS in (\ref{VCTS_dyn}), we would like to use the linear architecture in (\ref{linearA}) to approximate $J(\mathbf{x}; \boldsymbol{\epsilon})$, i.e.,
\begin{align}	\label{approxJsumJ}
	 J(\mathbf{x}; \boldsymbol{\epsilon}) \approx J(\mathbf{x}; \mathbf{0}) =\sum_{k=1}^K J_k \left(\mathbf{x}_k \right)
\end{align}
There are two motivations for such approximation:
\begin{itemize}
	\item \textbf{Low Complexity Solution:}
		Deriving $J(\mathbf{x}; \boldsymbol{\epsilon})$ requires solving  a $\sum_{k=1}^K N_k$ dimensional PDE  in (\ref{cenHJB}), while deriving $J_k \left(\mathbf{x}_k \right)$ requires solving  a lower ($N_k$) dimensional  PDE in (\ref{perflowHJB}), which is more manageable and will be illustrated in Section \ref{Example1}.
	\item \textbf{Distributed Control Policy:}
		Approximating $J(\mathbf{x}; \boldsymbol{\epsilon})$ using linear sum of per-flow value functions in (\ref{approxJsumJ}) may facilitate distributed control implementations, which will be illustrated in  Section \ref{Example1}.
\end{itemize} 
Using perturbation analysis, we obtain the approximation error of (\ref{approxJsumJ}) as below:
 
\begin{Theorem}	[Perturbation Analysis of Approximation Error]			\label{RelashipJsJ}
The approximation error of (\ref{approxJsumJ})  is given by\footnote{Throughout the paper, $f\left(x\right)=\mathcal{O}\left(g\left(x \right) \right)$ as $x \rightarrow \infty$ ($x \rightarrow 0$) means that for sufficiently large (small) $x$, there exist positive constants $k_1$ and $k_2$, such that $ k_1 \left|g\left(x \right)\right| \leq \left| f\left(x\right) \right| \leq k_2 \left| g\left(x \right) \right|$. }
	\begin{equation}	\label{GapJ}
		 J(\mathbf{x}; \boldsymbol{\epsilon}) - \sum_{k=1}^K J_k \left(\mathbf{x}_k \right)   =  \sum_{k=1}^K\sum_{j \neq k }\epsilon_{kj} \widetilde{J}_{kj}\left( \mathbf{x} \right) + \mathcal{O}\left( \epsilon^2 \right), \quad \text{as } \epsilon \rightarrow 0
	\end{equation}
	where $\widetilde{J}_{kj}\left( \mathbf{x} \right)$ is the solution of  the following first order PDE:
	\begin{align}		
		\sum_{i=1}^K \nabla_{\mathbf{x}_i} \widetilde{J}_{kj}\left( \mathbf{x} \right) \left[ \ \overline{\mathbf{f}}_i\left(\mathbf{u}_i^{v \ast }\left(\mathbf{x}_i \right), \mathbf{0},\mathbf{0}\right) \right]^T +\nabla_{\mathbf{x}_k}J_k \left(\mathbf{x}_k \right) \left[  \mathbf{u}_j^{v \ast } \left(\mathbf{x}_j \right)  \nabla_{\epsilon_{kj} \mathbf{u}_j}  \overline{\mathbf{f}}_k\left(\mathbf{u}_k^{v \ast }\left(\mathbf{x}_k \right), \mathbf{0} ,\mathbf{0}\right)  \right]^T =0		\label{PDE_Jij}
	\end{align}
with boundary condition 
\begin{equation}
	\widetilde{J}_{kj}\left(\mathbf{x} \right)\Big|_{\mathbf{x}_j=\mathbf{0}}=0
\end{equation}~\hfill\IEEEQED
\end{Theorem}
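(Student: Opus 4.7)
The plan is to prove Theorem \ref{RelashipJsJ} by a standard first-order regular perturbation expansion of the VCTS fluid value function in the coupling parameter vector $\boldsymbol{\epsilon}$, combined with an envelope-theorem argument applied to the HJB equation (\ref{cenHJB}). Writing $J(\mathbf{x};\boldsymbol{\epsilon})$ for the fluid value function of the VCTS with coupling $\boldsymbol{\epsilon}$, I would postulate the Taylor expansion
\begin{equation*}
J(\mathbf{x};\boldsymbol{\epsilon}) = J(\mathbf{x};\mathbf{0}) + \sum_{k=1}^{K}\sum_{j\neq k}\epsilon_{kj}\,\widetilde{J}_{kj}(\mathbf{x}) + \mathcal{O}(\epsilon^{2}),\qquad \epsilon\to 0,
\end{equation*}
with $\widetilde{J}_{kj}(\mathbf{x})\triangleq \partial J(\mathbf{x};\boldsymbol{\epsilon})/\partial \epsilon_{kj}\big|_{\boldsymbol{\epsilon}=\mathbf{0}}$. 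Lemma \ref{linearAp} already identifies the zeroth-order term as $J(\mathbf{x};\mathbf{0})=\sum_{k}J_{k}(\mathbf{x}_{k})$, so the task reduces to pinning down the governing PDE and boundary condition for each sensitivity coefficient $\widetilde{J}_{kj}$ and controlling the remainder.

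To extract the first-order equation, substitute this expansion into (\ref{cenHJB}) and differentiate both sides with respect to $\epsilon_{kj}$ at $\boldsymbol{\epsilon}=\mathbf{0}$. Let $\mathbf{u}^{\ast}(\mathbf{x};\boldsymbol{\epsilon})$ denote the interior minimizer in (\ref{cenHJB}). The envelope theorem---i.e.\ the first-order optimality condition in $\mathbf{u}$---annihilates every contribution proportional to $\partial \mathbf{u}^{\ast}/\partial \epsilon_{kj}$, leaving
\begin{equation*}
0 = \nabla_{\mathbf{x}}\widetilde{J}_{kj}(\mathbf{x})\bigl[\mathbf{f}(\mathbf{u}^{\ast}(\mathbf{x};\mathbf{0}),\mathbf{0})+\overline{\mathbf{z}}\bigr]^{T} + \nabla_{\mathbf{x}}J(\mathbf{x};\mathbf{0})\Bigl[\tfrac{\partial \mathbf{f}}{\partial \epsilon_{kj}}\Big|_{\boldsymbol{\epsilon}=\mathbf{0}}\Bigr]^{T}.
\end{equation*}
Inserting $\mathbf{u}^{\ast}(\mathbf{x};\mathbf{0})=(\mathbf{u}_{1}^{v\ast}(\mathbf{x}_{1}),\dots,\mathbf{u}_{K}^{v\ast}(\mathbf{x}_{K}))$ and $\nabla_{\mathbf{x}}J(\mathbf{x};\mathbf{0})=(\nabla_{\mathbf{x}_{1}}J_{1},\dots,\nabla_{\mathbf{x}_{K}}J_{K})$ from Lemma \ref{linearAp}, together with the structural property from Section \ref{multidim_sys} that $\mathbf{f}_{k}$ depends on the other agents only through the products $\epsilon_{k,j}\mathbf{u}_{j}$ (so $\partial \mathbf{f}_{i}/\partial \epsilon_{kj}=\mathbf{0}$ for $i\neq k$ and $\partial \mathbf{f}_{k}/\partial \epsilon_{kj}=\mathbf{u}_{j}\cdot\nabla_{\epsilon_{kj}\mathbf{u}_{j}}\mathbf{f}_{k}$), the display collapses block by block exactly into the first-order PDE (\ref{PDE_Jij}).

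For the boundary datum $\widetilde{J}_{kj}|_{\mathbf{x}_{j}=\mathbf{0}}=0$, I would argue on the underlying fluid trajectory. Assumption \ref{steadystatec0} ensures $\mathbf{x}_{j}=\mathbf{0}$ is an invariant steady state of the $j$-th decoupled subsystem under action $\mathbf{u}_{j}^{\infty}$, and the terminal condition $J_{k}(\mathbf{0})=0$ in Lemma \ref{linearAp} identifies this point as the attractor of the per-flow optimal trajectory. Consequently, whenever $\mathbf{x}_{j}(0)=\mathbf{0}$ the $j$-th subsystem remains at its steady state along the optimal evolution, so the coupling $\epsilon_{kj}\mathbf{u}_{j}$ injected into the $k$-th state dynamics is constant in $\epsilon_{kj}$ to first order, and $\partial J/\partial \epsilon_{kj}$ must vanish at such a point. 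Viewing (\ref{PDE_Jij}) as a linear first-order transport PDE whose characteristics are the decoupled VCTS flow lines generated by $\overline{\mathbf{f}}_{i}(\mathbf{u}_{i}^{v\ast},\mathbf{0},\mathbf{0})$, this Dirichlet datum on the inflow manifold $\{\mathbf{x}_{j}=\mathbf{0}\}$ together with the source term $\nabla_{\mathbf{x}_{k}}J_{k}(\mathbf{x}_{k})[\mathbf{u}_{j}^{v\ast}(\mathbf{x}_{j})\nabla_{\epsilon_{kj}\mathbf{u}_{j}}\overline{\mathbf{f}}_{k}]^{T}$ uniquely determines $\widetilde{J}_{kj}$ by integration along characteristics.

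The main obstacle is making the envelope step and the $\mathcal{O}(\epsilon^{2})$ remainder bound fully rigorous. This requires (i) $\mathcal{C}^{1}$-regularity of $J(\mathbf{x};\boldsymbol{\epsilon})$ in $\boldsymbol{\epsilon}$, which follows from regular perturbation theory for HJB equations together with the continuous differentiability of $\mathbf{f}_{k}$ and $g_{k}$ assumed in Section \ref{multidim_sys}; (ii) continuous differentiability of the minimizer $\mathbf{u}^{\ast}(\mathbf{x};\boldsymbol{\epsilon})$ at $\boldsymbol{\epsilon}=\mathbf{0}$, obtained via the implicit function theorem applied to the first-order optimality conditions of (\ref{cenHJB}) under an interior-optimum assumption; and (iii) a uniform-in-$\mathbf{x}$ (on compact state regions) second-order Taylor bound to certify that the neglected higher-order corrections are genuinely $\mathcal{O}(\epsilon^{2})$. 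Granted these regularity ingredients, the two calculations outlined above close the argument.
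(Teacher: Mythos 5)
Your route --- expand $J(\mathbf{x};\boldsymbol{\epsilon})$ to first order in $\boldsymbol{\epsilon}$, substitute into the HJB equation (\ref{cenHJB}), and cancel the terms carrying the perturbation of the minimizer by first-order optimality --- is in outline the same scheme the paper uses in Appendix D, and your block-by-block collapse to the transport PDE (\ref{PDE_Jij}) using $\nabla_{\mathbf{x}}J(\mathbf{x};\mathbf{0})=(\nabla_{\mathbf{x}_1}J_1,\dots,\nabla_{\mathbf{x}_K}J_K)$ and the product structure $\epsilon_{kj}\mathbf{u}_j$ is correct. The first genuine gap is that you invoke the envelope theorem ``under an interior-optimum assumption,'' whereas the theorem is stated for compact action spaces $\mathcal{U}_k$ and, in the motivating example, the optimizer routinely sits on the boundary (the power allocation is clipped at $0$ and at $p_k^{up}$), so the stationarity condition $\nabla_{\mathbf{u}_k}(\cdot)=\mathbf{0}$ you rely on need not hold. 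The paper does not assume interiority: Lemma \ref{applemma} and Corollary \ref{corrAppd} show that the coupled and decoupled optimizers differ by $\mathcal{O}(\epsilon)$ and that the would-be envelope residual $G_k$ is either exactly zero or of order $\epsilon^{1+\delta}$ even when the optimizer is a projection onto the boundary of $\mathcal{U}_k$; only after this case analysis can the coefficient of each $\epsilon_{kj}$ be isolated to yield (\ref{PDE_Jij}). Your argument needs this (or an equivalent KKT-type envelope statement) to cover the stated setting; merely listing it as a ``regularity ingredient'' leaves the main technical content of the paper's proof unaddressed.

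The second gap is the boundary condition. Your claim that when $\mathbf{x}_j(0)=\mathbf{0}$ the injected coupling is ``constant in $\epsilon_{kj}$ to first order,'' so that $\partial J/\partial\epsilon_{kj}$ must vanish there, does not follow: to hold $\mathbf{x}_j$ at zero the $j$-th subsystem must exert the steady-state control $\mathbf{u}_j^{\infty}$, which is generally nonzero (in the example, nonzero power serving the arrivals $\lambda_j$), so the term $\epsilon_{kj}\mathbf{u}_j$ entering the $k$-th dynamics is genuinely linear in $\epsilon_{kj}$ and your reasoning gives no cancellation. The paper derives the boundary condition by a different argument: it identifies $J(\mathbf{x};\boldsymbol{\epsilon})$ restricted to $\mathbf{x}_j=\mathbf{0}$ with the fluid value function $J^{v}$ of a reduced $(K-1)$-dimensional system, expands both sides to first order in $\boldsymbol{\epsilon}$, and matches coefficients of the arbitrary parameters $\epsilon_{ij}$ and $\epsilon_{ji}$, which forces $\widetilde{J}_{kj}(\mathbf{x})\big|_{\mathbf{x}_j=\mathbf{0}}=0$; uniqueness of the solution of (\ref{PDE_Jij}) is then invoked via the transversality condition rather than a characteristics construction from an assumed inflow manifold. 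You should replace the steady-state heuristic with an argument of this type (or otherwise rigorously justify why the first-order sensitivity vanishes on $\{\mathbf{x}_j=\mathbf{0}\}$), since the theorem's conclusion depends on this datum.
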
	
\begin{proof} 
Please refer to Appendix D.		
\end{proof}

\begin{Remark} 	[Interpretation of Theorem \ref{RelashipJsJ}]
Theorem \ref{RelashipJsJ} suggests that the  approximation error between $J(\mathbf{x}; \boldsymbol{\epsilon})$ and $\sum_{i=1}^K J_k \left(\mathbf{x}_k \right)$ is  $\sum_{k=1}^K\sum_{j \neq k}\epsilon_{kj} \widetilde{J}_{kj}\left( \mathbf{x} \right) + \mathcal{O}\left( \epsilon^2 \right)$, which is small for weakly coupled  systems where the coupling parameters  are small. Note that the PDE defining $\widetilde{J}_{kj}\left( \mathbf{x} \right)$ in (\ref{PDE_Jij}) involves  $\{J_k \left(\mathbf{x}_k \right): \forall k\}$ and $\{\mathbf{u}_k^{v \ast}(\mathbf{x}_k): \forall k \}$, which can be obtained  by solving the per-flow HJB equation in (\ref{perflowHJB}).~\hfill\IEEEQED 
\end{Remark}

Finally, based on Theorem \ref{RelashipVJ} and Theorem \ref{RelashipJsJ}, we conclude that for all $\mathbf{x} \in \boldsymbol{\mathcal{X}}$, we have
\begin{align}
	V\left(\mathbf{x} \right)= \sum_{k=1}^K J_k \left(\mathbf{x}_k \right)   +  \sum_{k=1}^K\sum_{j \neq k }\epsilon_{kj} \widetilde{J}_{kj}\left( \mathbf{x} \right) + \mathcal{O}\left( \epsilon^2 \right) + \mathcal{O}\left( \| \mathbf{x}\|\sqrt{\| \mathbf{x}\| \log \log \| \mathbf{x}\|} \right)
\end{align}
where $\widetilde{J}_{kj}\left( \mathbf{x} \right)$ is defined in the PDE in (\ref{PDE_Jij}). As a result, we obtain the following per-flow fluid value function approximation:
\begin{equation}	\label{approxVJ}
		V\left(\mathbf{x} \right) \approx \sum_{k=1}^K J_k \left(\mathbf{x}_k \right), \quad    \mathbf{x} \in \boldsymbol{\mathcal{X}}
\end{equation}
We shall illustrate the quality of the approximation in the application example in Section \ref{Example1}.

\subsection{Distributed Solution Based on Per-Flow Fluid Value Function Approximation}   \label{KnTbased}
We first show that minimizing the R.H.S. of  the Bellman equation in (\ref{OrgBel}) using the per-flow fluid value function approximation in (\ref{approxVJ}) is equivalent to solving a deterministic \emph{network utility maximization} (NUM) problem with coupled objectives. We then  propose a distributed iterative algorithm for solving the associated NUM problem.

We have the following lemma on the equivalent NUM problem.
\begin{Lemma}	[Equivalent NUM Problem]	\label{NUMprob}
	Minimizing the R.H.S. of the Bellman equation in  (\ref{OrgBel}) using the per-flow fluid value function approximation in (\ref{approxVJ}) for all $\mathbf{x}\in \boldsymbol{\mathcal{X}}$  is equivalent to solving the following NUM problem:
	\begin{equation}	\label{utility}
		\max_{ \mathbf{u} \in \boldsymbol{\mathcal{U}}  } \sum_{k=1}^K  U_k\left(\mathbf{u}_k,   \mathbf{u}_{-k},\boldsymbol{\epsilon}_k  \right), \quad 
	\end{equation}
\end{Lemma}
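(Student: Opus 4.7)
The plan is to start from the approximate Bellman equation established in Corollary \ref{cor1} and substitute the per-flow fluid value function approximation from (\ref{approxVJ}), then identify the resulting per-state optimization as a NUM problem with an appropriate definition of the utilities $U_k$.

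First I would recall that by Corollary \ref{cor1}, for sufficiently small $\Delta$ the minimization on the R.H.S.\ of the Bellman equation in (\ref{OrgBel}) is equivalent (up to $\mathcal{O}(\Delta)$) to the minimization on the R.H.S.\ of the approximate Bellman equation (\ref{simbelman}), namely
\begin{equation*}
    \min_{\Omega(\mathbf{x})}\left[ c(\mathbf{x},\Omega(\mathbf{x})) + \nabla_{\mathbf{x}} V(\mathbf{x})\,[\mathbf{f}(\Omega(\mathbf{x}),\boldsymbol{\epsilon}) + \overline{\mathbf{z}}]^T\right], \quad \forall\, \mathbf{x}\in\boldsymbol{\mathcal{X}}.
\end{equation*}
Next I would substitute the per-flow approximation $V(\mathbf{x}) \approx \sum_{k=1}^K J_k(\mathbf{x}_k)$ from (\ref{approxVJ}). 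Because $J_k$ depends only on $\mathbf{x}_k$, the gradient decouples cleanly as $\nabla_{\mathbf{x}} V(\mathbf{x}) = (\nabla_{\mathbf{x}_1} J_1(\mathbf{x}_1),\ldots,\nabla_{\mathbf{x}_K} J_K(\mathbf{x}_K))$. Combining this with the additive structure $c(\mathbf{x},\mathbf{u})=\sum_k c_k(\mathbf{x}_k,\mathbf{u}_k)$ from (\ref{syscost}) and the natural block structure of $\mathbf{f}$, the optimization reduces, for each fixed $\mathbf{x}$, to
\begin{equation*}
    \min_{\mathbf{u}\in\boldsymbol{\mathcal{U}}}\sum_{k=1}^K \Bigl[ c_k(\mathbf{x}_k,\mathbf{u}_k) + \nabla_{\mathbf{x}_k} J_k(\mathbf{x}_k)\,[\mathbf{f}_k(\mathbf{u}_k,\mathbf{u}_{-k},\boldsymbol{\epsilon}_k)+\overline{\mathbf{z}}_k]^T \Bigr].
\end{equation*}

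Then I would convert the minimization to a maximization by defining, for each $k$ and for the given state $\mathbf{x}$,
\begin{equation*}
    U_k(\mathbf{u}_k,\mathbf{u}_{-k},\boldsymbol{\epsilon}_k) \triangleq -c_k(\mathbf{x}_k,\mathbf{u}_k) - \nabla_{\mathbf{x}_k} J_k(\mathbf{x}_k)\,[\mathbf{f}_k(\mathbf{u}_k,\mathbf{u}_{-k},\boldsymbol{\epsilon}_k)+\overline{\mathbf{z}}_k]^T,
\end{equation*}
so that the per-state optimization problem becomes exactly $\max_{\mathbf{u}\in\boldsymbol{\mathcal{U}}} \sum_{k=1}^K U_k(\mathbf{u}_k,\mathbf{u}_{-k},\boldsymbol{\epsilon}_k)$, which is the claimed NUM form in (\ref{utility}). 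The fact that each $U_k$ depends on the global action $\mathbf{u}$ (rather than just $\mathbf{u}_k$) arises solely through the weak coupling in $\mathbf{f}_k(\cdot,\mathbf{u}_{-k},\boldsymbol{\epsilon}_k)$, which reflects precisely the residual coupling that survives the per-flow approximation.

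I do not expect any serious technical obstacle here: once the approximate Bellman equation and the decomposable approximation of $V$ are in hand, the argument is essentially a rewriting exercise using the additive structures of $c$, $\nabla_{\mathbf{x}} V$, and $\mathbf{f}$. The only thing to be careful about is the treatment of $\mathbf{x}$ as a parameter, so that one obtains a parameterized family of NUM problems (one for each realization of the state) rather than a single problem.
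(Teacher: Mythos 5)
There is a genuine gap: you prove equivalence to a different NUM problem than the one stated in the lemma. The lemma defines the utility in (\ref{UtilitySub}) as an \emph{infinite} series, $U_k = -g_k(\mathbf{u}_k) - \sum_{n=1}^\infty \frac{\nabla_{\mathbf{x}_k}^{(n)} J_k(\mathbf{x}_k)}{n!}\big[\mathbb{E}\big[[\mathbf{f}_k(\mathbf{u}_k,\mathbf{u}_{-k},\boldsymbol{\epsilon}_k)\tau + \mathbf{z}_k]^{(n)}\,\big|\,\mathbf{x}_k,\mathbf{u}\big]\big]^T$, i.e.\ it keeps all higher-order derivatives of $J_k$ and the expectation over the \emph{random} disturbance $\mathbf{z}_k$ (so higher moments of $\mathbf{z}_k$ enter, not just $\overline{\mathbf{z}}_k$). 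The paper's proof (Appendix E) obtains this by an exact rewriting of the R.H.S.\ of (\ref{OrgBel}): it Taylor-expands $V(\mathbf{x}(t+1))$ around $\mathbf{x}$ to all orders inside the operator $(\Gamma_\Omega V)(\mathbf{x})$, interchanges expectation and summation, drops the $\mathbf{u}$-independent terms ($\alpha_k\|\mathbf{x}_k\|_{\mathbf{v}_k,1}$ and the zeroth-order $J_k(\mathbf{x}_k)$), substitutes $\nabla_{\mathbf{x}_k}^{(n)}V \approx \nabla_{\mathbf{x}_k}^{(n)}J_k$ from (\ref{approxVJ}), and reads off $-U_k$ term by term. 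This gives equivalence between the per-state minimization in (\ref{OrgBel}) and the NUM problem for \emph{any} epoch duration, with the only approximation being the per-flow substitution itself.

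Your route instead passes through Corollary \ref{cor1} and works with the approximate Bellman equation (\ref{simbelman}), which (i) is only an $\mathcal{O}(\Delta)$-accurate surrogate of (\ref{OrgBel}), and (ii) retains only the first-order gradient term with the mean disturbance $\overline{\mathbf{z}}_k$. The utility you define, $U_k = -c_k(\mathbf{x}_k,\mathbf{u}_k) - \nabla_{\mathbf{x}_k}J_k(\mathbf{x}_k)[\mathbf{f}_k(\mathbf{u}_k,\mathbf{u}_{-k},\boldsymbol{\epsilon}_k)+\overline{\mathbf{z}}_k]^T$, is therefore a truncation of (\ref{UtilitySub}): it discards every term with $n\ge 2$, and these terms do depend on $\mathbf{u}$ (through powers of $\mathbf{f}_k\tau+\mathbf{z}_k$), so they cannot be dropped from the argmax without an additional small-$\Delta$ argument. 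What you establish is an asymptotic (small epoch duration) equivalence to a first-order NUM problem — essentially the form the paper later uses in the application example, where the residual is explicitly flagged as $\mathcal{O}(\tau^2)$ — but not the stated exact equivalence with the utility in (\ref{UtilitySub}). To close the gap, work directly with $(\Gamma_\Omega V)(\mathbf{x})$ in (\ref{OrgBel}), expand $J_k(\mathbf{x}_k(t+1))$ to all orders keeping $\mathbf{z}_k$ random inside the conditional expectation, and only discard terms that are independent of $\mathbf{u}$.
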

where  $U_k\left(\mathbf{u}_k,   \mathbf{u}_{-k},\boldsymbol{\epsilon}_k \right)$ is the utility function of the $k$-th sub-system given by
	\begin{equation}	\label{UtilitySub}
		U_k\left(\mathbf{u}_k,   \mathbf{u}_{-k},\boldsymbol{\epsilon}_k \right) = -g_k\left(\mathbf{u}_k \right)   - \sum_{n=1}^\infty \frac{\nabla_{\mathbf{x}_k}^{\left(n\right)} J_k \left(\mathbf{x}_k\right)}{n!} \left[  \mathbb{E} \left[ \left[ \mathbf{f}_k\left(\mathbf{u}_k, \mathbf{u}_{-k},\boldsymbol{\epsilon}_k\right) \tau + \mathbf{z}_k \right]^{\left(n\right)} \big| \ \mathbf{x}_k, \mathbf{u}  \right] \right]^T
	\end{equation}
where $\nabla_{\mathbf{x}_k}^{\left(n\right)} J_k \left(\mathbf{x}_k\right)$ is a row vector with each element being the $n$-th partial derivative of $J_k \left(\mathbf{x}_k \right)$ w.r.t. each component in vector $\mathbf{x}_k$, and  $\left[ \mathbf{v} \right]^{\left(n\right)}$ is the element-wise power function\footnote{$\left[ \mathbf{v} \right]^{\left(n\right)}$ has the same dimension as $\mathbf{v}$.} with  each element  being the $n$-th power of each component in  $\mathbf{v}$.~\hfill\IEEEQED
\begin{proof}
	Please refer to Appendix E.
\end{proof}

We have the following assumption on the utility function in  (\ref{UtilitySub}).
\begin{Assumption} [Utility Function]	\label{assumfg}
We assume that  the utility function $U_k\left(\mathbf{u}_k,   \mathbf{u}_{-k},\boldsymbol{\epsilon}_k  \right)$ in (\ref{UtilitySub}) is a strictly concave function in $\mathbf{u}_k$ but not necessarily concave in $\mathbf{u}_{-k} $.~\hfill\IEEEQED
\end{Assumption}
\begin{Remark}	[Sufficient Condition for Assumption \ref{assumfg}]
A sufficient condition for Assumption \ref{assumfg} is that $\mathbf{f}_k\left(\mathbf{u}_k, \mathbf{u}_{-k},\boldsymbol{\epsilon}_k\right)$ and $g_k\left(\mathbf{u}_k \right)  $ are both strictly concave functions in $\mathbf{u}_k$.~\hfill\IEEEQED
\end{Remark}
	
Based on  Assumption \ref{assumfg},   the NUM problem in (\ref{utility}) is not necessarily a strictly concave maximization problem  in  control variable $\mathbf{u}$, and  might have several local/global optimal solutions. Solving such problem is difficult in general even for centralized computation. To obtain distributed solutions, the key idea (borrowed from \cite{HuangJW}) is to construct a local optimization problem for each sub-system based on \emph{local observation} and \emph{limited message passing among sub-systems}.   We summarized it in the following theorem.

\begin{Theorem}	[Local Optimization Problem based on Game Theoretical Formulation]	\label{conditionF}
	 For the $k$-th sub-system, there exists a local objective function $F_k(\mathbf{u}_k,   \mathbf{u}_{-k},\boldsymbol{\epsilon}_k , \mathbf{m}_{-k} )$, where $\mathbf{m}_{-k}=\{\mathbf{m}_j: \forall j \neq k \}$  and $\mathbf{m}_j$ is a locally computable message for the $j$-th sub-system. The message $\mathbf{m}_j$ is a function of $\mathbf{u}$, i.e., $\mathbf{m}_j=\mathbf{h}_j(\mathbf{u})$ for some function $\mathbf{h}_j$. We require that $F_k(\mathbf{u}_k,   \mathbf{u}_{-k}, \boldsymbol{\epsilon}_k, \mathbf{m}_{-k} )$ is strictly concave in $\mathbf{u}_k$ and satisfies the following condition:
	 \begin{equation}		\label{cond}
	\nabla_{\mathbf{u}_k} F_k(\mathbf{u}_k,   \mathbf{u}_{-k},\boldsymbol{\epsilon}_k, \mathbf{m}_{-k} ) = \nabla_{\mathbf{u}_k} U_k\left(\mathbf{u}_k,   \mathbf{u}_{-k}, \boldsymbol{\epsilon}_k\right) + \sum_{j \neq k}  \nabla_{\mathbf{u}_k} U_j\left(\mathbf{u}_j,   \mathbf{u}_{-j}, \boldsymbol{\epsilon}_j  \right)
\end{equation}
Define a non-cooperative game \cite{Game} where the players are the agents of each  sub-system and the payoff function for each sub-system is $F_k(\mathbf{u}_k,   \mathbf{u}_{-k}, \boldsymbol{\epsilon}_k, \mathbf{m}_{-k} )$.  Specifically, the game has the following structure:
	\begin{equation}	\label{local_opt}
		(\mathcal{G}): \quad \max_{\mathbf{u}_k \in \mathcal{U}_k} F_k(\mathbf{u}_k,   \mathbf{u}_{-k}, \boldsymbol{\epsilon}_k,\mathbf{m}_{-k} ), 
		\quad \forall k	\in \mathcal{K}
	\end{equation}
\end{Theorem}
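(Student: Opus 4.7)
The plan is to construct $F_k$ explicitly, verify the three required properties (strict concavity in $\mathbf{u}_k$, locality of information usage, and the gradient-matching identity \eqref{cond}), and then note that under the identity \eqref{cond} any Nash equilibrium of $\mathcal{G}$ is a stationary point of the NUM problem in \eqref{utility}. The construction is a local ``self-term plus broadcast-price correction'' of the form
\begin{equation*}
F_k(\mathbf{u}_k,\mathbf{u}_{-k},\boldsymbol{\epsilon}_k,\mathbf{m}_{-k}) \;=\; U_k(\mathbf{u}_k,\mathbf{u}_{-k},\boldsymbol{\epsilon}_k) \;+\; \mathbf{u}_k^T \sum_{j \neq k} \epsilon_{jk}\, \mathbf{m}_{j},
\end{equation*}
so that the correction is linear in $\mathbf{u}_k$ (preserving strict concavity) and depends on the other sub-systems only through the broadcast messages $\{\mathbf{m}_j\}_{j \neq k}$ and the locally known coupling parameters $\epsilon_{jk}$.

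First I would decompose the RHS of \eqref{cond} by separating the self-term $\nabla_{\mathbf{u}_k}U_k$ (already locally computable at agent $k$) from the cross-terms $\sum_{j\neq k}\nabla_{\mathbf{u}_k}U_j$, and compute the cross-terms explicitly from the definition in \eqref{UtilitySub}. Exploiting the structural assumption that $\mathbf{f}_j(\mathbf{u}_j,\mathbf{u}_{-j},\boldsymbol{\epsilon}_j)$ has the form $\mathbf{f}_j(\mathbf{u}_j,\epsilon_{j,1}\mathbf{u}_1,\dots,\epsilon_{j,K}\mathbf{u}_K)$, the chain rule gives $\nabla_{\mathbf{u}_k}U_j = \epsilon_{jk}\, \mathbf{p}_j$ where $\mathbf{p}_j$ is a quantity that depends only on $\mathbf{x}_j$, $J_j$, the partial derivatives of $\mathbf{f}_j$ evaluated at $\mathbf{u}$, and the disturbance statistics of sub-system $j$; critically, $\mathbf{p}_j$ does not depend on the receiver index $k$. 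This is the factorization that makes a single broadcast $\mathbf{m}_j := \mathbf{p}_j$ per agent sufficient.

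Next I would verify the required properties. Strict concavity of $F_k$ in $\mathbf{u}_k$ follows from Assumption \ref{assumfg} together with the fact that the correction $\mathbf{u}_k^T \sum_{j\neq k}\epsilon_{jk}\mathbf{m}_j$ is affine in $\mathbf{u}_k$. For the gradient identity, I would differentiate $F_k$ w.r.t. $\mathbf{u}_k$ and substitute $\mathbf{m}_j = \mathbf{p}_j$ to obtain
\begin{equation*}
\nabla_{\mathbf{u}_k} F_k \;=\; \nabla_{\mathbf{u}_k} U_k + \sum_{j\neq k} \epsilon_{jk}\mathbf{p}_j \;=\; \nabla_{\mathbf{u}_k} U_k + \sum_{j\neq k}\nabla_{\mathbf{u}_k}U_j,
\end{equation*}
which is exactly \eqref{cond}. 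Locality is then immediate: agent $j$ computes $\mathbf{m}_j = \mathbf{h}_j(\mathbf{u})$ using only its own state $\mathbf{x}_j$, value function $J_j$ and $\mathbf{u}$, broadcasts it once, and each agent $k$ forms its local objective by scaling each received $\mathbf{m}_j$ by the locally-known $\epsilon_{jk}$. Finally, because condition \eqref{cond} forces the first-order stationarity of $F_k$ in $\mathbf{u}_k$ to coincide with the $\mathbf{u}_k$-block of the full gradient of $\sum_j U_j$, any pure-strategy Nash equilibrium of the concave-in-own-variable game $\mathcal{G}$ is a KKT point of \eqref{utility} over $\boldsymbol{\mathcal{U}}$.

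The main obstacle is the design of the message $\mathbf{m}_j$: \eqref{cond} demands that a single quantity broadcast by agent $j$ must be simultaneously useful to every other agent $k$ in evaluating $\nabla_{\mathbf{u}_k}U_j$. Without further structure this would require a separate message per receiver. The entire construction hinges on the multiplicative separability $\epsilon_{jk}\mathbf{u}_k$ inside $\mathbf{f}_j$: this is precisely what allows the receiver-specific factor $\epsilon_{jk}$ to be pulled out and computed locally at agent $k$, leaving a receiver-independent ``price'' $\mathbf{p}_j$ for agent $j$ to broadcast. Verifying carefully that no residual $k$-dependence survives in $\mathbf{p}_j$ (after applying the chain rule term-by-term to the series in \eqref{UtilitySub}) is the only nontrivial calculation, and it is a direct consequence of the modeling assumption stated below \eqref{DiscreteDym}.
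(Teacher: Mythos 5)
The paper does not actually prove this theorem in-house: it delegates it to \cite{HuangJW}, and the construction you propose (own utility plus a pricing term that is affine in $\mathbf{u}_k$ and built from broadcast messages) is exactly the shape of the paper's example construction in (\ref{exprF})--(\ref{exprM}). Your final step is also sound: once (\ref{cond}) holds at message-consistent points $\mathbf{m}_j=\mathbf{h}_j(\mathbf{u})$, the per-player first-order (variational) condition at a NE coincides blockwise with the KKT conditions of the NUM problem (\ref{utility}), so a NE is a stationary point. Strict concavity of your $F_k$ in $\mathbf{u}_k$ also follows from Assumption \ref{assumfg} since the correction is affine in $\mathbf{u}_k$.

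The gap is precisely the step you flag as "the only nontrivial calculation" and then dismiss: the claim that $\nabla_{\mathbf{u}_k}U_j=\epsilon_{jk}\,\mathbf{p}_j$ with a price $\mathbf{p}_j$ that is independent of the receiver index $k$. The modeling assumption below (\ref{DiscreteDym}) only says $\mathbf{f}_j$ depends on $\mathbf{u}_k$ through the product $\epsilon_{jk}\mathbf{u}_k$; the chain rule therefore factors out the scalar $\epsilon_{jk}$ but leaves $\nabla_{\epsilon_{jk}\mathbf{u}_k}\mathbf{f}_j$, i.e.\ the partial derivative of $\mathbf{f}_j$ with respect to its $k$-th coupling slot, and after it is pushed through the series in (\ref{UtilitySub}) this object in general still depends on $k$. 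The paper's own example shows this: there $\nabla_{p_k^{\mathbf{H}}}U_j=-m_j^{\mathbf{H}}L_{jk}|H_{jk}|^2$, so the broadcastable, receiver-independent part is $m_j^{\mathbf{H}}$, but the residual factor $|H_{jk}|^2$ does depend on $k$ and is rescued only because it is assumed locally measurable at agent $k$. So your single-broadcast factorization is not "a direct consequence of the modeling assumption." Two repairs are available: (i) let $\mathbf{m}_j$ be the collection of per-receiver cross-gradients $\{\nabla_{\mathbf{u}_k}U_j:k\neq j\}$, which is still a function of $\mathbf{u}$ computable from sub-system $j$'s quantities alone and hence admissible under the theorem's (loose) definition of a locally computable message --- this makes the existence claim essentially immediate and your NE argument goes through unchanged; or (ii) keep the single-message construction but add the explicit assumption that the $k$-dependent residual factor multiplying the price is locally known at agent $k$, which is what the interference-network instantiation implicitly does. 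As written, your proof asserts a stronger structural property than the general model provides.
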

We conclude that a Nash Equilibrium\footnote{$\mathbf{u}^{\ast} = \{  \mathbf{u}_1^{\ast}, \dots,  \mathbf{u}_K^{\ast} \}$ is a NE if and only if $F_k(\mathbf{u}_k^{\ast},   \mathbf{u}_{-k}^{\ast} ,\boldsymbol{\epsilon}_k, \mathbf{m}_{-k}^{\ast} ) \geq F_k(\mathbf{u}_k,   \mathbf{u}_{-k}^{\ast}, \boldsymbol{\epsilon}_k, \mathbf{m}_{-k}^{\ast} )$,  $\forall \mathbf{u}_k \in \mathcal{U}_k$, $\forall k$, where $\mathbf{u}_{-k}^{\ast} = \{\mathbf{u}_j^{\ast}: \forall j \neq k\}$  and $\mathbf{m}_{-k}^{\ast}=\{m_j^{\ast}: \forall j \neq k, m_j^{\ast} = h_j(\mathbf{u}^{\ast})  \}.$} (NE) of the game $\mathcal{G}$ is a stationary point\footnote{A stationary point satisfies the KKT conditions of  the NUM problem  in (\ref{utility}).}  of the NUM problem in (\ref{utility}).~\hfill\IEEEQED
\begin{proof}
	Please refer to \cite{HuangJW} for details.
\end{proof}

Based on the game structure in Theorem \ref{conditionF}, we propose the following distributed iterative algorithm to achieve a NE of the game $\mathcal{G}$ in (\ref{local_opt}).
\begin{Algorithm}	[Distributed Iterative Algorithm] \label{distgen}	\
	\begin{itemize}
	\item \textbf{Step 1 (Initialization):} Let $n=0$. Initialize $\mathbf{u}_k(0) \in \mathcal{U}_k$ for each sub-system $k$.
	\item \textbf{Step 2 (Message Update and Passing):}  Each sub-system k updates message $\mathbf{m}_k\left(n\right)$ according to the following equation:
		\begin{align}
			\mathbf{m}_k\left(n\right) = \mathbf{h}_k\left(\mathbf{u}\left(n\right) \right) 	\label{updateSMS}
		\end{align}
		and announces it to the other sub-systems.
	\item \textbf{Step 3 (Control Action Update): } Based on $\mathbf{m}_{-k}\left(n\right)$, each sub-system $k$ updates the control action $\mathbf{u}_k\left(n+1\right)$ according to
		\begin{align}
			\mathbf{u}_k\left(n+1\right)= \arg \max_{\mathbf{u}_k \in \mathcal{U}_k } F_k(\mathbf{u}_k,   \mathbf{u}_{-k}\left(n\right) , \boldsymbol{\epsilon}_k, \mathbf{m}_{-k}\left(n\right) )		\label{updateAct}   
		\end{align}	
	\item \textbf{Step 4 (Termination):} Set $n=n+1$ and go to Step 2 until a certain termination condition\footnote{For example, the termination condition can be chosen as $\| \mathbf{u}_k\left(n+1 \right) - \mathbf{u}_k\left(n\right)\| < \delta_k$ for some threshold $\delta_k$.} is satisfied.
	\end{itemize}
\end{Algorithm}
\begin{Remark}	[Convergence Property of Algorithm \ref{distgen}]
	The proof of convergence for Algorithm \ref{distgen} is shown in \cite{HuangJW}. The limiting point $\mathbf{u}(\infty)$ is a NE of the game $\mathcal{G}$ in (\ref{local_opt}), and thus is a stationary point of the NUM problem in (\ref{utility}) according to Theorem \ref{conditionF}.~\hfill\IEEEQED
\end{Remark}

While the NUM problem in (\ref{utility}) is not convex in general, the following corollary states that the limiting point $\mathbf{u}\left( \infty\right)$ of Algorithm\ref{distgen}  is asymptotically optimal for sufficiently small coupling parameter $\boldsymbol{\epsilon}$. 

\begin{Corollary}  [Asymptotic Optimality of Algorithm \ref{distgen}]  	\label{collaryAlg}
As the coupling parameter $\boldsymbol{\epsilon}$  goes to zero,  Algorithm \ref{distgen} converges to the unique global optimal point of the NUM problem in (\ref{utility}).~\hfill\IEEEQED
\end{Corollary}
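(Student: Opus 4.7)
The plan is to reduce the claim to two observations: at $\boldsymbol{\epsilon}=\mathbf{0}$ the NUM problem (\ref{utility}) becomes separable with a unique global maximizer, and for small $\boldsymbol{\epsilon}$ both this uniqueness and the coincidence between the NE of $\mathcal{G}$ and the global optimum persist by continuity. First I would exploit the structural form $\mathbf{f}_k(\mathbf{u}_k,\epsilon_{k,1}\mathbf{u}_1,\ldots,\epsilon_{k,K}\mathbf{u}_K)$ assumed after (\ref{DiscreteDym}): setting $\boldsymbol{\epsilon}_k=\mathbf{0}$ collapses $\mathbf{f}_k$ to $\mathbf{f}_k(\mathbf{u}_k,\mathbf{0},\mathbf{0})$, so the utility $U_k(\mathbf{u}_k,\mathbf{u}_{-k},\mathbf{0})$ in (\ref{UtilitySub}) depends only on $\mathbf{u}_k$. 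The NUM problem then decouples as $\sum_k \max_{\mathbf{u}_k\in\mathcal{U}_k} U_k(\mathbf{u}_k,\mathbf{0},\mathbf{0})$, and Assumption \ref{assumfg} yields a unique global maximizer $\mathbf{u}^{\ast}(\mathbf{0})=(\mathbf{u}_1^{\ast}(\mathbf{0}),\ldots,\mathbf{u}_K^{\ast}(\mathbf{0}))$.

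Next I would show that $\sum_k U_k$ remains strictly concave in $\mathbf{u}$ on the whole compact set $\boldsymbol{\mathcal{U}}$ for sufficiently small $\boldsymbol{\epsilon}$. The block Hessian at $\boldsymbol{\epsilon}=\mathbf{0}$ is block-diagonal, since only $U_k$ depends on $\mathbf{u}_k$, and each diagonal block $\nabla^2_{\mathbf{u}_k}U_k(\mathbf{u}_k,\mathbf{0},\mathbf{0})$ is uniformly negative-definite on $\boldsymbol{\mathcal{U}}$ by Assumption \ref{assumfg} combined with compactness. Because $\mathbf{f}_k$ and $g_k$ are continuously differentiable in all their arguments, the off-diagonal blocks at nonzero $\boldsymbol{\epsilon}$ are $\mathcal{O}(\boldsymbol{\epsilon})$ perturbations, so Weyl's inequality applied uniformly over $\boldsymbol{\mathcal{U}}$ preserves negative-definiteness in a neighborhood of $\boldsymbol{\epsilon}=\mathbf{0}$. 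Hence (\ref{utility}) is a strictly concave maximization with a unique global maximum $\mathbf{u}^{\ast}(\boldsymbol{\epsilon})$ for all small $\boldsymbol{\epsilon}$.

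With this in hand I would conclude using Theorem \ref{conditionF} together with the remark following Algorithm \ref{distgen}: any NE of $\mathcal{G}$ is a stationary point of the NUM problem, and under strict concavity the only stationary point is $\mathbf{u}^{\ast}(\boldsymbol{\epsilon})$, so the NE is unique and equals $\mathbf{u}^{\ast}(\boldsymbol{\epsilon})$. Since the limit point of Algorithm \ref{distgen} is a NE, one obtains $\mathbf{u}(\infty)=\mathbf{u}^{\ast}(\boldsymbol{\epsilon})$ for every sufficiently small $\boldsymbol{\epsilon}$; continuity of the KKT system (already invertible owing to the negative-definite Hessian) then gives $\mathbf{u}^{\ast}(\boldsymbol{\epsilon})\to\mathbf{u}^{\ast}(\mathbf{0})$ as $\boldsymbol{\epsilon}\to\mathbf{0}$, establishing asymptotic optimality.

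The main obstacle I expect is the uniform Hessian step, i.e., verifying that negative-definiteness of the Hessian of $\sum_k U_k$ is preserved \emph{uniformly} over the compact set $\boldsymbol{\mathcal{U}}$ as $\boldsymbol{\epsilon}$ varies near $\mathbf{0}$. This requires joint $\mathcal{C}^2$ smoothness of $U_k$ in $(\mathbf{u},\boldsymbol{\epsilon})$, which is only implicit in the standing assumptions (continuous differentiability of $\mathbf{f}_k$ and $g_k$, $\mathcal{C}^1$ regularity of $J_k$, and the infinite-series representation in (\ref{UtilitySub})); a secondary, more routine step is obtaining an explicit threshold on $\|\boldsymbol{\epsilon}\|$ by bounding the off-diagonal Hessian blocks via the chain rule through $\mathbf{f}_k$, beyond which uniqueness of the NE, uniqueness of the NUM optimum, and equality between them simultaneously kick in.
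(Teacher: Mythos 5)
Your core argument coincides with the paper's own proof (Appendix F): at $\boldsymbol{\epsilon}=\mathbf{0}$ the dynamics collapse to $\mathbf{f}_k(\mathbf{u}_k,\mathbf{0},\mathbf{0})$, so each $U_k$ in (\ref{UtilitySub}) depends on $\mathbf{u}_k$ alone, the sum utility in (\ref{utility}) is strictly concave by Assumption \ref{assumfg} and hence has a unique global maximizer, and by Theorem \ref{conditionF} the limit point of Algorithm \ref{distgen} --- a NE, hence a stationary point --- must be that maximizer. The paper stops essentially there: it reads the corollary as a statement about the limiting case $\boldsymbol{\epsilon}=\mathbf{0}$ and disposes of the small-$\boldsymbol{\epsilon}$ regime with the unquantified phrase ``asymptotically strictly concave.''

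Where you go beyond the paper --- arguing that strict concavity of $\sum_k U_k$ persists for all sufficiently small nonzero $\boldsymbol{\epsilon}$ via a block-Hessian/Weyl perturbation, and then sending $\mathbf{u}^{\ast}(\boldsymbol{\epsilon})\to\mathbf{u}^{\ast}(\mathbf{0})$ --- the weak link you flag is a genuine gap under the stated hypotheses. Assumption \ref{assumfg} asserts strict concavity of $U_k$ in $\mathbf{u}_k$, which does not imply a uniformly negative-definite block $\nabla^2_{\mathbf{u}_k}U_k$ even on a compact $\mathcal{U}_k$ (strict concavity is strictly weaker than strong concavity; think of $-x^4$ near $0$), so Weyl's inequality has nothing uniform to perturb. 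Moreover, no $\mathcal{C}^2$ regularity of $U_k$ jointly in $(\mathbf{u},\boldsymbol{\epsilon})$ is available: (\ref{UtilitySub}) is an infinite series involving all higher-order derivatives of $J_k$, while Lemma \ref{linearAp} only guarantees $J_k\in\mathcal{C}^1$, and the claim that the off-diagonal Hessian blocks are $\mathcal{O}(\epsilon)$ (and that the diagonal blocks move only by $\mathcal{O}(\epsilon)$) rests on that missing smoothness. If you adopt the paper's reading of the corollary, your first and last steps already constitute the proof and the Hessian machinery can simply be dropped; if you want the stronger ``unique optimum and convergence for every sufficiently small $\boldsymbol{\epsilon}$'' statement, you must add strong concavity and $\mathcal{C}^2$ smoothness in $(\mathbf{u},\boldsymbol{\epsilon})$ as explicit hypotheses, which the paper does not assume.
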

\begin{proof}
Please refer to Appendix F.
\end{proof}

In the next section, we shall elaborate on the low complexity distributed solutions for the application example introduced  in Section \ref{ExampleSec} based on the analysis in this section.

\section{Low Complexity Distributed Solutions  for Interference Networks}  \label{Example1}
In this section, we  apply  the low complexity distributed solutions  in Section \ref{Fluid_Approach} to  the  application example  introduced in Section \ref{ExampleSec}. We  first obtain the associated VCTS and  derive the per-flow fluid value function. We then discuss  the associated approximation error using Theorem \ref{RelashipJsJ}.  Based on  Algorithm \ref{distgen}, we propose a distributed control algorithm using  per-flow fluid value function approximation. Finally, we  compare the delay performance gain of the proposed algorithm with several baseline schemes using numerical simulations.

\subsection{Per-Flow Fluid Value Function}	\label{eg1perflow}
 We first consider the associated VCTS for the interference networks in the application example. Let $\mathbf{q}\left(t\right)=\left( q_1\left(t\right), \dots, q_K\left(t\right) \right) \in \overline{\boldsymbol{\mathcal{Q}}} \triangleq  \overline{\mathcal{Q}}^K$ be the global virtual queue state at time $t$, where $q_k\left(t\right) \in  \overline{\mathcal{Q}}$ is the virtual queue state of the $k$-th transmitter and $\overline{\mathcal{Q}}$ is the virtual queue state space\footnote{Here the virtual queue state space $\overline{\mathcal{Q}}$  is the set of nonnegative real numbers, while the original discrete time queue state space $\mathcal{Q}$ is the set of nonnegative integer numbers.} which contains $\mathcal{Q}$, i.e., $\overline{\mathcal{Q}} \supseteq \mathcal{Q}$. $\overline{\boldsymbol{\mathcal{Q}}} $  is the global virtual queue state space.  $q_k\left(t\right)$ in this example corresponds to the virtual state of the $k$-th sub-system of the VCTS in Section \ref{VCTS}. Therefore,  for a given initial global virtual queue state $\mathbf{q}(0) \in \boldsymbol{\mathcal{Q}}$, the VCTS queue state trajectory of the $k$-th transmitter is given by
\begin{align}	\label{VCTSeg1}
	\frac{ \mathrm{d}} {\mathrm{d}t}  q_{k}\left(t\right) = - R_k\left(t\right) \tau  + \lambda_k, \quad q_k(0) \in \mathcal{Q}
\end{align}
where $\lambda_k$ is the average data arrival rate of the $k$-th transmitter as defined in Assumption \ref{assumeA} and $R_k\left(t\right)$ is the ergodic data rate in (\ref{rate1}).  Starting from a  global virtual queue state $\mathbf{q}(0)= \mathbf{q}  \in \boldsymbol{\mathcal{Q}}$, we denote the optimal total cost of the VCTS, i.e., the VCTS fluid value function as $J(\mathbf{q}; \mathbf{L})$, where $\mathbf{L}=\{ L_{kj}: \forall k,j \in \mathcal{K}, k \neq j \}$ is the collection of the cross-channel path gains that correspond to the coupling parameter $\boldsymbol{\epsilon}$ as shown in Table 1.  According to Definition \ref{base_sys},  the associated decoupled base VCTS  is obtained by setting $L_{kj}=0$ for all $k,j \in \mathcal{K}, k \neq j$ in (\ref{VCTSeg1}).  Using Lemma \ref{linearAp}, for all $\mathbf{q} = (q_1, \dots, q_K) \in \boldsymbol{\mathcal{Q}}$, the fluid value function of the decoupled based VCTS $J(\mathbf{q}; \mathbf{0})$  has a linear architecture $J(\mathbf{q}; \mathbf{0}) = \sum_{k=1}^K J_k \left( q_k \right)$, where  $J_k \left( q_k \right)$ is the per-flow fluid value function  of the $k$-th Tx-Rx pair. Before obtaining the closed-form per-flow fluid value function $J_k \left( q_k \right)$, we calculate $c_k^{\infty}$ based on the sufficient conditions for optimality under decoupled base VCTS in Lemma \ref{linearAp} and $c_k^{\infty}$ is given by
\begin{align}
	c_k^{\infty}=v_k\tau e^{- \frac{\gamma_k}{ L_{kk} v_k\tau}} - \frac{\gamma_k}{L_{kk}}  E_1\left(\frac{\gamma_k}{ L_{kk} v_k\tau}\right)
\end{align}
where $v_k$ satisfies $ E_1\left(\frac{\gamma_k}{\tau L_{kk} v_k}\right)\tau=\lambda_k$. Therefore, the modified cost function for the decoupled base VCTS in this example is given by
\begin{align}
	\widetilde{c}_k(Q_k, \mathbf{p}_k) =  \beta_k \frac{Q_k}{\lambda_k} + \gamma_k  \mathbb{E} [ p_k^{\mathbf{H}}|\mathbf{Q} ] - c_k^{\infty}
\end{align}
By solving the associated per-flow  HJB equation, we can obtain the per-flow fluid value function  $J_k \left( q_k \right)$ which is given in the following lemma:
\begin{Lemma}[Per-Flow Fluid Value Function for Interference Networks]	\label{Eg1Per}
	The per-flow fluid value function $J_k \left( q_k \right)$  is given below in a parametric form w.r.t. $y$:
\begin{equation}
 \left\{
	\begin{aligned}	 \label{pareJkEg1}	
		q_k(y) &=  \frac{\lambda_k \tau}{\beta_k} \left(  \left(\frac{1}{a_k}+y\right)E_1\left(\frac{1}{a_k y}\right) - \frac{\lambda_k }{\tau}y   -  e^{- \frac{1}{a_k y}} y + \frac{c_k^{\infty}}{\tau}\right)  		 \\
		J_k(y) &=   \frac{\lambda_k \tau}{\beta_k} \left( \frac{\left(1-a_k y\right)}{4a_k}ye^{-\frac{1}{a_k y}} -\frac{\lambda_k }{2 \tau}y^2 +\left(\frac{y^2}{2}-\frac{1}{4a_k^2}\right) E_1\left(\frac{1}{a_k y}\right) \right)  + b_k
	   \end{aligned}
   \right.
  \end{equation}
\end{Lemma}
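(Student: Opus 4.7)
The plan is to specialize the per-flow HJB equation from Lemma~\ref{linearAp} to the decoupled base VCTS of the interference network, solve the inner pointwise minimization over the power control in closed form, reduce the HJB to a single first-order ODE in $q_k$, and then re-express the solution in parametric form using $y$ as a proxy for the derivative $J_k'(q_k)$.

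\textbf{Step 1 (Setting up the per-flow HJB).} In the decoupled base VCTS we have $L_{kj}=0$ for all $j\neq k$, so the cross-link interference vanishes and the $k$-th sub-system is fully described by
\begin{equation*}
\tfrac{d}{dt}q_k(t) = -\mathbb{E}\!\left[\log(1+p_k^{H_{kk}}L_{kk}|H_{kk}|^2)\right]\tau + \lambda_k.
\end{equation*}
Substituting the cost $\widetilde c_k(q_k,\mathbf p_k)=\beta_k q_k/\lambda_k+\gamma_k\mathbb E[p_k^{H_{kk}}]-c_k^\infty$ into (\ref{perflowHJB}) gives
\begin{equation*}
\min_{\{p_k^{H_{kk}}\ge 0\}}\!\Big\{\beta_k\tfrac{q_k}{\lambda_k}+\gamma_k\mathbb E[p_k^{H_{kk}}]-c_k^\infty+J_k'(q_k)\big(-\mathbb E[\log(1+p_k^{H_{kk}}L_{kk}|H_{kk}|^2)]\tau+\lambda_k\big)\Big\}=0.
\end{equation*}

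\textbf{Step 2 (Pointwise water-filling).} Since the dependence on $p_k^{H_{kk}}$ is pointwise in $H_{kk}$, the first-order condition yields the water-filling solution
\begin{equation*}
p_k^{H_{kk},\ast}=\bigl[\,y-\tfrac{1}{L_{kk}|H_{kk}|^2}\bigr]^+,\qquad y\triangleq\tfrac{J_k'(q_k)\tau}{\gamma_k}.
\end{equation*}
Setting $a_k=L_{kk}$ and using that $|H_{kk}|^2$ is exponential with unit mean, I would integrate to obtain, after an algebraic simplification identical to the one used to derive $c_k^\infty$,
\begin{equation*}
\mathbb E[p_k^{H_{kk},\ast}]=y\,e^{-1/(a_k y)}-\tfrac{1}{a_k}E_1\!\bigl(\tfrac{1}{a_k y}\bigr),\qquad R_k^{\ast}=E_1\!\bigl(\tfrac{1}{a_k y}\bigr).
\end{equation*}

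\textbf{Step 3 (Reducing the HJB to a $q_k$--$y$ relation).} Substituting these closed-form expectations and $J_k'(q_k)=\gamma_k y/\tau$ into the HJB and solving for $q_k$ produces the parametric curve for $q_k(y)$ given in (\ref{pareJkEg1}); the form of $c_k^\infty$ stated in the lemma is recovered by specializing this identity to the steady-state value $v_k$ that satisfies $\dot q_k=0$, i.e.\ $E_1(\gamma_k/(\tau L_{kk}v_k))\tau=\lambda_k$.

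\textbf{Step 4 (Recovering $J_k(y)$ and the main obstacle).} To obtain the second line of (\ref{pareJkEg1}) I would use the chain rule
\begin{equation*}
\tfrac{dJ_k}{dy}=J_k'(q_k)\,\tfrac{dq_k}{dy}=\tfrac{\gamma_k y}{\tau}\,\tfrac{dq_k}{dy},
\end{equation*}
and integrate in $y$. Using the identity $\tfrac{d}{dy}E_1(1/(a_k y))=e^{-1/(a_k y)}/y$, a direct differentiation of the proposed $q_k(y)$ collapses to
$\tfrac{dq_k}{dy}=\tfrac{\lambda_k\tau}{\beta_k}\bigl[E_1(1/(a_k y))-\lambda_k/\tau\bigr]$,
so the problem reduces to computing an antiderivative of $y\,E_1(1/(a_k y))$. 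This is the step I expect to be the main obstacle: it requires one integration by parts together with an antiderivative for $y^{2}e^{-1/(a_k y)}/y^{2}=e^{-1/(a_k y)}$, and repeated use of $E_1'(x)=-e^{-x}/x$ to collapse the several $e^{-1/(a_k y)}/(a_k^2 y)$-type residues. Matching the resulting expression against the proposed $J_k(y)$ by differentiation verifies that the three displayed terms together cancel all exponential terms and leave exactly $y\,[E_1(1/(a_k y))-\lambda_k/\tau]$, which is the desired $\tfrac{\beta_k}{\lambda_k\tau}\tfrac{dJ_k}{dy}$. The constant of integration $b_k$ is then pinned down by the boundary condition $J_k(0)=0$ in Lemma~\ref{linearAp}: choose $y_0$ so that $q_k(y_0)=0$ and set $b_k$ to cancel the bracketed expression evaluated at $y_0$.
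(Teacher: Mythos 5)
Your route is essentially the paper's (Appendix G): specialize the per-flow HJB, obtain the water-filling optimizer $p_k^{\mathbf{H}\ast}=\bigl(J_k'(q_k)\tau/\gamma_k-1/(L_{kk}|H_{kk}|^2)\bigr)^+$, evaluate the two expectations over the unit-mean exponential $|H_{kk}|^2$ in closed form (giving the $E_1$ and $e^{-1/(a_ky)}$ terms and, at the steady state, $c_k^\infty$), and reduce the HJB to a first-order ODE in $J_k'(q_k)$. The one genuine methodological difference is at the last step: the paper simply quotes the parametric solution of that ODE from an ODE handbook, whereas you verify the claimed parametric pair directly, computing $\tfrac{dq_k}{dy}=\tfrac{\lambda_k\tau}{\beta_k}\bigl[E_1(1/(a_ky))-\lambda_k/\tau\bigr]$ and checking $\tfrac{dJ_k}{dy}=y\,\tfrac{dq_k}{dy}$; that verification is correct (the exponential residues do cancel exactly as you claim) and is arguably more self-contained than the citation, since the lemma already supplies the candidate solution.

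One bookkeeping slip needs fixing, because as written it makes Step 3–4 internally inconsistent. The lemma's parametric variable is $y=J_k'(q_k)$ (cf. the footnote stating $J_k'(Q_k)=y(Q_k)$) with $a_k=\tau L_{kk}/\gamma_k$, whereas you set $y=J_k'(q_k)\tau/\gamma_k$ and $a_k=L_{kk}$. The product $a_ky$ — hence the arguments of $E_1$ and the exponential — is unchanged, but the stand-alone factors of $y$ and $1/a_k$ are not, so substituting your variables does not literally reproduce (\ref{pareJkEg1}) (the $\tau$ and $\gamma_k$ factors land in the wrong places), and your chain rule $\tfrac{dJ_k}{dy}=\tfrac{\gamma_k y}{\tau}\tfrac{dq_k}{dy}$ then disagrees by a factor $\gamma_k/\tau$ with the quantity $\tfrac{\lambda_k\tau}{\beta_k}\,y\bigl[E_1(1/(a_ky))-\lambda_k/\tau\bigr]$ you later match against. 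Replacing your definitions by $y=J_k'(q_k)$ and $a_k=\tau L_{kk}/\gamma_k$ makes every step close exactly, so this is a notational repair rather than a gap in the argument.
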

where $a_k=\frac{\tau L_{kk}}{\gamma_k}$, $E_1(x)= \int_1^{\infty} \frac{e^{-tx}}{t}\mathrm{d}t$ is the exponential integral function, and $b_k$ is chosen such that the boundary condition $J_k(0)=0$ is satisfied\footnote{To find $b_k$, first solve $q_k(y_k^0)=0$ using one dimensional search techniques (e.g., bisection method). Then $b_k$ is chosen such that $J_k(y_k^0)=0$.}.~\hfill\IEEEQED

\begin{proof} 
Please refer to Appendix G  for the proof of Lemma \ref{Eg1Per} and the derivation of $c_k^{\infty}$.
\end{proof}

The following corollary summarizes the asymptotic behavior of $J_k \left( q_k \right)$.

\begin{Corollary}	[Asymptotic Behavior of $J_k \left( q_k \right)$]   	\label{PropJk}
	\begin{align}	\label{asympoticJ_k}
		J_k \left( q_k \right) = \frac{\beta_k}{\lambda_k \tau} \mathcal{O} \left( \frac{q_k^2}{\log \left( q_k \right)} \right), \quad \text{as } q_k \rightarrow \infty
	\end{align}~\hfill\IEEEQED
\end{Corollary}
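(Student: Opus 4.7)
The plan is to work directly with the parametric form in Lemma \ref{Eg1Per} and pass from the parameter $y$ to $q_k$ via an asymptotic inversion. The key analytic input is the small-argument expansion of the exponential integral,
\begin{equation}
E_1(x) = -\log x - \gamma + \mathcal{O}(x), \qquad x \to 0^+,
\end{equation}
together with $e^{-x} = 1 - x + \mathcal{O}(x^2)$. Since $q_k \to \infty$ will correspond to $y \to \infty$ (so that $1/(a_k y) \to 0$), I will first determine the leading asymptotics of both $q_k(y)$ and $J_k(y)$ from (\ref{pareJkEg1}) as $y \to \infty$.

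Substituting the above expansion into the expression for $q_k(y)$, the term $\big(\tfrac{1}{a_k}+y\big)E_1\big(\tfrac{1}{a_k y}\big)$ dominates and contributes $y\log(a_k y)$ to leading order, while $-\tfrac{\lambda_k}{\tau}y$, $-e^{-1/(a_k y)} y$ and the constant $c_k^\infty/\tau$ are all $\mathcal{O}(y)$. Hence
\begin{equation}
q_k(y) = \frac{\lambda_k \tau}{\beta_k}\, y\log y\,\bigl(1 + o(1)\bigr), \qquad y \to \infty.
\end{equation}
The analogous expansion for $J_k(y)$ gives cancellation among the $\mathcal{O}(y^2)$ contributions: $\frac{(1-a_k y)}{4 a_k} y\, e^{-1/(a_k y)}$ contributes $-\frac{y^2}{4}+\mathcal{O}(y)$, the term $-\frac{\lambda_k}{2\tau}y^2$ is $\mathcal{O}(y^2)$, and $\big(\tfrac{y^2}{2}-\tfrac{1}{4 a_k^2}\big)E_1\big(\tfrac{1}{a_k y}\big)$ contributes $\tfrac{y^2}{2}\log(a_k y)+\mathcal{O}(y^2)$. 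The $\mathcal{O}(y^2)$ pieces are subdominant compared to $y^2 \log y$, so
\begin{equation}
J_k(y) = \frac{\lambda_k \tau}{\beta_k}\cdot\frac{y^2}{2}\log y\,\bigl(1 + o(1)\bigr), \qquad y \to \infty.
\end{equation}

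It remains to invert $q_k \sim \frac{\lambda_k \tau}{\beta_k}\, y\log y$. This is the standard Lambert-type inversion: taking logarithms gives $\log q_k = \log y + \log\log y + \mathcal{O}(1)$, so $\log y = \log q_k - \log\log q_k + o(\log\log q_k) \sim \log q_k$, and consequently
\begin{equation}
y \sim \frac{\beta_k}{\lambda_k \tau}\cdot \frac{q_k}{\log q_k}, \qquad q_k \to \infty.
\end{equation}
Plugging this back into the asymptotic for $J_k(y)$, both $y^2$ and $\log y$ transform cleanly to yield
\begin{equation}
J_k(q_k) \sim \frac{\lambda_k\tau}{\beta_k}\cdot\frac{1}{2}\left(\frac{\beta_k}{\lambda_k\tau}\right)^2 \frac{q_k^2}{(\log q_k)^2}\cdot\log q_k = \frac{\beta_k}{2\lambda_k\tau}\cdot\frac{q_k^2}{\log q_k},
\end{equation}
which is exactly the claimed order. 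The constant $b_k$ and the subleading terms are absorbed into the $\mathcal{O}(\cdot)$.

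The main obstacle is bookkeeping rather than ideas: one has to track three competing $\mathcal{O}(y^2)$ contributions in $J_k(y)$ to confirm that the $y^2\log y$ piece truly dominates (so that no cancellation degrades the order), and one needs to be careful that the $\log\log$ correction in the inversion $y \sim q_k/\log q_k$ does not feed back into the leading behavior of $J_k$. Both points are handled by keeping explicit error terms in the expansions of $E_1$ and $e^{-1/(a_k y)}$, and noting that corrections of relative size $(\log\log q_k)/\log q_k$ only sharpen the asymptotic equivalence $\sim$ into the weaker $\mathcal{O}(\cdot)$ statement of Corollary \ref{PropJk}.
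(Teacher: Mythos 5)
Your argument is correct, and it reaches the conclusion by a route that differs from the paper's in the part that actually matters. The paper's proof (Appendix H) also expands $E_1$ and $e^{-1/(a_k y)}$ in the parametric form (\ref{pareJkEg1}) and inverts $q_k(y)=\mathcal{O}(y\log y)$ via Lambert-$W$ bounds, but it only obtains two-sided bounds $\overline{C}_2\, q_k^2/\log q_k \le J_k(q_k)\le \overline{C}_1\, q_k^2/\log q_k$ with \emph{unidentified} constants; to pin down the $\tfrac{\beta_k}{\lambda_k\tau}$ scaling it must return to the HJB ODE (\ref{orderqk}), extract $J_k'(q_k)\log J_k'(q_k)=\tfrac{\beta_k}{\lambda_k\tau}\mathcal{O}(q_k)$ as in (\ref{simpJlogJ}), differentiate the order bounds on $J_k$ to bound $J_k'$ as in (\ref{kprimeorder}), and match coefficients. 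You instead carry the explicit prefactor $\tfrac{\lambda_k\tau}{\beta_k}$ through the expansions $q_k(y)\sim\tfrac{\lambda_k\tau}{\beta_k}y\log y$ and $J_k(y)\sim\tfrac{\lambda_k\tau}{\beta_k}\tfrac{y^2}{2}\log y$ and through the inversion $y\sim\tfrac{\beta_k}{\lambda_k\tau}\tfrac{q_k}{\log q_k}$, which yields the sharper statement $J_k(q_k)\sim\tfrac{\beta_k}{2\lambda_k\tau}\tfrac{q_k^2}{\log q_k}$ and hence the corollary in one stroke. This buys two things: you get the exact leading constant (not just proportionality), and you avoid the paper's somewhat delicate step of passing from order bounds on $J_k$ to order bounds on $J_k'$, which is not valid for general functions and is only justified here by the monotone structure of the solution. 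The one small point you should make explicit is that along the parametric curve $y\mapsto(q_k(y),J_k(y))$ one indeed has $q_k\to\infty$ iff $y\to\infty$ with $q_k(y)$ eventually monotone (immediate from $q_k(y)\sim\tfrac{\lambda_k\tau}{\beta_k}y\log y$ and, equivalently, from $J_k'(q_k)=y$), so that the asymptotic inversion and the composition of the equivalences $J_k(y)\sim\tfrac{\lambda_k\tau}{\beta_k}\tfrac{y^2}{2}\log y$, $y\sim\tfrac{\beta_k}{\lambda_k\tau}\tfrac{q_k}{\log q_k}$, $\log y\sim\log q_k$ are legitimate; with that noted, your bookkeeping of the competing $\mathcal{O}(y^2)$ terms is right and no cancellation affects the dominant $y^2\log y$ term.
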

\begin{proof} 
Please refer to Appendix H.
\end{proof}

\begin{figure}
  \centering
  \includegraphics[width=4in]{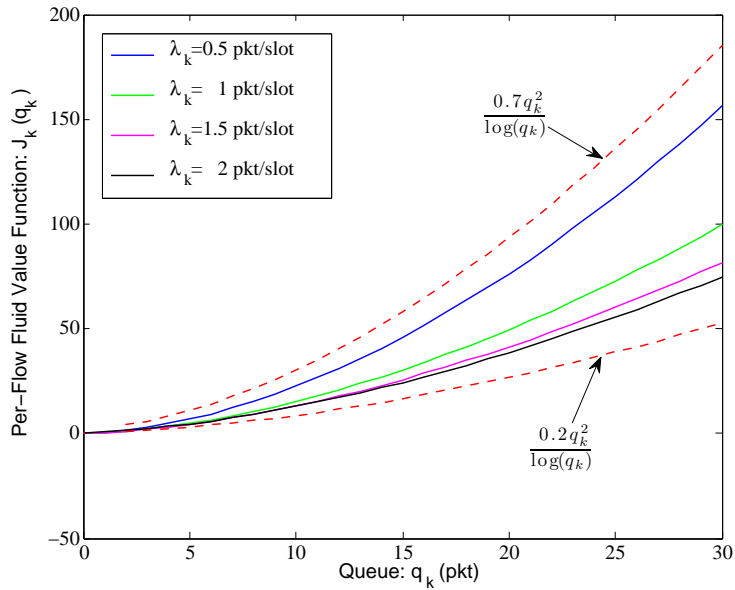}
  \caption{Per-Flow fluid value function $J_k\left(q_k \right)$ versus queue state $q_k$ with $\tau=5$ms, $\gamma_k=0.05$ and $L_{kk}=1$  for all $k \in \mathcal{K}$. The two dashed lines represent the functions $\frac{0.2 q_k^2}{\log\left(q_k \right)}$ and $\frac{0.7q_k^2}{\log\left(q_k \right)}$.}
  \label{sim2}
\end{figure}

Corollary \ref{PropJk} suggests that for large queue state $q_k$, the per-flow fluid value function increases at the order of $\frac{q_k^2}{\log \left( q_k \right)}$ and is a decreasing function of the average arrival rate $\lambda_k$. The analytical result in (\ref{asympoticJ_k}) is also verified in Fig.~\ref{sim2}.

\subsection{Analysis of Approximation Error}
Based on Theorem \ref{RelashipJsJ}, the approximation error of using the linear architecture $\sum_{k=1}^K J_k \left( q_k \right)$ to approximate  $J(\mathbf{q}; \mathbf{L})$ is given in the following lemma.
\begin{Lemma}	[Analysis of Approximation Error for Interference Networks]	\label{ErrorEg1}
	 The approximation error between $J(\mathbf{q}; \mathbf{L})$ and $\sum_{k=1}^K J_k \left( q_k \right)$ is given by
	 \begin{equation}	\label{Ex1Error}
		\big| J(\mathbf{q}; \mathbf{L}) - \sum_{k=1}^K J_k \left( q_k \right)  \big| =  \sum_{k=1}^K \sum_{j \neq k} L_{kj} D_{kj} \mathcal{O} \left(\frac{q_k q_j^2+q_k^2 q_j}{\log q_k \log q_j} \right)  + \mathcal{O}(L^2), \quad \text{as } q_j, q_k \rightarrow \infty,   L \rightarrow 0
	\end{equation}
	where the coefficient $D_{kj}$ is given by
	\begin{equation}	\label{Dexpr}
		D_{kj} =  \frac{\beta_k \beta_j}{\gamma_j \lambda_k \lambda_j \tau^2}
	\end{equation}~\hfill\IEEEQED
\end{Lemma}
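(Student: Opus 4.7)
The plan is to instantiate the perturbation expansion of Theorem~\ref{RelashipJsJ} to the interference network using the associations of Table~\ref{assoc_eg1}. Identifying $\epsilon_{kj}=L_{kj}$, $\mathbf{x}_k=q_k$, $\mathbf{u}_k=\mathbf{p}_k$, $\mathbf{f}_k=-R_k$, and $\overline{\mathbf{z}}_k=\lambda_k$, Theorem~\ref{RelashipJsJ} yields
$$J(\mathbf{q};\mathbf{L})-\sum_{k=1}^K J_k(q_k) \;=\; \sum_{k=1}^K\sum_{j\neq k} L_{kj}\,\widetilde{J}_{kj}(\mathbf{q}) \;+\; \mathcal{O}(L^2),$$
so the entire task reduces to establishing the growth rate of $\widetilde{J}_{kj}(\mathbf{q})$ for large $q_k,q_j$.

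First, I would specialise the PDE \eqref{PDE_Jij} to the example. A direct differentiation of the capacity formula \eqref{rate1} with respect to the product variable $L_{kj}p_j^{\mathbf{H}}$ at $\mathbf{L}=\mathbf{0}$ turns the coupling term into $J_k'(q_k)\,p_j^*(q_j)\,G_{kj}(q_k)$, where $G_{kj}(q_k)=\tau\,\mathbb{E}\!\left[p_k^{*\mathbf{H}}(q_k)L_{kk}|H_{kk}|^2|H_{kj}|^2/(1+p_k^{*\mathbf{H}}(q_k)L_{kk}|H_{kk}|^2)\right]$ depends on $q_k$ only through the decoupled-optimal power $p_k^*$. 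The homogeneous part is $\sum_{i=1}^K [\lambda_i-\tau R_i^*(q_i)]\,\partial_{q_i}\widetilde{J}_{kj}(\mathbf{q})$.

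Second, I would substitute asymptotic estimates along the unperturbed trajectory. Corollary~\ref{PropJk} already gives $J_k'(q_k)\sim \frac{\beta_k}{\lambda_k\tau}\cdot\frac{q_k}{\log q_k}$. Asymptotically inverting $q_k(y)$ from the parametric form \eqref{pareJkEg1} as $y\to\infty$, one gets $p_k^*(q_k)\sim\frac{\beta_k}{\gamma_k\lambda_k}\cdot\frac{q_k}{\log q_k}$, and the large-power expansion of $G_{kj}(q_k)$ reduces to a $q_k$-independent constant of order $\tau$. Putting the pieces together, the source term behaves like $\frac{\beta_k\beta_j}{\gamma_j\lambda_k\lambda_j\tau}\cdot\frac{q_k\,q_j}{\log q_k\,\log q_j}$, which already exhibits the coefficient $D_{kj}$ up to the trailing $\tau$ factor introduced by the characteristic integration below.

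Finally, I would integrate the resulting first-order linear PDE along the characteristic field generated by the decoupled drift $\dot q_i = \lambda_i-\tau R_i^*(q_i)$, imposed with the boundary condition $\widetilde{J}_{kj}(\mathbf{q})|_{q_j=0}=0$. Since only $q_k$ and $q_j$ participate in the source, while the other components decouple to their steady state, the integration along the characteristics (whose running time is $\Theta(q_k)+\Theta(q_j)$ to reach $\{q_j=0\}$) converts one power of $\frac{q_i}{\log q_i}$ into $\frac{q_i^2}{\log q_i}$ for $i\in\{k,j\}$. Symmetrising over the two ways the characteristic can first deplete $q_j$ produces the leading term $\frac{q_kq_j^2+q_k^2 q_j}{\log q_k\log q_j}$ with coefficient exactly $D_{kj}=\frac{\beta_k\beta_j}{\gamma_j\lambda_k\lambda_j\tau^2}$. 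Re-inserting the $L_{kj}$ prefactors, summing over $k\neq j$, and carrying the $\mathcal{O}(L^2)$ remainder of Theorem~\ref{RelashipJsJ} yields \eqref{Ex1Error}.

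The main obstacle I anticipate is the last step: controlling a characteristic integration in $K$ dimensions and justifying that only the two components $(q_k,q_j)$ contribute at leading order, with no subleading $\log\log$ correction entering through the finite-time transients of the other queues. A secondary technical point is the asymptotic inversion of the implicit relation $q_k(y)$ in \eqref{pareJkEg1} needed to obtain the scaling of $p_k^*(q_k)$, which has to be carried out to sufficient accuracy so that the coefficient $D_{kj}$ emerges without spurious $\log$-factors.
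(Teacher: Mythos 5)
Your proposal is essentially the paper's own argument (Appendix I): both reduce the claim to Theorem \ref{RelashipJsJ}, specialize the PDE in (\ref{PDE_Jij}) to the interference network, use Corollary \ref{PropJk} and the closed-form water-filling power to show the coupling (source) term scales as $\frac{\beta_k\beta_j}{\gamma_j\lambda_k\lambda_j\tau}\,\frac{q_kq_j}{\log q_k\,\log q_j}$ while the decoupled drift scales as $\lambda_i-\tau\,\mathcal{O}(\log q_i)$, and then extract the growth of $\widetilde{J}_{kj}$ from this first-order PDE. The only deviation is in the final bookkeeping step: you integrate along characteristics down to the boundary $\{q_j=0\}$, whereas the paper directly balances the highest-order terms of (\ref{highlestT}) (listing the candidate orders, including the $\mathrm{li}(\cdot)$ term), and both devices produce the same coefficient $D_{kj}=\frac{\beta_k\beta_j}{\gamma_j\lambda_k\lambda_j\tau^2}$ at essentially the same level of rigor.
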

\begin{proof} 
Please refer to Appendix I.		
\end{proof}

We have the following remark discussing the approximation error in (\ref{Ex1Error}).

\begin{Remark} [Approximation Error w.r.t. System Parameters]	\label{link_sim}
The dependence between the  approximation error in (\ref{Ex1Error}) and  system parameters are given below:
\begin{itemize}
	\item \textbf{Approximation Error w.r.t. Traffic Loading:}	
		 the approximation error is a decreasing function of  the average arrival rate $\lambda_k$.
	\item \textbf{Approximation Error w.r.t. SNR:} 
		the approximation error is an increasing function of  the SNR per Tx-Rx pair (which is a decreasing function of $\gamma_k$).~\hfill\IEEEQED
	\end{itemize}
\end{Remark}

\subsection{Distributed Power Control Algorithm}

As discussed in Section \ref{KnTbased}, we can use $\sum_{k=1}^K J_k \left( q_k \right)$ to approximate $V \left(\mathbf{Q} \right)$. Fig.~\ref{approxqua} illustrates the quality of the approximation. It can be observed that the sum of the per-flow fluid value functions is a good approximator  to the relative value function for both high and low transmit SNR. Using the per-flow fluid value function approximation, i.e.,
\begin{equation}	\label{approxEg1}
	V \left(\mathbf{Q} \right) \approx \sum_{k=1}^K J_k \left( q_k \right), \quad \forall \mathbf{Q}  \in \boldsymbol{\mathcal{Q}}
\end{equation}
the distributed power control for the interference network can be obtained by solving the following NUM problem according to Lemma \ref{NUMprob}.

\begin{figure}
\centering
\subfigure[Low Tx SNR regimes with $\gamma_k = 200$, $\gamma_k = 300$, $\gamma_k = 400$, which corresponds to the average Tx SNR per pair being 3.5dB, 3.1dB, 2.9dB (under optimal policy), respectively.]{
\includegraphics[width=2.9in]{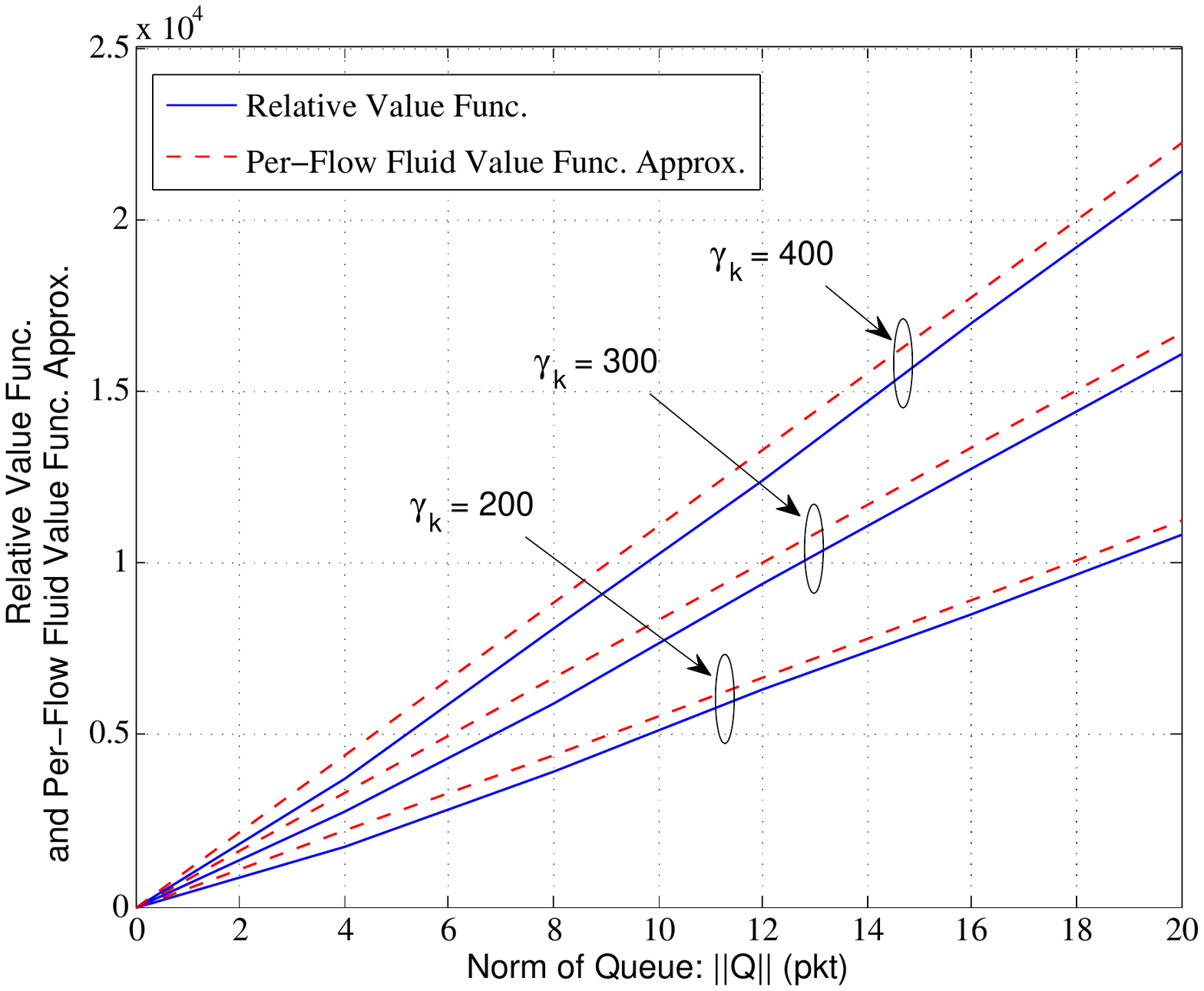}}
\hspace{0.5cm}
\subfigure[High Tx SNR regimes with  $\gamma_k = 0.1$, $\gamma_k = 0.3$, $\gamma_k = 0.6$,  which corresponds to the average Tx SNR per pair being 9.7dB, 9.1dB, 8.7dB (under optimal policy), respectively.]{
\includegraphics[width=3in]{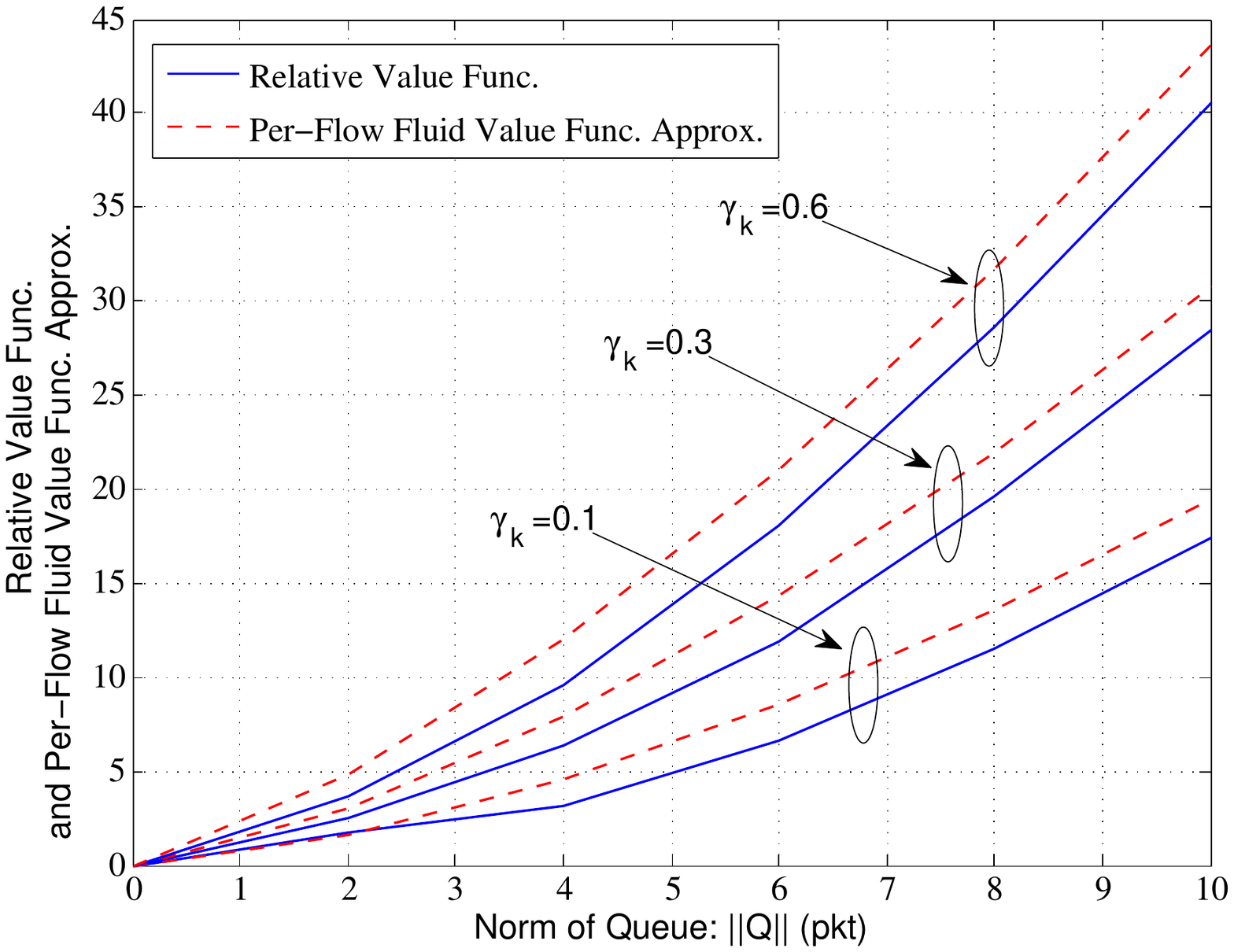}}
\caption{Relative value function $V\left(\mathbf{Q} \right)$ and per-flow fluid value function approximation $\sum_{k=1}^K J_k \left( q_k\right)$ versus the norm of the global queue state $\| \mathbf{Q} \|$ with $\mathbf{Q}=\{q_1, 0, \dots, 0 \}$. The system parameters are configured as in Section \ref{simulation}. Note that the relative value functions are calculated using value iteration \cite{DP_Bertsekas}.}
\label{approxqua}
\end{figure}

\begin{Lemma}  [Equivalent NUM Problem for Interference Networks]
	Minimizing the R.H.S. of  the  Bellman equation in (\ref{eq1 org bel}) using the per-flow fluid value function approximation in (\ref{approxEg1}) for all $\mathbf{Q}\in \boldsymbol{\mathcal{Q}}$ is equivalent to  the following  NUM problem:
	\begin{equation}	\label{ApproxPol}
		 \max_{\boldsymbol{\mathbf{p}}}   \ \sum_{k=1}^K U_k(\mathbf{p}_k, \boldsymbol{\mathbf{p}}_{-k},\mathbf{L}_k)
	\end{equation}
where $\boldsymbol{\mathbf{p}}$, $\mathbf{p}_k$, $\boldsymbol{\mathbf{p}}_{-k}$ and $\mathbf{L}_k$ are defined in Section \ref{ExampleSec}.  $U_k(\mathbf{p}_k, \boldsymbol{\mathbf{p}}_{-k},\mathbf{L}_k)$ is  the utility function of the $k$-th Tx-Rx pair given by\footnote{Note that $J_k'\left(Q_k \right)$ in the utility function in (\ref{UtilityEg1}) can be calculated as follows: $J_k'\left(Q_k \right) = \left(\frac{\mathrm{d} J_k\left(y\right)}{\mathrm{d} y } \big/\frac{\mathrm{d} q_k\left(y\right)}{\mathrm{d} y }\right)\Big|_{y=y\left(Q_k \right)} = y\left(Q_k \right)$, where $ y\left(Q_k \right)$ satisfies $q_k\left(  y\left(Q_k \right) \right)=Q_k$ in (\ref{pareJkEg1}).}
	\begin{equation}	\label{UtilityEg1}
		U_k(\mathbf{p}_k, \boldsymbol{\mathbf{p}}_{-k},\mathbf{L}_k) =  \mathbb{E} \left[ J_k'\left(Q_k \right) \tau \log\left( 1+\frac{p_k^{\mathbf{H}}L_{kk} \left|H_{kk}\right|^2}{1 + \sum_{j \neq k} p_j^{\mathbf{H}} L_{kj} \left|H_{kj}\right|^2 }\right) - \gamma_k p_k^{\mathbf{H}} \Bigg|   \mathbf{Q} \right] + \mathcal{O}(\tau^2), \text{as  }\tau \rightarrow 0
	\end{equation}~\hfill\IEEEQED
\end{Lemma}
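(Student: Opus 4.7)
The plan is to derive the lemma as a direct specialization of Lemma~\ref{NUMprob} to the $K$-pair interference network using the correspondences catalogued in Table~\ref{assoc_eg1}, and then to collapse the generic per-flow utility (\ref{UtilitySub}) to the concrete form (\ref{UtilityEg1}) by retaining only the leading first-order Taylor term and pushing the rest into the $\mathcal{O}(\tau^2)$ residue.

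First I would invoke Lemma~\ref{NUMprob} with the identifications $\mathbf{x}_k\leftrightarrow Q_k$, $\mathbf{u}_k\leftrightarrow \mathbf{p}_k$, $\mathbf{u}_{-k}\leftrightarrow \boldsymbol{\mathbf{p}}_{-k}$, $\boldsymbol{\epsilon}_k\leftrightarrow \mathbf{L}_k$, $\mathbf{z}_k\leftrightarrow A_k$ (so that $\overline{\mathbf{z}}_k=\lambda_k$), $\mathbf{f}_k\leftrightarrow -R_k$ with $R_k=\mathbb{E}[\log(1+\mathrm{SINR}_k)\mid\mathbf{Q}]$ as in (\ref{rate1}), and $g_k(\mathbf{p}_k)\leftrightarrow \gamma_k\mathbb{E}[p_k^{\mathbf{H}}\mid\mathbf{Q}]$. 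Since (\ref{approxEg1}) is precisely the linear per-flow architecture $V(\mathbf{Q})\approx\sum_k J_k(q_k)$, Lemma~\ref{NUMprob} then says that minimizing the R.H.S.\ of (\ref{eq1 org bel}) is equivalent to maximizing $\sum_k U_k(\mathbf{p}_k,\boldsymbol{\mathbf{p}}_{-k},\mathbf{L}_k)$ with $U_k$ given by (\ref{UtilitySub}). Expanding that series, the $n=1$ contribution becomes $J_k'(Q_k)\bigl(-R_k\tau+\lambda_k\tau\bigr)$, and together with $-g_k$ this supplies $\mathbb{E}[J_k'(Q_k)\tau\log(1+\mathrm{SINR}_k)-\gamma_k p_k^{\mathbf{H}}\mid\mathbf{Q}]-J_k'(Q_k)\lambda_k\tau$. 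The dangling term $-J_k'(Q_k)\lambda_k\tau$ is constant in $\mathbf{p}$, hence leaves the $\arg\max$ unaffected and may be absorbed into an irrelevant offset, reproducing the leading form of (\ref{UtilityEg1}).

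The remaining task is to show that the tail $\sum_{n\geq 2}\frac{J_k^{(n)}(Q_k)}{n!}\mathbb{E}\bigl[(-R_k\tau+A_k\tau)^{n}\mid Q_k,\mathbf{p}\bigr]$ lumps into $\mathcal{O}(\tau^2)$ as $\tau\to 0$. Factoring out $\tau^{n}$ makes each summand a centered moment of the quantity $-R_k+A_k$, so provided one has (i) the standard peak-power bound $R_k\leq\log(1+p_{\max}L_{kk})$ and (ii) finite higher-order moments of $A_k$, together with a polynomial-in-$Q_k$ growth estimate on $|J_k^{(n)}(Q_k)|$ with factorially decaying constants, the series is majorised by a geometric series in $\tau$ starting at $\tau^{2}$, giving $\mathcal{O}(\tau^2)$ uniformly on each compact queue-state set. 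I expect the main obstacle to be precisely this last point: one needs to legitimately interchange summation and expectation, which requires bounding $|J_k^{(n)}|$ explicitly. The cleanest route is to differentiate the parametric representation in (\ref{pareJkEg1}) through the chain rule applied to $y\mapsto q_k(y)$ and verify, by induction on $n$, that the required tempered-growth estimate holds, after which the remainder estimate in (\ref{UtilityEg1}) follows and the equivalence with the NUM problem (\ref{ApproxPol}) is established.
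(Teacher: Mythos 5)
Your proposal is correct and follows essentially the same route the paper intends: the paper gives no separate proof for this lemma, obtaining it by specializing Lemma~\ref{NUMprob} (the Taylor-expansion argument of Appendix E) via the identifications in Table~\ref{assoc_eg1}, keeping the $n=1$ term and lumping the $n\geq 2$ terms (each carrying a factor $\tau^{n}$) into $\mathcal{O}(\tau^{2})$ as $\tau\rightarrow 0$. Your additional observations --- that the $\mathbf{p}$-independent term $-J_k'(Q_k)\lambda_k\tau$ only shifts the objective and so preserves the $\arg\max$, and that the tail estimate needs moment bounds on $A_k$ and growth control on $J_k^{(n)}$ --- simply add rigor beyond what the paper records.
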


For sufficiently small epoch duration $\tau$,  the term $\mathcal{O}(\tau^2)$ is negligible. Note that the utility function  $U_k(\mathbf{p}_k, \boldsymbol{\mathbf{p}}_{-k},\mathbf{L}_k)$ in (\ref{UtilityEg1}) is  strictly concave  in $\mathbf{p}_k$ but  convex  in $\boldsymbol{\mathbf{p}}_{-k}$.  Hence, Assumption \ref{assumfg} holds for $U_k(\mathbf{p}_k, \boldsymbol{\mathbf{p}}_{-k},\mathbf{L}_k)$ in (\ref{UtilityEg1}) in this example. 

Based on Theorem \ref{conditionF}, we choose a local objective function  $F_k(\mathbf{p}_k, \boldsymbol{\mathbf{p}}_{-k},\mathbf{L}_k, \mathbf{m}_{-k})$ for the $k$-th sub-system as follows\footnote{The condition in (\ref{cond}) can be easily verified by substituting the expressions of $U_k$ in (\ref{UtilityEg1}) and  $F_k$ in (\ref{exprF}) into (\ref{cond}).}:
\begin{align}	
	F_k(\mathbf{p}_k, \boldsymbol{\mathbf{p}}_{-k},\mathbf{L}_k, \mathbf{m}_{-k})  =  U_k(\mathbf{P}_k, \boldsymbol{\mathbf{P}}_{-k},\mathbf{L}_k) - \mathbb{E} \left[p_k^{\mathbf{H}}  \sum_{j \neq k} m_j^{\mathbf{H}} |L_{jk}H_{jk}|^2  \Bigg| \mathbf{Q} \right]	\label{exprF}
\end{align} 
where $\mathbf{m}_{-k} =\{\mathbf{m}_j: \forall j \neq k \}$ and $\mathbf{m}_j$ is the message of the $j$-th Tx-Rx pair given by
\begin{align}	
	\mathbf{m}_j=\left\{m_j^{\mathbf{H}}=\frac{J_j'(Q_j)\tau \Upsilon_j^{\mathbf{H}}}{\boldsymbol{\Psi}_j^{\mathbf{H}} }: \forall \mathbf{H} \right\}	\label{exprM}
\end{align}
where $\Upsilon_j^{\mathbf{H}} = \frac{p_j^{\mathbf{H}}L_{jj} |H_{jj}|^2}{1+\sum_{k \neq j} p_k^{\mathbf{H}} L_{jk}|H_{jk}|^2}$ and $\boldsymbol{\Psi}_j^{\mathbf{H}}=1+ \sum_{k=1}^K p_k^{\mathbf{H}} L_{jk} |H_{jk}|^2$.  Note that  $\Upsilon_j^{\mathbf{H}}$  and $\boldsymbol{\Psi}_j^{\mathbf{H}}$ are the  SINR and the total received signal plus noise at receiver $j$  for a given CSI realization $\mathbf{H}$, respectively and both are locally measurable.  The associated game for this example is given by
\begin{equation}	\label{gameEg1}
	(\mathcal{G}1): \quad \max_{\mathbf{p}_k} F_k(\mathbf{p}_k, \boldsymbol{\mathbf{p}}_{-k},\mathbf{L}_k, \mathbf{m}_{-k}), \quad \forall k \in \mathcal{K}
\end{equation}
The distributed iterative algorithm solving  the game in (\ref{gameEg1}) can be obtained from Algorithm \ref{distgen} by replacing variable $\mathbf{u}_k$ with $\mathbf{p}_k$, message $\mathbf{m}_k$ in (\ref{updateSMS}) with (\ref{exprM}), respectively.   Furthermore, according to Corollary \ref{collaryAlg}, as $\mathbf{L}$ goes to zero, the algorithm converges to the unique global optimal point of the NUM problem in (\ref{ApproxPol}) for this example.

Define $\boldsymbol{\mathbf{p}}_{-k}^{\mathbf{H}}=\left\{p_j^{\mathbf{H}}: \forall j \neq k \right\}$ and $\mathbf{m}_{-k}^{\mathbf{H}} =\left\{m_j^{\mathbf{H}}: \forall j \neq k  \right\}$ as the collection of coupling effects and messages for a given CSI realization $\mathbf{H}$. Then, the  objective function in (\ref{gameEg1}) can be written as
\begin{align}
	F_k\left(\mathbf{p}_k, \boldsymbol{\mathbf{p}}_{-k},\mathbf{L}_k, \mathbf{m}_{-k} \right)=  \mathbb{E}\left[  f_k\left(p_k^{\mathbf{H}}, \boldsymbol{\mathbf{p}}_{-k}^{\mathbf{H}},\mathbf{L}_k, \mathbf{m}_{-k}^{\mathbf{H}}  \right) \Big| \mathbf{Q}\right]		\label{structureF}
\end{align}
where $f_k\left(p_k^{\mathbf{H}}, \boldsymbol{\mathbf{p}}_{-k}^{\mathbf{H}},\mathbf{L}_k,\mathbf{m}_{-k}^{\mathbf{H}}  \right) = J_k'\left(Q_k \right) \tau \log\left( 1+\frac{p_k^{\mathbf{H}}L_{kk} \left|H_{kk}\right|^2}{1 + \sum_{j \neq k} p_j^{\mathbf{H}} L_{kj} \left|H_{kj}\right|^2 }\right) - p_k^{\mathbf{H}} \left( \sum_{j \neq k} m_j^{\mathbf{H}} L_{jk} |H_{jk}|^2+\gamma_k \right)$.

Based on the structure of $F_k\left(\mathbf{p}_k, \boldsymbol{\mathbf{p}}_{-k},\mathbf{L}_k, \mathbf{m}_{-k} \right)$ in (\ref{structureF}), the solution of the game in (\ref{gameEg1}) can be further decomposed into per-CSI control  as illustrated in the following Algorithm \ref{distCAeg1}.  
\begin{Algorithm}	[Distributed Power Control Algorithm for Interference Networks]	\label{distCAeg1}\
	\begin{itemize}
		\item \textbf{Step 1 [Information Passing within Each Tx-Rx Pair]}:	 at the beginning of the $t$-th epoch, the $k$-th transmitter notifies the value of $J_k'\left(Q_k\left(t\right)\right)$ to the $k$-th receiver. 
		\item \textbf{Step 2  [Calculation of Control Actions]}: at the beginning of each time slot within the $t$-th epoch with the CSI realization being $\mathbf{H}$, each transmitter determines the transmit power $p_k^{\mathbf{H}}$  according to the following per-CSI distributed power allocation  algorithm:
		\begin{Algorithm}		[Per-CSI Distributed Power Allocation Algorithm]	\label{PCSIDPA}\
		\begin{itemize}	
			\item \textbf{Step 1 [Initialization]:}  Set $n=0$. Each transmitter initializes $p_k^{\mathbf{H}}(0)$. 
			\item \textbf{Step 2 [Message Update and Passing]:}  Each receiver $k$ locally estimates the SNR $\Upsilon_m^{\mathbf{H}}\left(n\right)$  and the total received signal plus noise $\boldsymbol{\Psi}_k^{\mathbf{H}}\left(n\right)$. Then, each receiver $k$ calculates $m_k^{\mathbf{H}}\left(n\right)$ according to (\ref{exprM}) and broadcasts  $m_k^{\mathbf{H}}\left(n\right)$ to all the transmitters.
			\item \textbf{Step 3 [Power Action Update]:} After receiving   messages $\{ m_k^{\mathbf{H}}\left(n\right) \}$ from all the receivers, each transmitter locally updates $p_k^{\mathbf{H}}\left(n+1\right)$ according to
			\begin{align}	\
			p_k^{\mathbf{H}}\left(n+1\right) &=  \arg \max_{p_k^{\mathbf{H}}} f_k\left(p_k^{\mathbf{H}}, \boldsymbol{\mathbf{p}}_{-k}^{\mathbf{H}} \left(n \right),\mathbf{L}_k, \mathbf{m}_{-k}^{\mathbf{H}} \left(n \right) \right) \notag \\ 
	 		& = \min\left\{ \left( \frac{J_k'\left(Q_k \left( t\right)\right)\tau}{\sum_{j \neq k} m_j^{\mathbf{H}}\left(n \right)  L_{jk} |H_{jk}\left(n \right) |^2  + \gamma_k} -\frac{1+I_k\left(n \right) }{L_{kk} |H_{kk}\left(n \right) |^2} \right)^+, \  p_k^{up}\left(t_n \right)	\right\}  \label{controlDPA}
			\end{align}
			where $I_k\left(n \right)=\sum_{j \neq k}	p_j^{\mathbf{H}}\left(n \right) L_{kj} |H_{kj}\left(n \right)|^2$ is the total interference\footnote{Note that $I_k$ can be calculated based on the received message $m_k$.  Specifically, we write $m_k$ in (\ref{exprM}) as $m_k = \frac{J_k'\left(Q_k \right)\tau p_k^{\mathbf{H}}L_{kk}|H_{kk}|^2  }{(1+ I_k + p_k^{\mathbf{H}} L_{kk}|H_{kk}|^2)(1+ I_k)}$. Then, $I_k$ can be easily calculated based on the local knowledge of $J_k \left( q_k \right)$. $p_k^{\mathbf{H}}$ and $L_{kk}|H_{kk}|^2$. } at receiver $k$. $p_k^{up}\left(t_n \right)$ satisfies $\log \left(1+\frac{p_k^{up}\left(t_n \right)L_{kk} |H_{kk}\left(n\right)|^2} {1+I_k\left(n\right)} \right) \tau' = Q\left(t_n \right)$, where $\tau'$ is the slot duration and $Q\left(t_n \right)$ is the QSI at the $n$-th time slot of the $t$-th epoch\footnote{The constraint in (\ref{constraintO}) is equivalent to the requirement  that the transmitter cannot transmit more than the unfinished work left in the queue at the each time slot. Therefore, $p_k^{up}\left(t_n \right)$ is maximum that $p_k^{\mathbf{H}}\left(n+1 \right)$ can take at the $n$-th time slot of the $t$-th epoch.}.
			\item \textbf{Step 4 [Termination]:} If a certain termination condition\footnote{For example, the termination condition can be chosen as $|p_k^{\mathbf{H}}\left(n+1\right)-p_k^{\mathbf{H}}\left(n\right)|<\delta_k$ for some threshold $\delta_k$.} is satisfied, stop. Otherwise,  $n=n+1$ and go to Step 2 of Algorithm \ref{PCSIDPA}.~\hfill\IEEEQED
	\end{itemize}
		\end{Algorithm}
	\end{itemize}
\end{Algorithm}

Fig.~\ref{flow_eg1}  illustrates the above procedure in a flow chart. 
\begin{figure}
  \centering
  \includegraphics[width=5.5in]{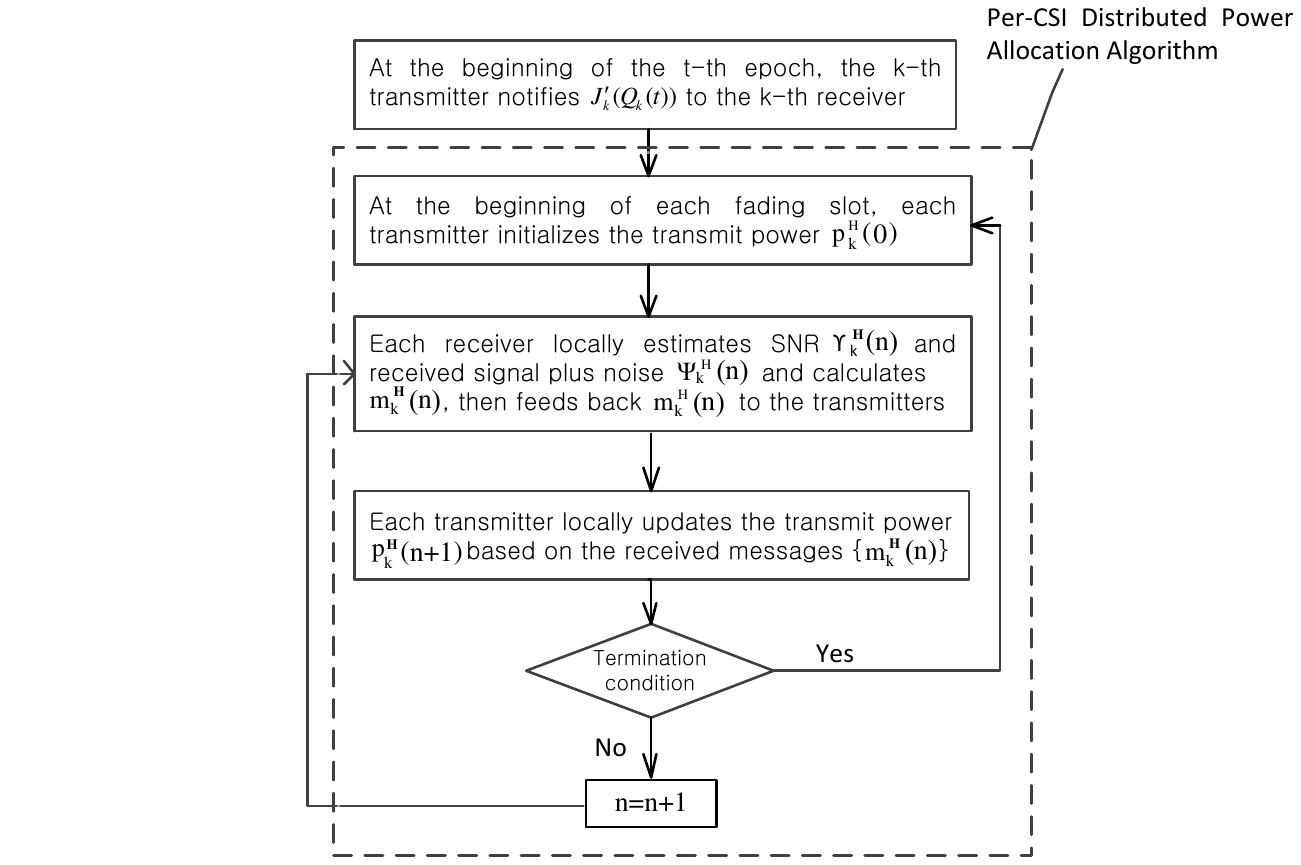}
  \caption{Algorithm flow of the proposed distributive power control algorithm for the interference networks.}
  \label{flow_eg1}
\end{figure}

\begin{Remark}	[Multi-level Water-filling Structure of the Power Action Update]
The power action update in (\ref{controlDPA}) in Algorithm \ref{PCSIDPA} has the \emph{multi-level water-filling} structure where the power is allocated according to the CSI but the water-level is adaptive  to the QSI indirectly via the per-flow fluid value function $J_k\left(Q_k \right)$.~\hfill\IEEEQED
\end{Remark}

\subsection{Simulation Results and Discussions}		\label{simulation}	
In this subsection, we  compare the delay performance gain of the proposed distributed power control scheme in Algorithm \ref{distCAeg1} for the interference networks example with the following three baseline schemes using numerical simulations.
\begin{itemize}
	\item \textbf{Baseline 1 [Orthogonal Transmission]:}  The transmissions among the $K$  Tx-Rx pairs are coordinated using TDMA.  At each  time slot, the Tx-Rx pair with the largest channel gain is select to transmit and the resulting power allocation  is adaptive to the CSI only.
 	\item \textbf{Baseline 2 [CSI Only  Scheme]:} CSI Only scheme solves the problem with the objective function given in (\ref{UtilityEg1}) replacing $J_k'\left(Q_k \right)$  with constant 1. The corresponding power control algorithm can be  obtained by replacing $J_k'\left(Q_k \right)$ with constant 1 in Algorithm \ref{distCAeg1} and the resulting power allocation is adaptive to the  CSI only.
	\item \textbf{Baseline 3 [Queue-Weighted  Throughput-Optimal (QWTO) Scheme\footnote{Baseline 3 is similar to the \emph{Modified Largest Weighted Delay First} algorithm \cite{DBPalg}  but with a modified objective function.}]:} QWTO scheme solves the problem with the objective function given in (\ref{UtilityEg1}) replacing  $J_k'\left(Q_k \right)$  with $Q_k$. The corresponding power control algorithm can be  obtained by replacing $J_k'\left(Q_k \right)$ with $Q_k$ in Algorithm \ref{distCAeg1} and the resulting power allocation is  adaptive to the  CSI and the QSI.
\end{itemize}

In the simulations, we consider a symmetric system with $K$ Tx-Rx pairs in the fast fading environment, where the microscopic fading coefficient and the channel noise are $\mathcal{CN}\left(0,1 \right)$ distributed. The direct channel long term path gain  is $L_{kk}=1$ for all $k \in \mathcal{K}$ and  the cross-channel path gain is $L_{kj}=0.1$ for all $k, j \in \mathcal{K}, k \neq j$ as in \cite{simtopo}. We consider Poisson packet arrival with average  arrival rate $\lambda_k$ (pkts/epoch). The packet size  is exponentially distributed  with mean size equal to  $30$K bits. The decision epoch duration $\tau$ is $5$ms. The total bandwidth is $10$MHz. Furthermore, $\gamma_k$ is the same  and  $\beta_k=1$ for all $k \in \mathcal{K}$.

Fig.~\ref{sim1} illustrates the average delay per pair  versus the average transmit SNR. The average delay of all the schemes decreases as the average transmit SNR increases. It can be observed that  there is significant performance gain of the proposed scheme compared with all the baselines.  It also verifies that the sum of the per-flow fluid value functions is a good approximator to the relative value function.

Fig.~\ref{sim3} illustrates the average delay per pair versus the traffic loading (average data arrival rate $\lambda_k$).  The proposed scheme achieves significant performance gain over all the baselines across a wide range of the input traffic loading. In addition, as  $\lambda_k$ increases, the performance gain of the proposed scheme also increases compared with all the baselines. This verifies Theorem \ref{RelashipVJ} and Lemma \ref{ErrorEg1}. Specifically, it is because as the traffic loading increases, the chance for the queue state at large values increases, which means that $J\left(\mathbf{q}; \mathbf{L} \right)$ becomes a good approximator for $V\left(\mathbf{Q} \right)$ according to Remark \ref{RelashipVJ}. Furthermore, the approximation error between $J\left(\mathbf{q}; \mathbf{L} \right)$ and $\sum_{k=1}^K J\left(q_k \right)$ also decreases according to Remark \ref{link_sim}. Therefore, the per-flow fluid value function approximation in (\ref{approxEg1}) becomes more accurate as $\lambda_k$ increases.

Fig.~\ref{sim4}  illustrates the average delay per pair versus the number of the Tx-Rx pairs.  The average delay of all the schemes increases as the number of the Tx-Rx pairs increases.  This is due to the increasing of the total interference for each Tx-Rx pair. It can  be observed that there is  significant performance gain of the proposed scheme compared with all the baselines across a wide range of the number of the Tx-Rx pairs.

\begin{figure}
  \centering
  \includegraphics[width=4in]{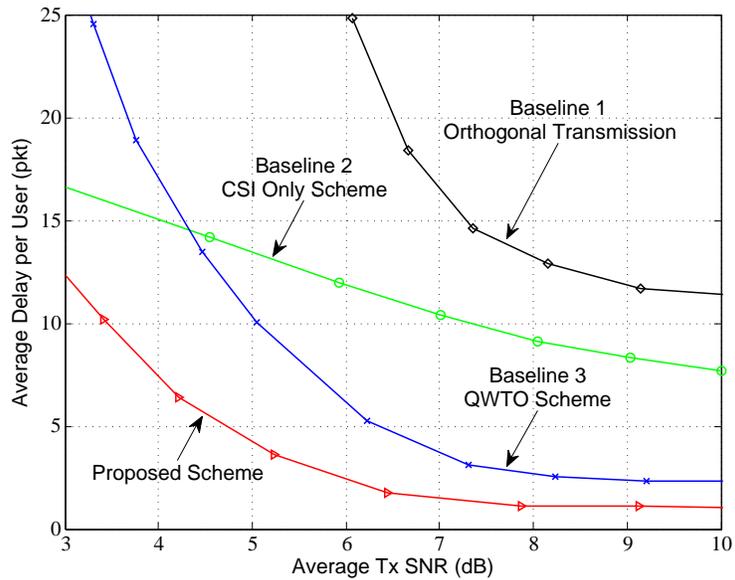}
  \caption{Average delay per pair  versus average transmit SNR. The number of the Tx-Rx pair is $K=5$ and the average data arrival rate is $\lambda_k =1$ pkt/epoch  for all $k \in \mathcal{K}$.}
  \label{sim1}
\end{figure}

\begin{figure}
  \centering
  \includegraphics[width=4in]{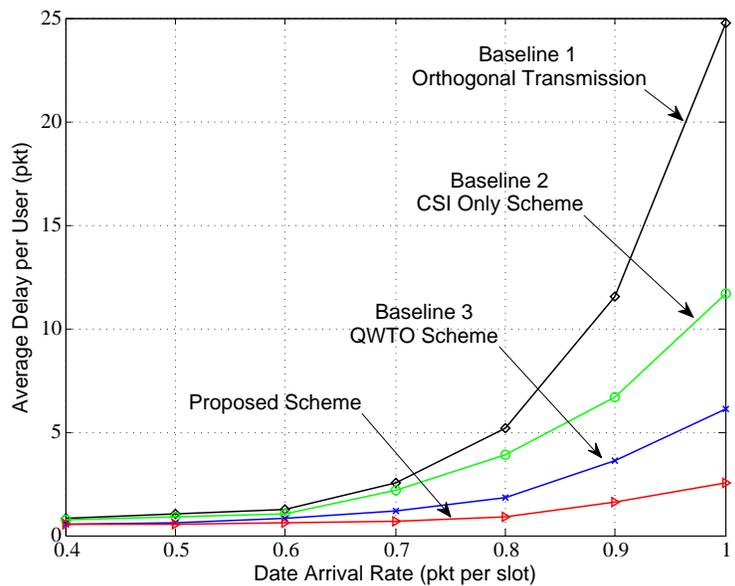}
  \caption{Average delay per pair  versus average data arrival rate at  average transmit SNR $6$ dB. The number of the Tx-Rx pair is $K=5$.}
  \label{sim3}
\end{figure}

\begin{figure}
  \centering
  \includegraphics[width=4in]{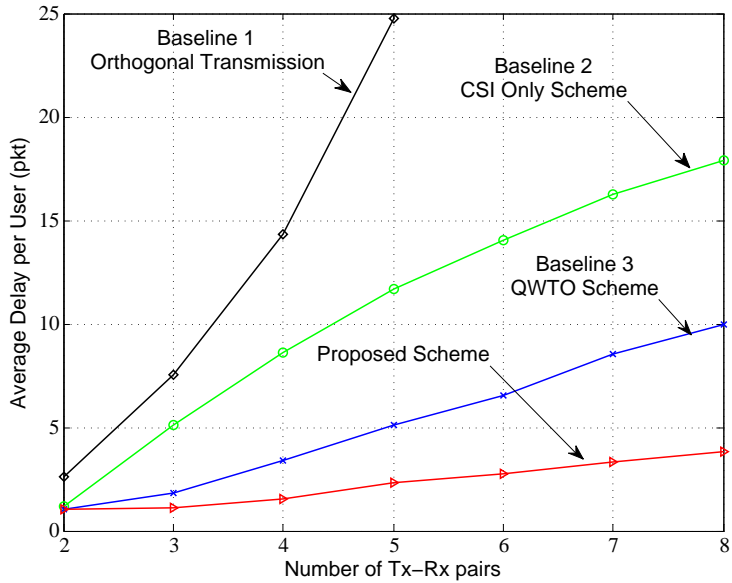}
  \caption{Average delay per pair versus number of Tx-Rx pairs at  average transmit SNR $6$ dB. The average data arrival rate is $\lambda_k=1$ pkt/epoch for all $k \in \mathcal{K}$.}
  \label{sim4}
\end{figure}

\section{Summary}
In this paper, we propose a framework of solving  the infinite horizon average cost problem for the weakly coupled multi-dimensional systems. To reduce the computational complexity,  we first introduce the VCTS and obtain the associated fluid value function to approximate the relative value function of the ODTS.  To further address the low complexity distributed solution requirement and the coupling challenge, we model the weakly coupled system as a perturbation of a decoupled base system. We then use the sum of the per-flow fluid value functions, which are obtained by solving the per-flow HJB equations under each sub-system, to approximate the fluid value function of the VCTS. Finally,  using per-flow fluid value function approximation, we obtain the distributed solution by solving an equivalent deterministic NUM problem. Moreover, we also elaborate on how to use this framework in the interference networks example. It is shown by simulations that the proposed distributed power control algorithm  has much better delay performance than the other three baseline schemes.

\section*{Appendix A: Proof of Corollary \ref{cor1}}	
We first write the state dynamics in (\ref{DiscreteDym}) in the following form: 
\[
\mathbf{x}_k(t+1) = \mathbf{x}_k\left(t\right) +  \left(\mathbf{f}_k\left(\mathbf{u}_k\left(t\right), \mathbf{u}_{-k}\left(t\right),\boldsymbol{\epsilon}_k\right) + \mathbf{z}_k\left(t\right)\right)\Delta
\]
Assume  $V\left( \mathbf{x}\right)$ is of class $\mathcal{C}^1$, we have the following Taylor expansion on $V\left( \mathbf{x}(t+1)\right)$ in (\ref{OrgBel}):
\[	
	\mathbb{E}\left[ V\left( \mathbf{x}(t+1)\right) \big| \mathbf{x}\left(t\right)=\mathbf{x}, \Omega\left(\mathbf{x}\right) \right]=V\left( \mathbf{x}\right)+\nabla_{\mathbf{x}}V \left(\mathbf{x} \right) \left[  \mathbf{f}\left( \Omega\left( \mathbf{x}\right), \boldsymbol{\epsilon}\right)+\overline{\mathbf{z}}  \right]^T \Delta + \mathcal{O}\left( \Delta^2\right)
\]
Hence, the Bellman equation in (\ref{OrgBel}) becomes:
\begin{align}	\notag 
	\theta^\ast \Delta  = \min_{ \Omega\left( \mathbf{x} \right)} \left[ c\left(\mathbf{x}, \Omega\left(\mathbf{x}\right)\right) \Delta +\nabla_{\mathbf{x}}V \left(\mathbf{x} \right) \left[  \mathbf{f}\left( \Omega\left( \mathbf{x}\right), \boldsymbol{\epsilon}\right)+\overline{\mathbf{z}}  \right]^T \Delta + \mathcal{O}\left( \Delta^2\right) \right]
\end{align}
Suppose $\left(\theta^\ast, V^\ast, \mathbf{u}^\ast \right)$ satisfies the Bellman equation in (\ref{OrgBel}), we have
\begin{align}
	-\theta^\ast   +  c\left(\mathbf{x}, \mathbf{u}^\ast \right)  +\nabla_{\mathbf{x}}V^\ast \left(\mathbf{x} \right) \left[  \mathbf{f}\left( \mathbf{u}^\ast , \boldsymbol{\epsilon}\right)+\overline{\mathbf{z}}  \right]^T  + \mathcal{O}\left( \Delta\right)&=0, \quad  	\forall \mathbf{x} \in \boldsymbol{\mathcal{X}} 	\label{fix11}	\\
	\nabla_{\mathbf{u}_k} c_k\left(\mathbf{x}_k, \mathbf{u}_k^\ast\right) + \nabla_{\mathbf{x}}V^\ast \left(\mathbf{x} \right) \left[ \nabla_{\mathbf{u}_k}  \mathbf{f} \left(\mathbf{u}^\ast, \boldsymbol{\epsilon}\right)+\overline{\mathbf{z}}  \right]^T+ \mathcal{O}\left( \Delta\right)&=\mathbf{0}, \quad \forall k\in \mathcal{K}	\label{fix12}
\end{align}
where $\nabla_{\mathbf{u}_k}  \mathbf{f} \left(\mathbf{u}, \boldsymbol{\epsilon}\right)=\left(\nabla_{\mathbf{u}_k} \mathbf{f}_1\left(\mathbf{u}_1, \mathbf{u}_{-1},\boldsymbol{\epsilon}_1 \right),\dots,\nabla_{\mathbf{u}_k} \mathbf{f}_K\left(\mathbf{u}_K, \mathbf{u}_{-K},\boldsymbol{\epsilon}_K \right)  \right)$. (\ref{fix11}) and (\ref{fix12}) can be expressed as a fixed point equation in $\left(\theta^\ast, V^\ast, \mathbf{u}^\ast \right)$:
\begin{align}
	\mathbf{F}\left( \theta^\ast, V^\ast, \mathbf{u}^\ast \right)= \mathbf{0}
\end{align}
Suppose $\big(\tilde{\theta}^\ast, V, \mathbf{u}\big)$ is the solution of the approximate Bellman equation in (\ref{simbelman}), we have
\begin{align}
	&-\tilde{\theta}^\ast   +  c\left(\mathbf{x}, \mathbf{u} \right)  +\nabla_{\mathbf{x}}V \left(\mathbf{x} \right) \left[  \mathbf{f}\left( \mathbf{u} , \boldsymbol{\epsilon}\right)+\overline{\mathbf{z}}  \right]^T=0, \quad  	\forall \mathbf{x} \in \boldsymbol{\mathcal{X}} 	\label{fix111}	\\
	&\nabla_{\mathbf{u}_k} c_k\left(\mathbf{x}_k, \mathbf{u}_k\right) + \nabla_{\mathbf{x}}V \left(\mathbf{x} \right) \left[ \nabla_{\mathbf{u}_k}  \mathbf{f} \left(\mathbf{u}, \boldsymbol{\epsilon}\right)+\overline{\mathbf{z}}  \right]^T=0, \quad \forall k\in \mathcal{K}	\label{fix121}
\end{align}
Comparing (\ref{fix111}) and (\ref{fix121}) with (\ref{fix11}) and (\ref{fix12}), $\big(\tilde{\theta}^\ast, V, \mathbf{u}\big)$ can be visualized as a solution of the  perturbed fixed point equation:
\begin{align}	\label{comp111a}
	\mathbf{F}\big( \tilde{\theta}^\ast, V, \mathbf{u} \big)=\mathcal{O}(\Delta)
\end{align}
Hence, we have $\theta^\ast=\tilde{\theta}^\ast+\delta_{\theta}$, $V^\ast(\mathbf{Q})=V(\mathbf{Q})+\delta_V$ and $\mathbf{u}^\ast=\mathbf{u}+\boldsymbol{\delta}_{\mathbf{u}}$ and $(\delta_{\theta}, \delta_V, \boldsymbol{\delta}_{\mathbf{u}})$ satisfies
\begin{align}	\label{interderr}
	\mathrm{d}\mathbf{F}\left( \theta^\ast, V^\ast, \mathbf{u}^\ast \right)=\frac{\partial \mathbf{F}\left( \theta^\ast, V^\ast, \mathbf{u}^\ast \right) }{\partial \theta }\delta_{\theta}+\frac{\partial \mathbf{F}\left( \theta^\ast, V^\ast, \mathbf{u}^\ast \right) }{\partial V }\delta_V+ \nabla_{\mathbf{u}}\mathbf{F}\left( \theta^\ast, V^\ast, \mathbf{u}^\ast \right) \boldsymbol{\delta}_{\mathbf{u}}
\end{align}
Comparing (\ref{comp111a}) with (\ref{fix111}), we have $\mathrm{d}\mathbf{F}\left( \theta^\ast, V^\ast, \mathbf{u}^\ast \right)=\mathcal{O}(\Delta)$. Hence, we have
\[
	\frac{\partial \mathbf{F}\left( \theta^\ast, V^\ast, \mathbf{u}^\ast \right) }{\partial \theta }\delta_{\theta}+\frac{\partial \mathbf{F}\left( \theta^\ast, V^\ast, \mathbf{u}^\ast \right) }{\partial V }\delta_V+ \nabla_{\mathbf{u}}\mathbf{F}\left( \theta^\ast, V^\ast, \mathbf{u}^\ast \right) \boldsymbol{\delta}_{\mathbf{u}}=\mathcal{O}(\Delta)
\]
Therefore, $(\delta_{\theta}, \delta_V, \boldsymbol{\delta}_{\mathbf{u}})=\mathcal{O}(\Delta)$.

\section*{Appendix B: Proof of Corollary \ref{orderoptlemma}}	
First, It can be observed that if ($c^{\infty}, \{ J\left(\mathbf{x} \right) \}$) satisfies the HJB equation in (\ref{cenHJB}), then it also satisfies the approximate HJB equation in (\ref{simbelman}). Second, if $J\left(\mathbf{x} \right)=\mathcal{O}\left(|\mathbf{x}|^n \right)$ is polynomial growth at order $n$, we have $\mathbb{E}^{\Omega^v}\left[J\left(\mathbf{x} \right) \right] < \infty$ for any admissible policy $\Omega^v$. Hence, $J\left(\mathbf{x} \right)$ satisfies the transversality condition of the approximate Bellman equation. Therefore, we have $\theta^\ast=\tilde{\theta}^\ast+\mathcal{O}\left(\Delta \right)=c^\infty+\mathcal{O}\left(\Delta \right)$.

\section*{Appendix A: Proof of Theorem \ref{RelashipVJ}}	
In the following proof, we first establish  three important equalities in (\ref{impotequ1}), (\ref{impoEqu}) and (\ref{impoEqu2}). We then prove Theorem \ref{RelashipVJ}  based on the three equalities.

First, we establish the following equality:
\begin{align}	
	\hspace{-0.28cm}\frac{1}{n^2}\mathbb{E} \left[ V\left(\mathbf{x}^{\ast}\left(\lfloor nT \rfloor;  n \mathbf{x}^n  \right)\right) \right]=\frac{1}{n^2} \Big[V\left( n \mathbf{x}^n \right)-\sum_{i=0}^{\lfloor nT \rfloor-1} \mathbb{E} \left[ \widetilde{c}\left(\mathbf{x}^{\ast} \left( i;  n \mathbf{x}^n   \right),\mathbf{u}^{\ast}\left(\mathbf{x}^{\ast} \left( i;  n \mathbf{x}^n  \right) \right) \right) \right]	\Big] + \mathcal{O}\left( \frac{1}{n} \right)	\label{impotequ1}
\end{align}
Here we define $\mathbf{x} \left(t;\mathbf{x}_0 \right)$ and $\overline{\mathbf{x}}(t;\mathbf{x}_0)$ to be the system states at time $t$ which evolve according to the dynamics in (\ref{DiscreteDym}) and (\ref{VCTS_dyn}), respectively with initial state  $\mathbf{x}_0$.  Let $N$ be the dimension of $\mathbf{x}$ and  let $\mathbf{u}^{\ast}$ be the optimal policy solving Problem \ref{IHAC_MDP}. Then, we write the  Bellman equation in (\ref{OrgBel}) in a vector form as: $\theta \mathbf{e}+ \mathbf{V}= \mathbf{c} \left(\mathbf{u}^{\ast}\right)+ \mathbf{P} \left( \mathbf{u}^{\ast} \right) \mathbf{V}$ where  $\mathbf{e}$ is a $N \times 1$ vector with each element being $1$, $\mathbf{P} \left( \mathbf{u}^{\ast} \right) $ is  a $N \times N$ transition matrix, $\mathbf{c} \left(\mathbf{u}^{\ast}\right)$ and  $\mathbf{V}$ are $N \times 1$ cost and value function vectors. We iterate the vector form Bellman equation as follows: 
\begin{align}
	\mathbf{P} \left( \mathbf{u}^{\ast} \right) \mathbf{V} &= \mathbf{V}- \left( \mathbf{c} \left(\mathbf{u}^{\ast}\right) -\theta \mathbf{e} \right)\notag \\
	\mathbf{P}^2 \left( \mathbf{u}^{\ast} \right) \mathbf{V} &= \mathbf{P} \left( \mathbf{u}^{\ast} \right) \mathbf{V}- \mathbf{P} \left( \mathbf{u}^{\ast} \right) \left( \mathbf{c} \left(\mathbf{u}^{\ast}\right) -\theta \mathbf{e} \right)		\notag \\
	& =   \mathbf{V}- \left( \mathbf{c} \left(\mathbf{u}^{\ast}\right) -\theta \mathbf{e} \right) - \mathbf{P} \left( \mathbf{u}^{\ast} \right) \left( \mathbf{c} \left(\mathbf{u}^{\ast}\right) -\theta \mathbf{e} \right)	\notag \\
	& \vdots	\notag \\
	\mathbf{P}^m \left(\mathbf{u}^{\ast} \right) \mathbf{V} &= \mathbf{V}-\sum_{i=0}^{m-1}\mathbf{P}^i \left( \mathbf{u}^{\ast} \right) \left(\mathbf{c}\left(\mathbf{u}^{\ast} \right)-\theta \mathbf{e} \right)	\label{iterativeequ}
\end{align}
Considering the row corresponding to a given system state $\mathbf{x}_0$, we have
\begin{equation}		\label{impotequu}
	  \mathbb{E} \left[ V\left(\mathbf{x}^{\ast}\left( m; \mathbf{x}_0\right)\right) \right]=V\left(\mathbf{x}_0\right)-\sum_{i=0}^{m-1} \left( \mathbb{E} \left[ c\left(\mathbf{x}^{\ast} \left( i; \mathbf{x}_0 \right),\mathbf{u}^{\ast}\left(\mathbf{x}^{\ast}\left(i; \mathbf{x}_0\right)  \right)\right) \right]-\theta \right)
\end{equation}
where $\mathbf{x}^{\ast}\left(t; \mathbf{x}_0 \right)$ is the system state under optimal policy $\mathbf{u}^{\ast} $ with initial state $\mathbf{x}_0$. Dividing $n^2$ on both size of (\ref{impotequu}), choosing $m= \lfloor nT \rfloor$ and $\mathbf{x}_0= n \mathbf{x}^n$, we have 
\begin{align}	
	\frac{1}{n^2}\mathbb{E} \left[ V\left(\mathbf{x}^{\ast}\left( \lfloor nT \rfloor; n \mathbf{x}^n \right)\right) \right]=\frac{1}{n^2} \Big[V\left(n \mathbf{x}^n \right)-\sum_{i=0}^{\lfloor nT \rfloor-1} \mathbb{E} \left[ c\left(\mathbf{x}^{\ast} \left( i; n \mathbf{x}^n \right),\mathbf{u}^{\ast}\left(\mathbf{x}^{\ast} \left( i; n \mathbf{x}^n \right) \right) \right) \right]	\Big]+\mathcal{O}\left( \frac{1}{n} \right)	\notag \\
	 =\frac{1}{n^2} \Big[V\left(n \mathbf{x}^n \right)-\sum_{i=0}^{\lfloor nT \rfloor-1} \mathbb{E} \left[\widetilde{ c}\left(\mathbf{x}^{\ast} \left( i; n \mathbf{x}^n \right),\mathbf{u}^{\ast}\left(\mathbf{x}^{\ast} \left( i; n \mathbf{x}^n \right) \right) \right) \right]\Big]+\mathcal{O}\left( \frac{1}{n} \right)	
\end{align}
where the first equality is due to $\frac{1}{n^2}\sum_{i=0}^{\lfloor nT \rfloor-1} \theta=\mathcal{O}\left(\frac{1}{n} \right) $ last equality is due to $\widetilde{c}\left(\overline{\mathbf{x}}, \mathbf{u}\right)=c\left(\overline{\mathbf{x}}, \mathbf{u}\right)-c^{\infty}$ and $\frac{1}{n^2}\sum_{i=0}^{\lfloor nT \rfloor-1} c^{\infty}=\left(\frac{1}{n} \right)  $.

Second, we establish the following equality which holds under any unichain stationary policy $\mathbf{u}$:  
\begin{align}	\label{impoEqu}
	& \frac{1}{n^2} \sum_{i=0}^{n T^n} \widetilde{c} \left( \mathbf{x} \left( i; n \mathbf{x}^n \right), \mathbf{u}\left(\mathbf{x} \left( i; n \mathbf{x}^n \right) \right)  \right)	\notag \\
	=&\int_0^{T^n} \widetilde{c}\left(  \overline{\mathbf{x}}^n \left(t;  \mathbf{x}^n\right), \mathbf{u}\left( \overline{\mathbf{x}}^n \left(t;  \mathbf{x}^n\right)  \right)\right) \mathrm{d} t 	 -      \int_0^{T^n} \sum_{k=1}^K \widetilde{g}_k\left( \mathbf{u}_k\left( \overline{\mathbf{x}}^n \left(t;  \mathbf{x}^n\right)  \right)\right) \mathrm{d} t 	+ \mathcal{O}\left( \frac{1}{n}\right)
\end{align} 
where $T^n=\frac{1}{n} \lfloor nT \rfloor$. Here we define a scaled process w.r.t. $\mathbf{x}\left(t\right)$ as 
\begin{equation}
	\overline{\mathbf{x}}^n(t;\mathbf{x}_0^n)=\frac{1}{n} \mathbf{x}(nt; n\mathbf{x}_0^n), \quad \text{where \ }  \mathbf{x}_0^n=  \frac{1}{n} \lfloor n\mathbf{x}_0 \rfloor	\label{scaledprocess}
\end{equation}
$\lfloor \mathbf{x} \rfloor$ is the floor function that maps each element of $\mathbf{x}$ to the integer not greater than it. According to \emph{Prop.3.2.3} of \cite{CompApp}, we have $\lim_{n \rightarrow \infty } \overline{\mathbf{x}}^n(t;\mathbf{x}_0^n) = \overline{\mathbf{x}}^{vt}(t;\mathbf{x}_0)$, where $\overline{\mathbf{x}}^{vt}$ is some fluid process w.r.t. $\mathbf{x}$.  In the fluid control problem in Problem \ref{fluid problem1}, for each initial state $\mathbf{x}_0 \in \boldsymbol{\mathcal{X}}$, there is a finite time horizon $T$ such that $\overline{\mathbf{x}}^{vt}(t; \mathbf{x}_0)=\mathbf{0}$ for all $t \geq T$ \cite{StateAggr_2}.  Furthermore, we can write the above convergence result based on the functional law of large numbers as $\overline{\mathbf{x}}^n(t;\mathbf{x}_0^n) = \overline{\mathbf{x}}^{vt}(t;\mathbf{x}_0) + \mathcal{O}\left(\sqrt{\frac{\log\log n}{n}} \right)$. Before proving (\ref{impoEqu}), we show the following lemma. 
\begin{Lemma}	\label{twoequ}
	For the continuously differentiable function $g_k$ in the cost function in (\ref{costfunc}) with $\left| g_k \left(\mathbf{u}_k \right) \right| < \infty$ $\left( \forall \mathbf{u}_k \in \mathcal{U}_k\right)$, there exist a finite constant $C$, such that for all $k$, $t$, we have
\begin{align}		
		\left|\int_t^{t+1} \widetilde{g}_k\left(\mathbf{u}_k\left(\widetilde{\mathbf{x}}\left(t; \mathbf{x}_0\right)\right)\right) \mathrm{d}s   -   \frac{1}{2} \left[ \widetilde{g}_k\left(\mathbf{u}_k\left(\mathbf{x}\left(t;\mathbf{x}_0\right)\right)\right) + \widetilde{g}_k\left(\mathbf{u}_k\left(\mathbf{x}\left(t+1;\mathbf{x}_0 \right)\right)\right)  \right]   \right| \leq C, \quad \forall \mathbf{x}_0 \in \boldsymbol{\mathcal{X}}   \label{propG1} 
\end{align}	
where $\widetilde{g}_k\left(\mathbf{u}_k\right) \triangleq g_k\left(\mathbf{u}_k\right) -\frac{c^{\infty}}{K}$,  $\widetilde{\mathbf{x}} \left(t;\mathbf{x}_0 \right)$ is a piecewise linear function and  satisfies   $\widetilde{\mathbf{x}} \left(t;\mathbf{x}_0 \right) = \mathbf{x} \left(t;\mathbf{x}_0 \right)$ for all $t \in \mathbb{Z}^{\ast}$.  Furthermore, for a given finite positive real number $\kappa$,  we have
\begin{align}
	\left|\widetilde{g}_k\left(\mathbf{u}_k\left(\kappa \mathbf{x}_0 \right)\right) - \kappa \widetilde{g}_k\left(\mathbf{u}_k\left(\mathbf{x}_0\right)\right) \right| = \mathcal{O}\left(\kappa\right), \quad \forall  \mathbf{x}_0 \in \boldsymbol{\mathcal{X}}	\label{propG2}
\end{align}~\hfill\IEEEQED
\end{Lemma}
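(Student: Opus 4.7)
The plan is to reduce both claims to a single elementary observation: $\widetilde{g}_k$ is uniformly bounded on the admissible action set. Since $g_k$ is continuously differentiable on the compact action space $\mathcal{U}_k$, the supremum $\widetilde{M}_k := \sup_{\mathbf{u}_k \in \mathcal{U}_k} |\widetilde{g}_k(\mathbf{u}_k)| = \sup_{\mathbf{u}_k \in \mathcal{U}_k} |g_k(\mathbf{u}_k) - c^\infty/K|$ is finite. Because any admissible control policy takes values in $\mathcal{U}_k$, this bound applies to $\widetilde{g}_k(\mathbf{u}_k(\mathbf{y}))$ for every state $\mathbf{y}$ that the system or its piecewise linear interpolation may visit, so the constants produced below are automatically uniform in $t$, in $\mathbf{x}_0$, and along every trajectory.

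For (\ref{propG1}), I would apply the triangle inequality term by term. The integrand $\widetilde{g}_k(\mathbf{u}_k(\widetilde{\mathbf{x}}(s;\mathbf{x}_0)))$ is pointwise bounded by $\widetilde{M}_k$, so integrating over an interval of unit length gives $\big|\int_t^{t+1} \widetilde{g}_k(\mathbf{u}_k(\widetilde{\mathbf{x}}(s;\mathbf{x}_0)))\,\mathrm{d}s\big| \leq \widetilde{M}_k$. The trapezoidal term $\tfrac12[\widetilde{g}_k(\mathbf{u}_k(\mathbf{x}(t;\mathbf{x}_0))) + \widetilde{g}_k(\mathbf{u}_k(\mathbf{x}(t+1;\mathbf{x}_0)))]$ is a convex combination of two numbers each of magnitude at most $\widetilde{M}_k$, hence is itself bounded by $\widetilde{M}_k$. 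Combining the two estimates via the triangle inequality produces the claimed bound with $C := 2\max_k \widetilde{M}_k < \infty$.

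For (\ref{propG2}), the same uniform bound yields $|\widetilde{g}_k(\mathbf{u}_k(\kappa \mathbf{x}_0)) - \kappa\,\widetilde{g}_k(\mathbf{u}_k(\mathbf{x}_0))| \leq \widetilde{M}_k + \kappa\,\widetilde{M}_k = (1+\kappa)\widetilde{M}_k$, which is $\mathcal{O}(\kappa)$ in the asymptotic sense adopted in the paper. The only conceptual obstacle—and it is mild—is noticing that the lemma asks only for an $O(1)$ constant in (\ref{propG1}) and an $O(\kappa)$ estimate in (\ref{propG2}), not the sharper $o(1)$ trapezoidal-rule bound or an $o(\kappa)$ refinement that one might reflexively attempt. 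With the weaker asserted bounds, no smoothness of the composition $\widetilde{g}_k \circ \mathbf{u}_k(\cdot)$ along trajectories and no quantitative estimate on the piecewise linear interpolant $\widetilde{\mathbf{x}}(\cdot;\mathbf{x}_0)$ are needed, so the argument collapses to compactness plus the triangle inequality.
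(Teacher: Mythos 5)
Your proposal is correct and is essentially the paper's own argument: the paper simply asserts that (\ref{propG1}) and (\ref{propG2}) follow from the boundedness of $\widetilde{g}_k$ together with the compactness of the action space $\mathcal{U}_k$, which is exactly the uniform bound $\widetilde{M}_k$ plus triangle-inequality reasoning you spell out. Your version is just a more explicit write-up of the same one-line compactness argument, with the same (deliberately loose) reading of the $\mathcal{O}(\kappa)$ bound in (\ref{propG2}) that the paper itself adopts.
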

Since $\mathbf{u}_k \in \mathcal{U}_k$ and  $\left| \widetilde{g}_k \left(\mathbf{u}_k \right) \right| < \infty$ $\left( \forall \mathbf{u}_k \in \mathcal{U}_k\right)$, the two inequalities in (\ref{propG1}) and (\ref{propG2}) can be easily verified  using the compactness property of the sub-system action space $\mathcal{U}_k$.  We establish the  proof of  (\ref{impoEqu}) in the following two steps. 
\begin{enumerate} [1)]
	\item  First, we prove the following equality:
	\begin{align}	\label{step111}
	 \frac{1}{n^2}\int_0^{nT^n} \widetilde{c}\left(\widetilde{\mathbf{x}}\left(t;n\mathbf{x}^n\right), \mathbf{u}\left( \widetilde{\mathbf{x}}\left(t;n\mathbf{x}^n\right)\right) \right) \mathrm{d}t= \frac{1}{n^2} \sum_{i=0}^{nT^n} \widetilde{c} \left( \mathbf{x} \left(i;n\mathbf{x}^n\right), \mathbf{u}\left(\mathbf{x} \left(i;n\mathbf{x}^n\right) \right) \right)+\mathcal{O}\left(\frac{1}{n} \right)
	 \end{align}
We calculate $\int_0^{nT^n} \widetilde{c}\left(\widetilde{\mathbf{x}}\left(t; n\mathbf{x}^n\right), \mathbf{u}\left( \widetilde{\mathbf{x}}\left(t; n\mathbf{x}^n\right)\right) \right) \mathrm{d}t$ and $\sum_{i=0}^{n T^n} \widetilde{c} \left( \mathbf{x} \left(i ; n\mathbf{x}^n\right), \mathbf{u}\left(\mathbf{x}\left(i ; n\mathbf{x}^n\right) \right) \right)$ in (\ref{step111}) as follows
	\begin{align}	
		&\int_0^{nT^n} \widetilde{c}\left(\widetilde{\mathbf{x}}\left(t; n\mathbf{x}^n\right), \mathbf{u}\left( \widetilde{\mathbf{x}}\left(t; n\mathbf{x}^n\right)\right) \right) \mathrm{d}t	\notag \\
		=& \sum_{i=0}^{nT^n-1} \sum_{k=1}^K\frac{1}{2} \left[ \alpha_k \|\mathbf{x}_k(t; n\mathbf{x}^n)  \|_{\mathbf{v}_k,1}+\alpha_k \|\mathbf{x}_k(t+1; n\mathbf{x}^n)  \|_{\mathbf{v}_k,1}\right] + \int_0^{nT^n} \sum_{k=1}^K   \widetilde{g}_k\left(\mathbf{u}_k\left(\widetilde{\mathbf{x}}\left(t; n\mathbf{x}^n\right)\right)\right) \mathrm{d}t	\label{impp111}  \\
		&\sum_{i=0}^{n T^n} \widetilde{c} \left( \mathbf{x} \left(i ; n\mathbf{x}^n\right), \mathbf{u}\left(\mathbf{x}\left(i ; n\mathbf{x}^n\right) \right) \right)	\notag \\
		=& \sum_{i=0}^{nT^n-1} \sum_{k=1}^K\frac{1}{2} \left[ \alpha_k \|\mathbf{x}_k(t; n\mathbf{x}^n)  \|_{\mathbf{v}_k,1}+\alpha_k \|\mathbf{x}_k(t+1; n\mathbf{x}^n)  \|_{\mathbf{v}_k,1} \right. \notag \\
		& \hspace{4cm} \left. +\widetilde{g}_k\left(\mathbf{u}_k\left(\mathbf{x}\left(t;n\mathbf{x}^n\right)\right)\right)+\widetilde{g}_k\left(\mathbf{u}_k\left(\mathbf{x}\left(t+1;n\mathbf{x}^n\right)\right)\right)	  \right]+ n^2 \mathcal{E}^n	\label{impp222}
	\end{align}
where we denote  $\mathcal{E}^n=\frac{1}{2n^2}\left[ \widetilde{c}\left(\mathbf{x}\left(0;n\mathbf{x}^n\right), \mathbf{u}\left( \mathbf{x}\left(0;n\mathbf{x}^n\right) \right)\right)+ \widetilde{c}\left(\mathbf{x}\left(nT^n;n\mathbf{x}^n\right), \mathbf{u} \left( \mathbf{x}\left(nT^n;n\mathbf{x}^n\right)\right)\right) \right]$. Then based on (\ref{impp111}) and (\ref{impp222}), we have
	\begin{align}
		         &\frac{\left|(\ref{impp111})- (\ref{impp222}) \right|}{n^2}=\left| \frac{1}{n^2}\int_0^{nT^n} \widetilde{c}\left(\widetilde{\mathbf{x}}\left(t; n\mathbf{x}^n\right), \mathbf{u}\left( \widetilde{\mathbf{x}}\left(t; n\mathbf{x}^n\right)\right) \right) \mathrm{d}t- \frac{1}{n^2} \sum_{i=0}^{n T^n}\widetilde{c} \left( \mathbf{x} \left(i ; n\mathbf{x}^n\right), \mathbf{u}\left(\mathbf{x}\left(i ; n\mathbf{x}^n\right) \right) \right)\right| \notag \\
		= & \frac{1}{n^2} \left| \int_0^{nT^n} \sum_{k=1}^K  \widetilde{g}_k\left(\mathbf{u}_k\left(\widetilde{\mathbf{x}}\left(t; n\mathbf{x}^n\right)\right)\right)	 \right. \left. - \sum_{i=0}^{nT^n-1} \sum_{k=1}^K\frac{1}{2} \left[\widetilde{g}_k\left(\mathbf{u}_k\left(\mathbf{x}\left(t;n\mathbf{x}^n\right)\right)\right)+\widetilde{g}_k\left(\mathbf{u}_k\left(\mathbf{x}\left(t+1;n\mathbf{x}^n\right)\right)\right)  \right]+ n^2 \mathcal{E}^n\right|  \notag \\
		  \overset{(a)}= &  \mathcal{O}\left(\frac{1}{n}\right) + \left|\mathcal{E}^n\right|		\label{firsteq111}
	\end{align}
	where $(a)$ is due to the triangle inequality and (\ref{propG1}). Next, we prove $\mathcal{E}^n  =\mathcal{O}\left(\frac{1}{n}\right) $ as follows
	\begin{align}
		&  \mathcal{E}^n  =\frac{1}{2n^2} \sum_{k=1}^K \left[ \left( \alpha_k \|\mathbf{x}_k\left(0;n\mathbf{x}^n\right)  \|_{\mathbf{v}_k,1} +  \widetilde{g}_k \left( \mathbf{u}_k\left(\mathbf{x}\left(0;n\mathbf{x}^n\right) \right)  \right) \right)  \right. \notag \\
		& \left. + \left(  \alpha_k \|\mathbf{x}_k\left(nT^n;n\mathbf{x}^n\right)  \|_{\mathbf{v}_k,1} +  \widetilde{g}_k \left( \mathbf{u}_k\left(\mathbf{x}\left(nT^n;n\mathbf{x}^n\right) \right)  \right) \right)  \right]\overset{(b)} =  \frac{\mathcal{O}\left(n \right)}{2n^2} = \mathcal{O}\left(\frac{1}{n} \right)	\label{firsteq222}
	\end{align}
where $(b)$ is due to $\| \mathbf{x}_k\left(0;n\mathbf{x}^n\right) \|_{\mathbf{v}_k,1}=\|n\mathbf{x}_k^n \|_{\mathbf{v}_k,1}=\mathcal{O}\left(n \right)$, $\|\mathbf{x}_k\left(nT^n;n\mathbf{x}^n\right)  \|_{\mathbf{v}_k,1}=\|n\mathbf{x}_k^n + \sum_{i=1}^{nT^n-1}  \left(\mathbf{f}_k\left(\mathbf{u}_k(i), \mathbf{u}_{-k}(i),\boldsymbol{\epsilon}\right) + \mathbf{z}_k(i)  \right)   \|_{\mathbf{v}_k,1} = \mathcal{O}\left(n \right)$, and $\\ \lim_{n \rightarrow \infty} \frac{1}{2 n^2} \left[g_k \left( \mathbf{u}_k\left(\mathbf{x}\left(0;n\mathbf{x}^n\right) \right)  \right) \right.  \left.+  g_k \left( \mathbf{u}_k\left(\mathbf{x}\left(nT^n;n\mathbf{x}^n\right) \right)  \right)  \right]=0$.  

Combining (\ref{firsteq111}) and (\ref{firsteq222}), we have (\ref{step111}).
	
	\item Second, we prove the following equality:
		\begin{align}	\label{step222}
			& \frac{1}{n^2} \int_0^{nT^n} \widetilde{c}\left(\widetilde{\mathbf{x}}\left(t; n \mathbf{x}^n\right), \mathbf{u}\left(\widetilde{\mathbf{x}} \left(t; n \mathbf{x}^n\right) \right)\right) \mathrm{d}t \notag \\
			=& \int_0^{T^n} \widetilde{c}\left(  \overline{\mathbf{x}}^n \left(t;  \mathbf{x}^n\right), \mathbf{u}\left( \overline{\mathbf{x}}^n \left(t;  \mathbf{x}^n\right)  \right)\right) \mathrm{d} t  -  \int_0^{T^n} \sum_{k=1}^K \widetilde{g}_k\left( \mathbf{u}_k\left( \overline{\mathbf{x}}^n \left(t;  \mathbf{x}^n\right)  \right)\right) \mathrm{d} t + \mathcal{O}\left( \frac{1}{n}\right)
		\end{align}
	We have
	\begin{align}
		 &\frac{1}{n^2} \int_0^{nT^n} \widetilde{c}\left(\widetilde{\mathbf{x}}\left(t; n \mathbf{x}^n\right), \mathbf{u}\left(\widetilde{\mathbf{x}} \left(t; n \mathbf{x}^n\right) \right)\right) \mathrm{d}t \overset{(d)}= \frac{1}{n} \int_0^{T^n} \widetilde{c}\left(\widetilde{\mathbf{x}}\left(nt; n \mathbf{x}^n\right), \mathbf{u}\left( \widetilde{ \mathbf{x}} \left(nt; n \mathbf{x}^n\right)  \right)\right) \mathrm{d} t \notag \\	
		= & \frac{1}{n} \int_0^{T^n} \widetilde{c}\left( n \overline{\mathbf{x}}^n \left(t;  \mathbf{x}^n\right), \mathbf{u}\left( n \overline{\mathbf{x}}^n \left(t;  \mathbf{x}^n\right) \right)\right) \mathrm{d} t	\notag \\
		\overset{(e)}=&  \int_0^{T^n} \widetilde{c}\left(  \overline{\mathbf{x}}^n \left(t;  \mathbf{x}^n\right), \mathbf{u}\left( \overline{\mathbf{x}}^n \left(t;  \mathbf{x}^n\right)  \right)\right) \mathrm{d} t - \int_0^{T^n} \sum_{k=1}^K \widetilde{g}_k\left( \mathbf{u}_k\left( \overline{\mathbf{x}}^n \left(t;  \mathbf{x}^n\right)  \right)\right) \mathrm{d} t+ \mathcal{O}\left( \frac{1}{n}\right)	\label{th1step2}
	\end{align}	
where $(d)$ is due to the change of variable from $t$ to $nt$ and $(e)$ is due to $\frac{1}{n} \int_0^{T^n} \sum_{k=1}^K\widetilde{g}_k\left( \mathbf{u}\left( n \overline{\mathbf{x}}^n \left(t;  \mathbf{x}^n\right) \right)\right) \mathrm{d} t \\=\mathcal{O}\left( \frac{1}{n}\right)$. This proves (\ref{step222}). 
\end{enumerate}
Combining (\ref{step111}) and (\ref{step222}), we can prove (\ref{impoEqu}).

Third, we establish the following equality:
\begin{align}	\label{impoEqu2}
	&\int_0^{T} \widetilde{c}\left(\overline{\mathbf{x}}^{\ast} \left(t; \mathbf{x} \right), \overline{\mathbf{u}}^{\ast} \left(\overline{\mathbf{x}}^{\ast} \left(t; \mathbf{x} \right) \right) \right)\mathrm{d} t -  \lim_{n \rightarrow \infty}\int_0^{T} \sum_{k=1}^K \widetilde{g}_k\left(\overline{\mathbf{u}}_k^{\ast} \left(\overline{\mathbf{x}}^{\ast} \left(t; \mathbf{x} \right) \right) \right) \mathrm{d} t 		\notag \\
	=&\frac{1}{n^2}\int_0^{T} \widetilde{c}\big(\overline{\mathbf{x}}^{ \dagger} \left(t; n\mathbf{x} \right),  \overline{\mathbf{u}}^{\dagger}\big(\overline{\mathbf{x}}^{\dagger} \left(t; n\mathbf{x} \right)\big) \big)\mathrm{d} t  + \mathcal{O}\left(\sqrt{\frac{\log \log n}{n}} \right)
\end{align} 
where $\overline{\mathbf{u}}^{\ast}$ is the optimal control trajectory solving the fluid control problem when the initial state is $\mathbf{x}$ with corresponding state trajectory $\overline{\mathbf{x}}^{\ast} $, while $\overline{\mathbf{u}}^{\dagger}$  is the optimal control trajectory when the initial state is $n\mathbf{x}$ with corresponding state trajectory $\overline{\mathbf{x}}^{\dagger} $. We define a scaled process w.r.t. $\overline{\mathbf{x}} \left(t\right)$ as follows:
\begin{align}
	\overline{\mathbf{x}}^{(n)}(t; \overline{\mathbf{x}}_0)=\frac{1}{n} \overline{\mathbf{x}}(nt; n\overline{\mathbf{x}}_0)
\end{align}
We establish the  proof of  (\ref{impoEqu2}) in the following two steps:
\begin{enumerate} [1)]
\item  First, we prove the following inequality:
	\begin{align}	\label{step1111}
	& \int_0^{T} \widetilde{c}\left(\overline{\mathbf{x}}^{\ast} \left(t; \mathbf{x} \right), \overline{\mathbf{u}}^{\ast} \left(\overline{\mathbf{x}}^{\ast} \left(t; \mathbf{x} \right) \right) \right)\mathrm{d} t - \int_0^{T} \sum_{k=1}^K \widetilde{g}_k\left(\overline{\mathbf{u}}_k^{\ast} \left(\overline{\mathbf{x}}^{\ast} \left(t; \mathbf{x} \right) \right) \right) \mathrm{d} t 	\notag \\
	 \leq & \frac{1}{n^2}\int_0^{T} \widetilde{c}\big(\overline{\mathbf{x}}^{ \dagger} \left(t; n\mathbf{x} \right), \overline{\mathbf{u}}^{\dagger}\big(\overline{\mathbf{x}}^{\dagger} \left(t; n\mathbf{x} \right) \big)  \big)\mathrm{d} t+ \mathcal{O}\left(\sqrt{\frac{\log \log n}{n}} \right)
	\end{align}
	We have
	\begin{align}
		& \int_0^{T} \widetilde{c}\left(\overline{\mathbf{x}}^{\ast} \left(t; \mathbf{x} \right), \overline{\mathbf{u}}^{\ast} \left(\overline{\mathbf{x}}^{\ast} \left(t; \mathbf{x} \right) \right) \right)\mathrm{d} t -  \sum_{k=1}^K \widetilde{g}_k\left(\overline{\mathbf{u}}_k^{\ast} \left(\overline{\mathbf{x}}^{\ast} \left(t; \mathbf{x} \right) \right) \right) \mathrm{d} t 	\notag \\		
		\overset{(f)}\leq & \int_0^{T} \widetilde{c}\big(\overline{\mathbf{x}}^{\dagger} \left(t; \mathbf{x} \right), \overline{\mathbf{u}}^{\dagger} \big(\overline{\mathbf{x}}^{\dagger} \left(t; \mathbf{x} \right) \big) \big)\mathrm{d} t -  \int_0^{T}\sum_{k=1}^K \widetilde{g}_k\big(\overline{\mathbf{u}}_k^{\dagger} \big(\overline{\mathbf{x}}^{\dagger} \left(t; \mathbf{x} \right) \big) \big) + \mathcal{E}(t) \ \mathrm{d} t\notag \\	
		\overset{(g)}  =& \frac{1}{n} \int_0^{T^n} \widetilde{c}\big( n \overline{\mathbf{x}}^{(n) \dagger} \left(t;  \mathbf{x}^n\right), \overline{\mathbf{u}}^{\dagger} \big( n \overline{\mathbf{x}}^{(n) \dagger} \left(t;  \mathbf{x}^n\right) \big)\big) + n \mathcal{E}(t) \ \mathrm{d} t + \mathcal{O}\left(\sqrt{\frac{\log \log n}{n}} \right)	\notag \\
	\overset{(h)} = & \frac{1}{n^2} \int_0^{nT^n} \widetilde{c}\big( n \overline{\mathbf{x}}^{(n) \dagger} \left(t/n;  \mathbf{x}^n\right), \overline{\mathbf{u}}^{\dagger} \big( n \overline{\mathbf{x}}^{(n) \dagger} \left(t/n;  \mathbf{x}^n\right) \big)\big) + \mathcal{O}\left(\sqrt{\frac{\log \log n}{n}} \right)	\notag  \\
	  =&  \frac{1}{n^2} \int_0^{nT^n} \widetilde{c}\big(\overline{\mathbf{x}}^{\dagger} \left(t;  n\mathbf{x}^n\right), \overline{\mathbf{u}}^{\dagger}\big(\mathbf{x}^{\dagger} \left(t;  n\mathbf{x}^n\right)\big)\big)  + \mathcal{O}\left(\sqrt{\frac{\log \log n}{n}} \right)\notag \\
	 	= & \frac{1}{n^2}\int_0^{T} \widetilde{c}\big(\overline{\mathbf{x}}^{ \dagger} \left(t; n\mathbf{x} \right), \overline{\mathbf{u}}^{\dagger}\big(\overline{\mathbf{x}}^{\dagger} \left(t; n\mathbf{x} \right) \big)  \big)\mathrm{d} t+ \mathcal{O}\left(\sqrt{\frac{\log \log n}{n}} \right)
	\end{align}
	where $\mathcal{E}(t)=\sum_{k=1}^K \widetilde{g}_k\left(\overline{\mathbf{u}}_k^{\ast} \left(\overline{\mathbf{x}}^{\ast} \left(t; \mathbf{x} \right) \right) \right) \mathrm{d} t - \sum_{k=1}^K \widetilde{g}_k\big(\overline{\mathbf{u}}_k^{\dagger} \left(\overline{\mathbf{x}}^{\dagger} \left(t; \mathbf{x} \right) \right) \big)$, $(g)$ is due to $\overline{\mathbf{x}}^{(n) \dagger} \left(t;  \mathbf{x}^n\right)=\overline{\mathbf{x}}^{\dagger} \left(t; \mathbf{x} \right)+\mathcal{O}\left(\sqrt{\frac{\log \log n}{n}} \right) $, $\frac{1}{n} \int_0^{T^n} \sum_{k=1}^K\widetilde{g}_k\big(\overline{\mathbf{u}}_k^{\dagger} \big( n \overline{\mathbf{x}}^{(n) \dagger} \left(t;  \mathbf{x}^n\right) \big)\big)  \ \mathrm{d} t=\mathcal{O}\left(\frac{1}{n} \right)$ and $\mathcal{O}\left(\sqrt{\frac{\log \log n}{n}} \right) +\mathcal{O}\left(\frac{1}{n} \right)=\mathcal{O}\left(\sqrt{\frac{\log \log n}{n}} \right)  $, and $(h)$ is due to the fact that $\mathbf{u}^{\ast}$ achieves the optimal total cost when initial state is $\mathbf{x}$, $(g)$ is due to the change of variable from $t$ to $\frac{t}{n}$ and  $\frac{1}{n^2} \int_0^{nT^n}  n\mathcal{E}(t/n) \mathrm{d} t =\frac{1}{n} \int_0^{T}  \mathcal{E}(t/n) \mathrm{d} t =\mathcal{O}\left(\frac{1}{n}\right)$.
\item Second, we prove the following inequality:
	\begin{align}	\label{step2222}
	& \int_0^{T} \widetilde{c}\left(\overline{\mathbf{x}}^{\ast} \left(t; \mathbf{x} \right), \overline{\mathbf{u}}^{\ast} \left(\overline{\mathbf{x}}^{\ast} \left(t; \mathbf{x} \right) \right) \right)\mathrm{d} t - \int_0^{T} \sum_{k=1}^K \widetilde{g}_k\left(\overline{\mathbf{u}}_k^{\ast} \left(\overline{\mathbf{x}}^{\ast} \left(t; \mathbf{x} \right) \right) \right) \mathrm{d} t 	\notag \\
	 \geq & \frac{1}{n^2}\int_0^{T} \widetilde{c}\big(\overline{\mathbf{x}}^{ \dagger} \left(t; n\mathbf{x} \right), \overline{\mathbf{u}}^{\dagger}\big(\overline{\mathbf{x}}^{\dagger} \left(t; n\mathbf{x} \right) \big)  \big)\mathrm{d} t	+ \mathcal{O}\left(\sqrt{\frac{\log \log n}{n}} \right)
	 \end{align}
	We have
\begin{align}
	& \frac{1}{n^2}\int_0^{T} \widetilde{c}\big(\overline{\mathbf{x}}^{ \dagger} \left(t; n\mathbf{x} \right), \overline{\mathbf{u}}^{\dagger}\big(\overline{\mathbf{x}}^{\dagger} \left(t; n\mathbf{x} \right) \big)  \big)\mathrm{d} t \overset{(i)}\leq \frac{1}{n^2}\int_0^{T} \widetilde{c}\big(\overline{\mathbf{x}}^{ \ast} \left(t; n\mathbf{x}^n \right), \overline{\mathbf{u}}^{\ast}\big(\overline{\mathbf{x}}^{\ast} \left(t; n\mathbf{x}^n \right) \big)  \big)\mathrm{d} t	\notag \\
	\overset{(j)}=& \frac{1}{n}\int_0^{T} \widetilde{c}\big(\frac{1}{n}\overline{\mathbf{x}}^{ \ast} \left(t; n\mathbf{x}^n \right), \overline{\mathbf{u}}^{\ast}\big(\frac{1}{n}\overline{\mathbf{x}}^{\ast} \left(t; n\mathbf{x}^n \right) \big)  \big)\mathrm{d} t + \mathcal{O}\left(\frac{1}{n} \right)	\notag \\
	\overset{(k)}= &\frac{1}{n}\int_0^{T} \widetilde{c}\big(\overline{\mathbf{x}}^{(n) \ast} \left(t/n; \mathbf{x}^n \right), \overline{\mathbf{u}}^{\ast}\big(\overline{\mathbf{x}}^{(n)\ast} \left(t/n; \mathbf{x}^n \right) \big)  \big)- \sum_{k=1}^K \widetilde{g}_k\big(\overline{\mathbf{u}}_k^{\ast}\big( \overline{\mathbf{x}}^{(n) \ast} \left(t/n;  \mathbf{x}^n\right)  \big)\big) \mathrm{d} t + \mathcal{O}\left(\sqrt{\frac{\log \log n}{n}} \right)\notag \\
	\overset{(l)} \leq & \int_0^{T^n} \widetilde{c}\big(  \overline{\mathbf{x}}^{(n) \ast} \left(t;  \mathbf{x}^{n}\right), \overline{\mathbf{u}}^{\ast}\big( \overline{\mathbf{x}}^{\ast} \left(t;  \mathbf{x}^n\right)  \big)\big) \mathrm{d} t - \int_0^{T^n} \sum_{k=1}^K \widetilde{g}_k\big(\overline{\mathbf{u}}_k^{\ast}\big( \overline{\mathbf{x}}^{(n) \ast} \left(t;  \mathbf{x}^n\right)  \big)\big) \mathrm{d} t 	+ \mathcal{O}\left(\sqrt{\frac{\log \log n}{n}} \right)\notag \\
	=& \int_0^{T} \widetilde{c}\left(\overline{\mathbf{x}}^{\ast} \left(t; \mathbf{x} \right), \overline{\mathbf{u}}^{\ast} \left(\overline{\mathbf{x}}^{\ast} \left(t; \mathbf{x} \right) \right) \right)\mathrm{d} t - \int_0^{T} \sum_{k=1}^K \widetilde{g}_k\left(\overline{\mathbf{u}}_k^{\ast} \left(\overline{\mathbf{x}}^{\ast} \left(t; \mathbf{x} \right) \right) \right) \mathrm{d} t + \mathcal{O}\left(\sqrt{\frac{\log \log n}{n}} \right)
\end{align}
where $(i)$ is due to the fact that $\mathbf{u}^{\dagger}$ achieves the optimal total cost when initial state is $n\mathbf{x}$, $(j)$ is due to $\frac{1}{n^2}\int_0^{T} \sum_{k=1}^K\widetilde{g}_k\big(\overline{\mathbf{u}}_k^{\ast}\big(\overline{\mathbf{x}}^{\ast} \left(t; n\mathbf{x}^n \right) \big)  \big)\mathrm{d} t - \frac{1}{n}\int_0^{T} \sum_{k=1}^K \widetilde{g}_k\big(\overline{\mathbf{u}}_k^{\ast}\big(\frac{1}{n}\overline{\mathbf{x}}^{\ast} \left(t; n\mathbf{x}^n \right) \big)  \big)\mathrm{d} t  =\mathcal{O}\left(\frac{1}{n}\right)$, $k$ is due to $\frac{1}{n}\int_0^{T} \sum_{k=1}^K \widetilde{g}_k\big(\overline{\mathbf{u}}_k^{\ast}\big( \overline{\mathbf{x}}^{(n) \ast} \left(t/n;  \mathbf{x}^n\right)  \big)\big) \mathrm{d} t=\mathcal{O}\left( \frac{1}{n}\right)$, $(l)$ is due to the change of variable from $t$ to $nt$.
\end{enumerate}
Combining (\ref{step1111}) and (\ref{step2222}), we can prove (\ref{impoEqu2}).   
 
Finally, we prove  Theorem \ref{RelashipVJ}  based on the three equalities  in (\ref{impotequ1}), (\ref{impoEqu}) and (\ref{impoEqu2}). We first prove the following inequality:   $\frac{1}{n^2}V(n \mathbf{x} ) - \frac{1}{n^2}J(n\mathbf{x})\geq\mathcal{O}\left(\sqrt{\frac{\log \log n}{n}} \right)$. Specifically, it is proves as
\begin{align}
	&\frac{1}{n^2} V\left( n \mathbf{x}^n \right) \overset{(q)}=\frac{1}{n^2}\mathbb{E} \Big[ V\left(\mathbf{x}^{\ast}\left(\lfloor nT \rfloor;  n \mathbf{x}^n  \right)\right) +\sum_{i=0}^{\lfloor nT \rfloor-1} \mathbb{E} \left[ \widetilde{c}\left(\mathbf{x}^{\ast} \left( i;  n \mathbf{x}^n   \right),\mathbf{u}^{\ast}\left(\mathbf{x}^{\ast} \left( i;  n \mathbf{x}^n  \right) \right) \right) \right] \Big]+ \mathcal{O}\left(\frac{1}{n} \right)	\notag \\
	\overset{(r)}\geq & \int_0^{T^n} \widetilde{c}\left(\overline{\mathbf{x}}^{n\ast} \left( t; \mathbf{x}^n \right), \mathbf{u}\left(\overline{\mathbf{x}}^{n \ast} \left( t; \mathbf{x}^n \right) \right) \right)\mathrm{d} t -  \lim_{n \rightarrow \infty}\int_0^{T^n} \sum_{k=1}^K \widetilde{g}_k\left( \mathbf{u}\left( \overline{\mathbf{x}}^{n\ast} \left(t;  \mathbf{x}^n\right)  \right)\right) \mathrm{d} t+ \mathcal{O}\left(\frac{1}{n} \right)	\notag \\
	\overset{(s)}=& \frac{1}{n^2}\int_0^{T} \widetilde{c}\left(\overline{\mathbf{x}}^{dt \ast} \left(t; n\mathbf{x} \right),  \overline{\mathbf{u}}^{dt \ast}\left(\overline{\mathbf{x}}^{dt\ast} \left(t; n\mathbf{x} \right)\right) \right)\mathrm{d} t + \mathcal{O}\left(\sqrt{\frac{\log \log n}{n}} \right)	\notag \\
	\overset{(t)}\geq& \frac{1}{n^2} \inf_{\overline{\mathbf{x}}} \int_0^{T} \widetilde{c}\left(\overline{\mathbf{x}} \left(t; n\mathbf{x} \right),  \overline{\mathbf{u}}\left(\overline{\mathbf{x}} \left(t; n\mathbf{x} \right)\right) \right)\mathrm{d} t = \frac{1}{n^2} J\left( n \mathbf{x} \right)	+ \mathcal{O}\left(\sqrt{\frac{\log \log n}{n}} \right)\label{lastimpo1}
\end{align}
where $(q)$ is due to (\ref{impotequ1}), $(r)$ is due to the positive property of $V$ and (\ref{impoEqu}), (s) is due to  \\ $\int_0^{T^n} \widetilde{c}\left(\overline{\mathbf{x}}^{n\ast} \left( t; \mathbf{x}^n \right), \mathbf{u}\left(\overline{\mathbf{x}}^{n \ast} \left( t; \mathbf{x}^n \right) \right) \right)-  \sum_{k=1}^K \widetilde{g}_k\left( \mathbf{u}\left( \overline{\mathbf{x}}^{n \ast} \left(t;  \mathbf{x}^n\right)  \right)\right) \mathrm{d} t \\ = \frac{1}{n}\int_0^{T^n} \widetilde{c}\big( n \overline{\mathbf{x}}^{n, dt \ast} \left(t;  \mathbf{x}^n\right), \overline{\mathbf{u}}^{dt\ast} \big( n \overline{\mathbf{x}}^{n,dt\ast} \left(t;  \mathbf{x}^n\right) \big)\big) \mathrm{d} t+\mathcal{O}\left(\sqrt{\frac{\log \log n}{n}} \right)\\
=\frac{1}{n}\int_0^{T^n} \widetilde{c}\big(  \overline{\mathbf{x}}^{n, dt \ast} \left(nt;  n\mathbf{x}^n\right), \overline{\mathbf{u}}^{dt\ast} \big(  \overline{\mathbf{x}}^{n,dt\ast} \left(nt;  n\mathbf{x}^n\right) \big)\big) \mathrm{d} t + \mathcal{O}\left(\sqrt{\frac{\log \log n}{n}} \right)\\
= \frac{1}{n^2}\int_0^{T^n} \widetilde{c}\big(  \overline{\mathbf{x}}^{n, dt \ast} \left(t;  n\mathbf{x}^n\right), \overline{\mathbf{u}}^{dt\ast} \big(  \overline{\mathbf{x}}^{n,dt\ast} \left(t;  n\mathbf{x}^n\right) \big)\big) \mathrm{d} t +\mathcal{O}\left(\sqrt{\frac{\log \log n}{n}} \right)\\ =\frac{1}{n^2}\int_0^{T} \widetilde{c}\left(\overline{\mathbf{x}}^{dt \ast} \left(t; n\mathbf{x} \right),  \overline{\mathbf{u}}^{dt \ast}\left(\overline{\mathbf{x}}^{dt\ast} \left(t; n\mathbf{x} \right)\right) \right)\mathrm{d} t + \mathcal{O}\left(\sqrt{\frac{\log \log n}{n}} \right)$, $(t)$ is due the infimum over all fluid trajectories starting from $n \mathbf{x} $. We next prove the following  inequality:   $\frac{1}{n^2} V(n \mathbf{x} ) - \frac{1}{n^2}J(n\mathbf{x}) \leq \mathcal{O}\left(\sqrt{\frac{\log \log n}{n}} \right)$. Based (\ref{impotequ1}), if $V$ solve the Bellman equation in (\ref{OrgBel}), then for any policy $\mathbf{u} \in \boldsymbol{\mathcal{U}} $, we have
\begin{align}	\label{lasrimpo2}
	 & \frac{1}{n^2} V\left( n \mathbf{x}^n \right)  \notag \\
	 \leq & \frac{1}{n^2}\mathbb{E} \Big[ V\left(\mathbf{x}_{\mathbf{u}}\left(\lfloor nT \rfloor;  n \mathbf{x}^n  \right)\right) \Big] +\int_0^{T^n} \widetilde{c}\left(\overline{\mathbf{x}}^{n}_\mathbf{u} \left( t; \mathbf{x}^n \right), \mathbf{u}\left(\overline{\mathbf{x}}^{n}_\mathbf{u} \left( t; \mathbf{x}^n \right) \right) \right)\mathrm{d} t -      \int_0^{T^n} \sum_{k=1}^K \widetilde{g}_k\left( \mathbf{u}_k\left( \overline{\mathbf{x}}_{\mathbf{u}}^n \left(t;  \mathbf{x}^n\right)  \right)\right) \mathrm{d} t
\end{align}

According to \emph{Lemma 10.6.6} of \cite{CompApp}, fixing $\epsilon_0 \in (0, \epsilon)$, we can choose a piecewise linear trajectory $\overline{\mathbf{x}}^{\epsilon}$ satisfying $\| \overline{\mathbf{x}}^{\epsilon} \left(t; \mathbf{x} \right) - \overline{\mathbf{x}}^{\ast}\left(t; \mathbf{x} \right) \| \leq \epsilon_0$ for all $t \geq 0$, $\mathbf{x} \in \boldsymbol{\mathcal{X}}$, $\overline{\mathbf{x}}^{\epsilon} \left(t; \mathbf{x} \right)=0$ for $t \geq T$, and the control trajectory satisfies the requirements in (10.54) and (10.55) in \cite{CompApp}. Then according to \emph{Proposition 10.5.3}, we can construct a randomized policy $\widetilde{\mathbf{u}}$ so that $\overline{\mathbf{x}}^n_{\widetilde{\mathbf{u}}}\left(t ;  \mathbf{x}^n  \right)$ converges to $\overline{\mathbf{x}}^{\epsilon}\left(t; \mathbf{x} \right)$. Therefore, we have
\begin{align}	\label{epsipol}
	\int_0^{T^n} \widetilde{c}\left(\overline{\mathbf{x}}^{n}_{\widetilde{\mathbf{u}}} \left( t; \mathbf{x}^n \right), \widetilde{\mathbf{u}}\left(\overline{\mathbf{x}}^{n}_{\widetilde{\mathbf{u}}} \left( t; \mathbf{x}^n \right) \right) \right)\mathrm{d} t \leq \int_0^{T^n} \widetilde{c}\left(\overline{\mathbf{x}}^{\ast}\left(t; \mathbf{x} \right), {\mathbf{u}}^{\ast}\left(\overline{\mathbf{x}}^{\ast}\left(t; \mathbf{x} \right) \right) \right)\mathrm{d} t  + \mathcal{O}\left(\sqrt{\frac{\log \log n}{n}} \right)+\epsilon
\end{align}
for any $\epsilon >0$. Using $\widetilde{\mathbf{u}}$ in (\ref{lasrimpo2}), we have 
\begin{align}
	\frac{1}{n^2}\mathbb{E}\left[ V\left(\mathbf{x}_{\widetilde{\mathbf{u}}}\left(\lfloor nT \rfloor;  n \mathbf{x}^n  \right)\right) \right]= \frac{1}{n^2}\mathbb{E}\left[ V\big(n \overline{\mathbf{x}}^{\epsilon}\left(T; \mathbf{x} \right)+ n\mathcal{O}\left(\sqrt{\frac{\log \log n}{n}} \right)\big) \right]\overset{(u)}=\mathcal{O}\left(\frac{\log \log n}{n}\right)
\end{align}
where $(u)$ is due to $V\left(\mathbf{x} \right)=\mathcal{O}\big( \left\|\mathbf{x}\right\|^2\big)$ \cite{CompApp}. Then, using similar steps in (\ref{lastimpo1}), we can obtain that $\frac{1}{n^2} V(n \mathbf{x} ) - \frac{1}{n^2}J(n\mathbf{x}) \leq \mathcal{O}\left(\sqrt{\frac{\log \log n}{n}} \right)$. Combining the result that $\frac{1}{n^2} V(n \mathbf{x} ) - \frac{1}{n^2}J(n\mathbf{x}) \geq \mathcal{O}\left(\sqrt{\frac{\log \log n}{n}} \right)$, we have $\frac{1}{n^2} V(n \mathbf{x} ) - \frac{1}{n^2}J(n\mathbf{x}) = \mathcal{O}\left(\sqrt{\frac{\log \log n}{n}} \right)$.  Then, by changing variable from $n \mathbf{x}$ to $\mathbf{x}$, we can  obtain that 
\begin{equation}		
	  |V( \mathbf{x} ) - J( \mathbf{x})|   = \mathcal{O}\left( \| \mathbf{x}\|\sqrt{\| \mathbf{x}\| \log \log \| \mathbf{x}\|} \right)
\end{equation}
This completes the proof.

\section*{Appendix B: Proof of Lemma \ref{reasmalltau}}	
For sufficiently small epoch duration $\tau$, the original Bellman equation can be written in the form as the simplified Bellman equation as in (\ref{simbelman}). We then write the HJB equation in  (\ref{cenHJB}) in the following form:
\begin{equation}	\label{cenHJBapp}
		c^\infty = \min_{\mathbf{u}} \left[c\left(\mathbf{x}, \mathbf{u}\right) + \nabla_{\mathbf{x}} J\left(\mathbf{x}\right) \left[ \ \overline{\mathbf{f}}\left(\mathbf{u}, \boldsymbol{\epsilon}\right)  \right]^T \right] 
	\end{equation} 
Comparing (\ref{cenHJBapp}) and (\ref{simbelman}), the following relationship between these two equations can be obtained: $c^\infty=\theta$ and $V(\mathbf{x})=J(\mathbf{x})$.

\section*{Appendix C: Proof of Lemma \ref{linearAp}}	
The HJB equation for the VCTS in (\ref{cenHJB}) can be written as
\begin{equation}		\label{equivalentHJB}
		\min_{\mathbf{u}} \left[ \sum_{k=1}^K \left(c_k\left(\mathbf{x}_k, \mathbf{u}_k\right)+ \nabla_{\mathbf{x}_k} J\left(\mathbf{x}; \boldsymbol{\epsilon}\right)   \left[ \ \overline{\mathbf{f}}_k\left(\mathbf{u}_k, \mathbf{u}_{-k},\boldsymbol{\epsilon}_k \right) \right]^T \right)\right] =0
\end{equation}
Setting the coupling parameters equal to zero in the above equation, we could obtain the associated HJB equation for the base VCTS as follows:
\begin{equation}		\label{veryequ1}
		\min_{\mathbf{u}} \left[ \sum_{k=1}^K \left(c_k\left(\mathbf{x}_k, \mathbf{u}_k\right)+ \nabla_{\mathbf{x}_k} J\left(\mathbf{x}; \mathbf{0}\right)   \left[ \ \overline{\mathbf{f}}_k\left(\mathbf{u}_k, \mathbf{0},\mathbf{0} \right) \right]^T \right)\right] =0
\end{equation}
where $\overline{\mathbf{f}}_k\left(\mathbf{u}_k, \mathbf{0}, \mathbf{0}\right) =  \overline{\mathbf{f}}_k\left(\mathbf{u}_k\left(t\right), \mathbf{u}_{-k}\left(t\right), \boldsymbol{\epsilon}_k\right) \big|_{\boldsymbol{\epsilon}_{k}=\mathbf{0}, \mathbf{u}_{-k}=\mathbf{0}}$.  Suppose $J(\mathbf{x}; \mathbf{0}) = \sum_{k=1}^K J_k \left(\mathbf{x}_k \right)$, where $J_k \left(\mathbf{x}_k \right)$ is the per-flow fluid value function, i.e.,  the solution of  the per-flow HJB equation in (\ref{perflowHJB}). The L.H.S. of  (\ref{veryequ1}) becomes 
\begin{align}		
		\text{L.H.S. of } (\ref{veryequ1})  &\overset{(a)}= \min_{\mathbf{u}} \left[ \sum_{k=1}^K \left(c_k\left(\mathbf{x}_k, \mathbf{u}_k\right)+ \nabla_{\mathbf{x}_k} J_k\left(\mathbf{x}_k\right)   \left[ \ \overline{\mathbf{f}}_k\left(\mathbf{u}_k, \mathbf{0},\mathbf{0} \right) \right]^T \right)\right]  	\notag \\
		&=   \sum_{k=1}^K \min_{\mathbf{u}_k} \left[ c_k\left(\mathbf{x}_k, \mathbf{u}_k\right)+ \nabla_{\mathbf{x}_k} J_k\left(\mathbf{x}_k\right)   \left[ \ \overline{\mathbf{f}}_k\left(\mathbf{u}_k, \mathbf{0},\mathbf{0} \right) \right]^T \right] \overset{(b)}= 0
\end{align}
where $(a)$ is due to $\nabla_{\mathbf{x}_k} J(\mathbf{x}; \mathbf{0})  = \nabla_{\mathbf{x}_k} J_k\left(\mathbf{x}_k\right)$ and $(b)$ is due to the fact that $J_k\left(\mathbf{x}_k\right)$ is the solution of the per-flow HJB equation in (\ref{perflowHJB}). Therefore, we  show that $J(\mathbf{x}; \mathbf{0}) = \sum_{k=1}^K J_k \left(\mathbf{x}_k \right)$ is the solution of (\ref{veryequ1}). This completes the proof.

\section*{Appendix D: Proof of Theorem \ref{RelashipJsJ}}	
First, we obtain the first order Taylor expansion of the L.H.S. of the HJB equation in (\ref{cenHJB}) at $\epsilon_{kj} \mathbf{u}_j=\mathbf{0}$ $(\forall k, j \in \mathcal{K}, k \neq j)$. Taking the first order Taylor expansion of $\overline{\mathbf{f}}_k\left(\mathbf{u}_k, \mathbf{u}_{-k}, \boldsymbol{\epsilon}_k \right)$ at $\epsilon_{kj} \mathbf{u}_j=\mathbf{0}$ $(\forall k, j \in \mathcal{K}, k \neq j)$,  we have
\begin{align}	\label{approxF}
	\overline{\mathbf{f}}_k\left(\mathbf{u}_k, \mathbf{u}_{-k}, \boldsymbol{\epsilon}\right) = \overline{\mathbf{f}}_k\left(\mathbf{u}_k, \mathbf{0},\mathbf{0}\right)  + \sum_{j \neq k} \epsilon_{kj} \mathbf{u}_j   \nabla_{\epsilon_{kj} \mathbf{u}_j}  \overline{\mathbf{f}}_k\left(\mathbf{u}_k, \mathbf{0}, \mathbf{0} \right) + \mathcal{O}\left( \epsilon^2 \right), \quad \text{as } \epsilon \rightarrow 0
\end{align} 
$\nabla_{\epsilon_{kj} \mathbf{u}_j}  \overline{\mathbf{f}}_k$ means taking partial derivative w.r.t. to each element of vector $\epsilon_{kj} \mathbf{u}_j$.   We use $\mathcal{O}\left( \epsilon^2 \right)$ to characterize the growth rate of a function as $\epsilon$ goes to zero and in the following proof  we will not mention  `$\text{as } \epsilon \rightarrow 0$' for simplicity. Let $N_k$ and $N_{\mathbf{u}_k}$ be the dimensions of row vectors $\mathbf{f}_k$   and $\mathbf{u}_k$. Then,  $\nabla_{\epsilon_{kj} \mathbf{u}_j}  \overline{\mathbf{f}}_k$  is a $N_{\mathbf{u}_j} \times N_k$ dimensional matrix.  Taking the first order taylor expansion of $J\left(\mathbf{x}; \boldsymbol{\epsilon}\right)$ at $\boldsymbol{\epsilon} = \mathbf{0}$, we have
\begin{align}	
	& J(\mathbf{x}; \boldsymbol{\epsilon}) = J(\mathbf{x}; \mathbf{0}) + \sum_{i=1}^K \sum_{j \neq i} \epsilon_{ij}   \widetilde{J}_{ij}\left( \mathbf{x} \right)+ \mathcal{O}\left( \epsilon^2 \right) \overset{(a)}= \sum_{i=1}^K J_i \left(\mathbf{x}_i \right) + \sum_{i=1}^K \sum_{j \neq i} \epsilon_{ij} \widetilde{J}_{ij}\left( \mathbf{x} \right)+\mathcal{O}\left( \epsilon^2 \right)	\label{approxJxx}	\\
	\Rightarrow & \left| J(\mathbf{x}; \boldsymbol{\epsilon}) - \sum_{i=1}^K J_i \left(\mathbf{x}_i \right) \right|  = \left| \sum_{i=1}^K\sum_{j \neq i }\epsilon_{ij} \widetilde{J}_{ij}\left( \mathbf{x} \right) \right|  + \mathcal{O}\left( \epsilon^2 \right)		\label{approxJthm}
\end{align}
where $ \widetilde{J}_{ij}\left( \mathbf{x} \right) \triangleq \frac{\partial J(\mathbf{x}; \boldsymbol{\epsilon})}{\partial \epsilon_{ij}}\big|_{\boldsymbol{\epsilon}=\mathbf{0}} $ and $(a)$ is due to $\nabla_{\mathbf{x}_k} J(\mathbf{x}; \mathbf{0})  = \nabla_{\mathbf{x}_k} J_k\left(\mathbf{x}_k\right)$. Note that the equation in (\ref{approxJthm}) quantifies the difference $\left| J(\mathbf{x}; \boldsymbol{\epsilon}) - \sum_{k=1}^K J_i \left(\mathbf{x}_i \right) \right|$ in terms of the coupling parameters $\epsilon_{ij}$  and $\widetilde{J}_{ij}\left( \mathbf{x} \right)$ $(\forall i,j \in \mathcal{K}, i\neq j)$. Substituting (\ref{approxF}) and (\ref{approxJxx}) into (\ref{equivalentHJB}), which is an equivalent form of the HJB equation in (\ref{cenHJB}), we have
\begin{align}
	&\min_{\mathbf{u}} \left[ \sum_{k=1}^K \left(c_k\left(\mathbf{x}_k, \mathbf{u}_k\right)+ \nabla_{\mathbf{x}_k} \left( \sum_{i=1}^K J_i \left(\mathbf{x}_i \right) + \sum_{i=1}^K \sum_{j \neq i} \epsilon_{ij} \widetilde{J}_{ij}\left( \mathbf{x} \right)+\mathcal{O}\left( \epsilon^2 \right) \right) \cdot    \right. \right. \notag \\
	 &\hspace{4cm}  \left.  \left.   \left[  \overline{\mathbf{f}}_k\left(\mathbf{u}_k, \mathbf{0},  \mathbf{0} \right) + \sum_{j \neq k} \epsilon_{kj} \mathbf{u}_j   \nabla_{\epsilon_{kj} \mathbf{u}_j}  \overline{\mathbf{f}}_k\left(\mathbf{u}_k, \mathbf{0}, \mathbf{0} \right) + \mathcal{O}\left( \epsilon^2 \right) \right]^T \right)\right] =0 \notag		\\	
	\overset{(b)}\Rightarrow	&\min_{\mathbf{u}} \left[ \sum_{k=1}^K \left(c_k\left(\mathbf{x}_k, \mathbf{u}_k\right) + \nabla_{\mathbf{x}_k}J_k\left(\mathbf{x}_k\right)\left[ \ \overline{\mathbf{f}}_k\left(\mathbf{u}_k, \mathbf{0}, \mathbf{0} \right)  \right]^T  + \sum_{i=1}^K \sum_{j \neq i} \epsilon_{ij} \nabla_{\mathbf{x}_k} \widetilde{J}_{ij}\left( \mathbf{x} \right) \left[ \ \overline{\mathbf{f}}_k\left(\mathbf{u}_k, \mathbf{0}, \mathbf{0} \right) \right]^T   	\right. \right.	\qquad  \notag \\
	& \hspace{4cm}  \left. \left.  +  \nabla_{\mathbf{x}_k}J_k \left(\mathbf{x}_k \right) \left[   \sum_{j \neq k} \epsilon_{kj} \mathbf{u}_j   \nabla_{\epsilon_{kj} \mathbf{u}_j}  \overline{\mathbf{f}}_k\left(\mathbf{u}_k, \mathbf{0}, \mathbf{0} \right)  \right]^T \right) + \mathcal{O}\left( \epsilon^2 \right)\right] =0		\label{transformHJB}
\end{align}
where $(b)$ is due to $\nabla_{\mathbf{x}_k} \left( \sum_{i=1}^K J_i \left(\mathbf{x}_i \right)   \right)= \nabla_{\mathbf{x}_k} J_k\left(\mathbf{x}_k\right)$.

Second, we compare the difference between the optimal control policy under the general coupled VCTS (denoted as $\mathbf{u}_k^{c \ast}$)  and  the optimal policy under the decoupled base VCTS (denoted as $\mathbf{u}_k^{v \ast }$). Before proceeding, we show the following  lemma.
\begin{Lemma}		\label{applemma}
	Consider the following two convex optimization problems:
	\begin{align}
		(\mathcal{P}_1): \ \min_{x,y \in \mathbb{R}} f_1\left(x\right)+f_2\left(y\right)	\qquad \qquad 	(\mathcal{P}_2): \ \min_{x,y \in \mathbb{R}} f_1\left(x\right)+f_2\left(y\right) + \epsilon g\left(x,y\right)	+ \mathcal{O}\left(\epsilon^2 \right)	\label{p1p2lamma}
	\end{align}
	where $\mathcal{P}_2$ is a perturbed problem w.r.t. $\mathcal{P}_1$. Let $\left( x^{\ast}, y^{\ast}\right)$ be the optimal solution of $\mathcal{P}_1$ and  $\left( x^{\ast}\left(\epsilon \right), y^{\ast}\left(\epsilon \right) \right)$ be the optimal solution of $\mathcal{P}_2$. Then, we have
	\begin{align}
		&x^{\ast}\left(\epsilon \right)- x^{\ast}	= -\epsilon \frac{g_x' \left(x^{\ast},y^{\ast}\right)}{f_1^{''}\left(x^{\ast}\right)} + \mathcal{O}\left(\epsilon^2 \right)	\label{proofxstar} \\
		&y^{\ast}\left(\epsilon \right)- y^{\ast}  = - \epsilon	\frac{g_y'\left(x^{\ast},y^{\ast}\right)}{f_2^{''}\left(y^{\ast}\right)} + \mathcal{O}\left(\epsilon^2 \right)	\label{proofystar}
	\end{align}
	where $g_x' (x,y) =\frac{\partial g (x,y)}{\partial x}$, $g_y' (x,y) =\frac{\partial g (x,y)}{\partial y}$, $f_1^{''}(x)=\frac{\mathrm{d}^2 f_1(x)}{\mathrm{d}x^2}$, and $f_2^{''}(y)=\frac{\mathrm{d}^2 f_2(y)}{\mathrm{d} y^2}$.~\hfill\IEEEQED
\end{Lemma}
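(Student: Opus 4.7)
The plan is to use a first-order perturbation analysis based on the KKT conditions of the two convex problems in (\ref{p1p2lamma}). Since both $\mathcal{P}_1$ and $\mathcal{P}_2$ are unconstrained and strictly convex (I will assume $f_1''>0$ and $f_2''>0$ at the optimum so that the second derivatives in the denominators are well-defined), the optimizers are characterized by the stationarity conditions $f_1'(x^\ast)=0$, $f_2'(y^\ast)=0$ for $\mathcal{P}_1$, and
\begin{equation*}
f_1'\bigl(x^\ast(\epsilon)\bigr) + \epsilon\, g_x'\bigl(x^\ast(\epsilon),y^\ast(\epsilon)\bigr) + \mathcal{O}(\epsilon^2) = 0, \qquad f_2'\bigl(y^\ast(\epsilon)\bigr) + \epsilon\, g_y'\bigl(x^\ast(\epsilon),y^\ast(\epsilon)\bigr) + \mathcal{O}(\epsilon^2) = 0
\end{equation*}
for $\mathcal{P}_2$.

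Next I would invoke the implicit function theorem to conclude that $x^\ast(\epsilon)$ and $y^\ast(\epsilon)$ are smooth in $\epsilon$ near $\epsilon=0$ with $x^\ast(0)=x^\ast$ and $y^\ast(0)=y^\ast$; the nondegeneracy condition needed is precisely $f_1''(x^\ast)\neq 0$ and $f_2''(y^\ast)\neq 0$, which strict convexity provides. Hence I can write the regular expansions $x^\ast(\epsilon)=x^\ast+\epsilon x_1+\mathcal{O}(\epsilon^2)$ and $y^\ast(\epsilon)=y^\ast+\epsilon y_1+\mathcal{O}(\epsilon^2)$ for some constants $x_1,y_1$ to be determined.

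Substituting these expansions into the perturbed stationarity conditions and Taylor-expanding $f_1'$ and $g_x'$ about $(x^\ast,y^\ast)$ gives, for the $x$-equation,
\begin{equation*}
f_1'(x^\ast) + \epsilon x_1 f_1''(x^\ast) + \epsilon\, g_x'(x^\ast,y^\ast) + \mathcal{O}(\epsilon^2) = 0.
\end{equation*}
Using $f_1'(x^\ast)=0$ and matching the $\mathcal{O}(\epsilon)$ coefficients yields $x_1=-g_x'(x^\ast,y^\ast)/f_1''(x^\ast)$, which is exactly (\ref{proofxstar}). The identical argument applied to the $y$-equation gives (\ref{proofystar}). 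The main subtlety is purely technical, namely verifying that the residual $\mathcal{O}(\epsilon^2)$ term appearing inside $\mathcal{P}_2$ does not contaminate the first-order coefficient; this follows because differentiating an $\mathcal{O}(\epsilon^2)$ perturbation at $\epsilon=0$ still produces an $\mathcal{O}(\epsilon)$ contribution to the KKT system, which is absorbed in the remainder. Apart from this, the argument is a direct application of the implicit function theorem and the decoupled block structure $f_1(x)+f_2(y)$ of the unperturbed objective.
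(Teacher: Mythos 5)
Your proposal is correct and follows essentially the same route as the paper: write the first-order (stationarity) conditions for both problems, expand $x^{\ast}(\epsilon)$ and $y^{\ast}(\epsilon)$ to first order in $\epsilon$, Taylor-expand the perturbed conditions about $(x^{\ast},y^{\ast})$, and match the $\mathcal{O}(\epsilon)$ coefficients. Your explicit appeal to the implicit function theorem merely makes rigorous the smoothness of $x^{\ast}(\epsilon),y^{\ast}(\epsilon)$ that the paper's proof assumes implicitly when writing its Taylor expansion; otherwise the arguments coincide.
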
 
\begin{proof} [Proof of Lemma \ref{applemma}]
	According to the first order optimality condition of $\mathcal{P}_1$ and $\mathcal{P}_2$, we have
	\begin{equation}	\label{baseequ}
		\left\{  \begin{array}{cc}
					f_1'\left(x^{\ast}\right)=0		\\
					f_2'\left(y^{\ast}\right)=0
			\end{array}\right.
		\qquad  \text{and}\qquad 
		\left\{  \begin{array}{cc}
					f_1'\left(x^{\ast}\left(\epsilon \right)\right) + \epsilon g_x' \left(x^{\ast}\left(\epsilon \right),y^{\ast}\left(\epsilon \right)\right) + \mathcal{O}\left(\epsilon^2 \right) =0		\\
					f_2'\left(y^{\ast}\left(\epsilon \right)\right) + \epsilon g_y' \left(x^{\ast}\left(\epsilon \right),y^{\ast}\left(\epsilon \right)\right) + \mathcal{O}\left(\epsilon^2 \right) =0
			\end{array}\right.
	\end{equation}
	where $f_1^{'}(x)=\frac{\mathrm{d} f_1(x)}{\mathrm{d}x}$ and $f_2^{'}(y)=\frac{\mathrm{d} f_2(y)}{\mathrm{d} y}$. Taking the first order Taylor expansion of $x^{\ast}\left(\epsilon \right)$  at $\epsilon =0 $, we have
\begin{align}
	& x^{\ast}\left(\epsilon \right) = x^{\ast}\left(0 \right) + \epsilon \widetilde{x}  + \mathcal{O}\left(\epsilon^2 \right)\overset{(c)}=x^{\ast}+ \epsilon \widetilde{x} + \mathcal{O}\left(\epsilon^2 \right)		\label{taylorexpx}		\\
	\Rightarrow & 	x^{\ast}\left(\epsilon \right) - x^{\ast}= \epsilon \widetilde{x} + \mathcal{O}\left(\epsilon^2 \right)		\label{taylorexpxxx}
\end{align}
where $ \widetilde{x} \triangleq  \frac{\mathrm{d} x^{\ast}\left(\epsilon \right) }{ \mathrm{d} \epsilon } \Big|_{\epsilon=0}$. $(c)$ is due to the equivalence between $\mathcal{P}_1$ and $\mathcal{P}_2$ when $\epsilon=0$, i.e., $ x^{\ast}\left(0 \right)= x^{\ast}$.  

Similarly, we have the following relationship between $y^{\ast}\left(\epsilon \right)$ and $y^{\ast}$:
\begin{align}	\label{taylorexpy}
	y^{\ast}\left(\epsilon \right) - y^{\ast} = \epsilon \widetilde{y} + \mathcal{O}\left(\epsilon^2 \right)	
\end{align}
where $\widetilde{y} \triangleq  \frac{\mathrm{d} y^{\ast}\left(\epsilon \right) }{ \mathrm{d} \epsilon }\Big|_{\epsilon=0}$. Taking the first order Taylor expansion of the L.H.S. of the  first equation of the second term in (\ref{baseequ}) at $\left( x^{\ast}\left(\epsilon \right), y^{\ast}\left(\epsilon \right)  \right)= \left(x^{\ast}, y^{\ast} \right)$, we have
\begin{align}
	&f_1'\left(x^{\ast}\left(\epsilon \right)\right) + \epsilon g_x' \left(x^{\ast}\left(\epsilon \right),y^{\ast}\left(\epsilon \right)\right)+ \mathcal{O}\left(\epsilon^2 \right)  \notag \\
	=& f_1'\left(x^{\ast}\right)  + f_1^{''} \left( x^{\ast} \right) \left(x^{\ast}\left(\epsilon \right)- x^{\ast}  \right) + \epsilon \left( g_x' \left(x^{\ast},y^{\ast}\right) + g_{xx}^{''} \left( x^{\ast}, y^{\ast} \right) \left(x^{\ast}\left(\epsilon \right)- x^{\ast}  \right)+ g_{xy}^{''} \left(x^{\ast},y^{\ast}\right)   \left(y^{\ast}\left(\epsilon \right)- y^{\ast}  \right) \right. \notag \\
	&\hspace{7.5cm}\left.+ \mathcal{O}\left( \left(x^{\ast}\left(\epsilon \right)- x^{\ast}  \right)^2 \right)+\mathcal{O}\left( \left(y^{\ast}\left(\epsilon \right)- y^{\ast}  \right)^2 \right)\right)+ \mathcal{O}\left(\epsilon^2 \right)	\notag \\
	\overset{(d)}=& f_1^{''} \left( x^{\ast} \right) \left(x^{\ast}\left(\epsilon \right)- x^{\ast}  \right) + \epsilon  g_x' \left(x^{\ast},y^{\ast}\right) + \mathcal{O}\left(\epsilon^2 \right)	\label{taylor1}
\end{align}
where $(d)$ is due to the first equation of the first term in (\ref{baseequ}), (\ref{taylorexpxxx}) and (\ref{taylorexpy}). Substituting (\ref{taylorexpxxx}) into (\ref{taylor1}) and by the definition of $\widetilde{x}$, we  have
\begin{align}	\label{appended}
	\widetilde{x} = - \frac{g_x' \left(x^{\ast},y^{\ast}\right)}{ f_1^{''} \left( x^{\ast} \right)}
\end{align}
Similarly, we could obtain
\begin{align}	\label{appendedyy}
	\widetilde{y} = -\frac{g_y'\left(x^{\ast},y^{\ast}\right)}{f_2^{''}\left(y^{\ast}\right)}
\end{align}
Therefore, substituting (\ref{appended}) into (\ref{taylorexpxxx}) and (\ref{appendedyy}) into (\ref{taylorexpy}), we obtain (\ref{proofxstar}) and (\ref{proofystar}).
\end{proof}
\begin{Corollary} [Extension of Lemma \ref{applemma}]		\label{corrAppd}
	Consider the two convex optimization problems in (\ref{p1p2lamma}) with $x, y \in G$, where $G=\left[ G_{min}, G_{max} \right] $ is a subset of $R$, i.e., $G \subset R$.  Let $\left(x^{\dagger}, y^{\dagger} \right)$ and  $\left(x^{\dagger}\left(\epsilon \right), y^{\dagger}\left(\epsilon \right) \right)$ be the optimal solutions of the corresponding two problems. Then, we have $\left| x^{\dagger}\left(\epsilon \right)- x^{\dagger} \right| \leq \mathcal{O}\left(\epsilon \right)$, $\left| y^{\dagger}\left(\epsilon \right)- y^{\dagger} \right| \leq \mathcal{O}\left( \epsilon \right)$.  Furthermore, we conclude that either of the following equalities holds:
	\begin{align}
		& \left(x^{\dagger}\left(\epsilon \right) -  x^{\dagger} \right) f_1'\left(x^{\dagger} \right)=0		\\
		& \left(x^{\dagger}\left(\epsilon \right) -  x^{\dagger} \right) f_1'\left(x^{\dagger} \right)= \sum_{i \geq 0} \epsilon^{1+\delta_i} \tilde{f}_i \left(x^{\ast}, y^{\ast}, G_{min},G_{max} \right)
	\end{align}	
	for some function $\tilde{f}_i $ and positive constants $\{\delta_i: i \geq 0\}$, where $0 < \delta_0 \leq \delta_1 \leq \delta_2 \leq \dots \leq \infty $.~\hfill\IEEEQED
\end{Corollary}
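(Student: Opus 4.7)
The plan is to separate the analysis based on whether the unperturbed optimum $(x^{\dagger}, y^{\dagger})$ lies in the interior of $G \times G$ or on its boundary, and then to invoke Lemma \ref{applemma} for the interior case while using a KKT analysis for the boundary case. First I would write down the KKT conditions for both $\mathcal{P}_1$ and $\mathcal{P}_2$ with box constraints, so that stationarity of $x^{\dagger}$ takes the form $f_1'(x^{\dagger}) = \mu - \lambda$ with nonnegative multipliers $\lambda, \mu$ for the constraints $G_{\min} \le x$ and $x \le G_{\max}$ together with complementary slackness. This partitions the analysis into three scenarios: (i) $x^{\dagger} \in (G_{\min}, G_{\max})$, forcing $f_1'(x^{\dagger}) = 0$; (ii) $x^{\dagger} = G_{\min}$ with $f_1'(x^{\dagger}) \ge 0$; (iii) $x^{\dagger} = G_{\max}$ with $f_1'(x^{\dagger}) \le 0$.

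Next I would establish the bound $|x^{\dagger}(\epsilon) - x^{\dagger}| \le \mathcal{O}(\epsilon)$ (and symmetrically for $y$) via continuity of the argmin for strictly convex objectives together with a local quantitative argument. In scenario (i), for $\epsilon$ small enough the perturbed solution remains interior, so Lemma \ref{applemma} applies verbatim and yields $x^{\dagger}(\epsilon) - x^{\dagger} = -\epsilon \, g_x'(x^{\dagger}, y^{\dagger})/f_1''(x^{\dagger}) + \mathcal{O}(\epsilon^2)$. In scenarios (ii) and (iii), I would distinguish two subcases according to whether the perturbation preserves or breaks the active constraint. If the active constraint is preserved, $x^{\dagger}(\epsilon) = x^{\dagger}$ and the bound is trivial. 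If the perturbed optimum detaches from the boundary, the KKT condition for $\mathcal{P}_2$ reads $f_1'(x^{\dagger}(\epsilon)) + \epsilon g_x'(x^{\dagger}(\epsilon), y^{\dagger}(\epsilon)) + \mathcal{O}(\epsilon^2) = 0$; expanding $f_1'$ about $x^{\dagger}$ and using strict convexity ($f_1''(x^{\dagger}) > 0$) gives $x^{\dagger}(\epsilon) - x^{\dagger} = -[f_1'(x^{\dagger}) + \epsilon g_x'(x^{\dagger}, y^{\dagger})]/f_1''(x^{\dagger}) + \text{h.o.t.}$, which is $\mathcal{O}(\epsilon)$ since detachment can only occur when $f_1'(x^{\dagger})$ is itself of order $\epsilon$.

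Finally, I would verify the product identity $(x^{\dagger}(\epsilon) - x^{\dagger}) f_1'(x^{\dagger})$. It vanishes identically in scenario (i), since $f_1'(x^{\dagger}) = 0$ there, and vanishes in the active-boundary subcase since $x^{\dagger}(\epsilon) = x^{\dagger}$. In the remaining detachment subcase, the above KKT-based expansion of $x^{\dagger}(\epsilon) - x^{\dagger}$ can be iterated to produce a series in fractional powers of $\epsilon$ whose coefficients depend on $x^{\dagger}, y^{\dagger}$ and on the boundary values $G_{\min}, G_{\max}$ that implicitly determine which multiplier is active; multiplying by $f_1'(x^{\dagger})$ yields the claimed form $\sum_{i \ge 0} \epsilon^{1+\delta_i} \tilde{f}_i(x^{\ast}, y^{\ast}, G_{\min}, G_{\max})$ with $0 < \delta_0 \le \delta_1 \le \cdots$. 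The main obstacle will be this corner case: the standard implicit-function-theorem argument underlying Lemma \ref{applemma} fails at a boundary point where $f_1'(x^{\dagger}) \ne 0$, and I expect to handle it by a careful one-sided Taylor analysis of the perturbed KKT system, tracking how the direction of the perturbation $\epsilon g_x'$ interacts with the sign of $f_1'(x^{\dagger})$ to either pin the iterate to the boundary or drive it into the interior with a controlled fractional-order displacement.
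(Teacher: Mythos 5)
Your proposal reaches the right conclusion but by a genuinely different route from the paper. The paper does not touch KKT conditions at all: it observes that the box-constrained optimizers are obtained by \emph{clipping} (projecting) the unconstrained optimizers $x^{\ast}$, $x^{\ast}(\epsilon)$ onto $G$, so the bound $|x^{\dagger}(\epsilon)-x^{\dagger}|\leq |x^{\ast}(\epsilon)-x^{\ast}|=\mathcal{O}(\epsilon)$ follows in one line from the nonexpansiveness of the projection together with Lemma \ref{applemma}, and the product property is then read off from a case split on where $x^{\ast}$ and $x^{\ast}(\epsilon)$ fall relative to $G$ (both outside on the same side gives zero displacement; $x^{\ast}$ interior gives $f_1'(x^{\dagger})=0$; the mixed cases are asserted to decay faster than $\epsilon$ and are packaged as the series $\sum_{i\geq 0}\epsilon^{1+\delta_i}\tilde{f}_i$). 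Your route instead perturbs the constrained optimality system directly: stationarity with multipliers and complementary slackness, interior case handled by Lemma \ref{applemma}, active-constraint-preserved case giving zero displacement, and a one-sided expansion in the detachment case. What your approach buys is that it treats the coupled constrained problem $\mathcal{P}_2$ head-on, whereas the paper's clipping step is exact only for the separable unperturbed problem and is being used as an approximation for the coupled one; what the paper's approach buys is brevity and an immediate, uniform $\mathcal{O}(\epsilon)$ bound without tracking which constraints are active. One point to tighten in your write-up: the phrase ``detachment can only occur when $f_1'(x^{\dagger})$ is itself of order $\epsilon$'' conflates a fixed constant with an $\epsilon$-dependent quantity; the clean statement is that if detachment occurs along a sequence $\epsilon\rightarrow 0$, monotonicity of $f_1'$ and the interior stationarity of $\mathcal{P}_2$ force $f_1'(x^{\dagger})=0$ (so the product vanishes), while otherwise the displacement is either zero or of higher order in $\epsilon$ --- which is exactly the dichotomy the corollary asserts, and is no looser than the paper's own ``goes to zero faster than $\epsilon$'' step.
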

\begin{proof}	[Proof of Corollary \ref{corrAppd}]
	The optimal solutions $\left(x^{\dagger}, y^{\dagger} \right)$ and  $\left(x^{\dagger}\left(\epsilon \right), y^{\dagger}\left(\epsilon \right) \right)$ of the two new convex optimization problems  can be obtained by mapping each element of $\left(x^{\ast}, y^{\ast} \right)$ and  $\left(x^{\ast}\left(\epsilon \right), y^{\ast}\left(\epsilon \right) \right)$  to the set $G$.  Specifically, if $x^{\ast} \in G$,  $x^{\dagger}=x^{\ast}$. if $x^{\ast} > G_{max}$,  $x^{\dagger}= G_{max}$. if  $x^{\ast} < G_{min}$,  $x^{\dagger}= G_{min}$. Therefore, we have 
	\begin{align}
		\left|x^{\dagger}\left(\epsilon \right)- x^{\dagger} \right| \leq \left| x^{\ast}\left(\epsilon \right)- x^{\ast} \right|  =  \left|\epsilon \frac{g_x' \left(x^{\ast},y^{\ast}\right)}{f_1^{''}\left(x^{\ast}\right)} \right|+ \mathcal{O}\left(\epsilon^2 \right) = \mathcal{O}\left( \epsilon \right)
	\end{align}
	Similarly, we could obtain
	\begin{align}
		\left| y^{\dagger}\left(\epsilon \right)- y^{\dagger} \right| \leq \mathcal{O}\left( \epsilon \right)
	\end{align}
	where the equality  is achieved when $x^{\dagger}=x^{\ast}$ and $y^{\dagger}=y^{\ast}$. 
	
	Next, we prove the property of the expression $\left(x^{\dagger}\left(\epsilon \right) -  x^{\dagger} \right) f_1'\left(x^{\dagger} \right)$. Based on the above analysis, when $x^{\ast} \leq G_{min}, x^{\ast}\left(\epsilon \right) \leq G_{min}$ or $x^{\ast}\geq G_{max}, x^{\ast}\left(\epsilon \right) \geq G_{max}$, we have $x^{\dagger}\left(\epsilon \right) -  x^{\dagger} =0$. When $x^{\ast} \in G$, we have $f_1'\left(x^{\dagger} \right)=0$. Thus, at these cases, we have $\left(x^{\dagger}\left(\epsilon \right) -  x^{\dagger} \right) f_1'\left(x^{\dagger} \right)=0$.  At other cases when $x^{\ast} \notin G, x^{\ast}\left(\epsilon \right) \notin G$ or $x^{\ast} \notin G, x^{\ast}\left(\epsilon \right) \in G$, we have that $\left|x^{\dagger}\left(\epsilon \right)- x^{\dagger} \right| < \left| x^{\ast}\left(\epsilon \right)- x^{\ast} \right|  = \mathcal{O}\left( \epsilon \right)$. It means that as $\epsilon$ goes to zero, $\left|x^{\dagger}\left(\epsilon \right)- x^{\dagger} \right|$ goes to zero faster than $\epsilon$. Therefore, we can write the difference  between $x^{\dagger}\left(\epsilon \right)$ and  $x^{\dagger}$ as  $x^{\dagger}\left(\epsilon \right) -  x^{\dagger} = \sum_{i \geq 0} \epsilon^{1+\delta_i} \hat{f}_i \left(x^{\ast}, y^{\ast},G_{min},G_{max} \right)$ for some function $\hat{f}_i $ and positive constants $\{\delta_i: i \geq 0\}$, where $0 < \delta_0 \leq \delta_1 \leq \delta_2 \leq \dots \leq \infty $. Since $x^{\dagger}$ is determined  based on the knowledge of $x^{\ast}$, $G_{min}$ and $G_{max}$, we have $x^{\dagger}=\vec{f}\left(x^{\ast}, G_{min},G_{max} \right)$ for some $\vec{f}$. Finally, we have $\left(x^{\dagger}\left(\epsilon \right) -  x^{\dagger} \right) f_1'\left(x^{\dagger} \right)= \sum_{i \geq 0} \epsilon^{1+\delta_i} \tilde{f}_i \left(x^{\ast}, y^{\ast},G_{min},G_{max} \right)$ with $\tilde{f}_i  = \hat{f}_i f_1'\vec{f}$ for all $i \geq 0$.
\end{proof}

The results in Lemma \ref{applemma} and  Corollary \ref{corrAppd} can be easily extended to the case where $x$ and $y$ are vectors and there are more perturbed terms like $\epsilon g \left(x,y \right)$ in the objective function of $\mathcal{P}_2$ in (\ref{p1p2lamma}). In the following, we easablish the property of  the difference between $\mathbf{u}_k^{c \ast}$ and $\mathbf{u}_k^{v \ast }$. Based on the definitions of $\mathbf{u}_k^{c \ast}$ and $\mathbf{u}_k^{v \ast }$ as well as the equations in  (\ref{transformHJB}) and (\ref{perflowHJB}), we have
\begin{align}
	\mathbf{u}_k^{c \ast}\left( \mathbf{x} \right) &= \arg \min_{\mathbf{u}_k} \left\{ g_k\left( \mathbf{u}_k \right) +   \nabla_{\mathbf{x}_k}J_k\left(\mathbf{x}_k\right)\left[ \ \overline{\mathbf{f}}_k\left(\mathbf{u}_k, \mathbf{0}, \mathbf{0}\right)  \right]^T  + \sum_{i=1}^K \sum_{j \neq i} \epsilon_{ij} \nabla_{\mathbf{x}_k} \widetilde{J}_{ij}\left( \mathbf{x} \right) \left[ \ \overline{\mathbf{f}}_k\left(\mathbf{u}_k, \mathbf{0},\mathbf{0} \right) \right]^T   \right. \notag \\
	  & \left.+ \nabla_{\mathbf{x}_k}J_k \left(\mathbf{x}_k \right) \left[   \sum_{j \neq k} \epsilon_{kj} \mathbf{u}_j   \nabla_{\epsilon_{kj} \mathbf{u}_j}  \overline{\mathbf{f}}_k\left(\mathbf{u}_k, \mathbf{0},\mathbf{0} \right)  \right]^T  + \sum_{j \neq k}  \nabla_{\mathbf{x}_j}J_j(\mathbf{x}_j) \left[ \epsilon_{jk} \mathbf{u}_k   \nabla_{\epsilon_{jk} \mathbf{u}_k}  \overline{\mathbf{f}}_j\left(\mathbf{u}_j, \mathbf{0},\mathbf{0} \right) \right]^T + \mathcal{O}\left(\epsilon^2 \right) \right\}	\label{ukc}	\\
	\mathbf{u}_k^{v \ast }(\mathbf{x}_k) &= \arg \min_{\mathbf{u}_k} \left\{ g_k\left( \mathbf{u}_k \right) +   \nabla_{\mathbf{x}_k}J_k\left(\mathbf{x}_k\right)\left[ \ \overline{\mathbf{f}}_k\left(\mathbf{u}_k, \mathbf{0},\mathbf{0} \right)  \right]^T \right\}	\label{ukv}
\end{align}
Let $u_{kn}$ be the $n$-th element of $\mathbf{u}_k$. Using Corollary \ref{corrAppd}, we have
\begin{align}
	u_{kn}^{c \ast}\left( \mathbf{x} \right) - u_{kn}^{v \ast } \left( \mathbf{x}_k \right) \leq \left| u_{kn}^{c \ast} \left( \mathbf{x} \right) - u_{kn}^{v \ast } \left( \mathbf{x}_k \right)\right|  \leq \mathcal{O}\left(\epsilon \right)  \label{policydiff}
\end{align}
where $\epsilon = \max \{\left|\epsilon_{kj}\right|: \forall k, j \in \mathcal{K},  k \neq j \}$.

Third, we obtain  the PDE defining $\widetilde{J}_{ij}\left(\mathbf{x} \right)$. Based on (\ref{transformHJB}), we have
\begin{align}
	 & (\ref{transformHJB})\overset{(e)} \Rightarrow   \sum_{k=1}^K \left(c_k\left(\mathbf{x}_k, \mathbf{u}_k^{c \ast }\left( \mathbf{x} \right) \right) + \nabla_{\mathbf{x}_k}J_k\left(\mathbf{x}_k\right)\left[ \ \overline{\mathbf{f}}_k\left(\mathbf{u}_k^{c \ast }\left( \mathbf{x} \right), \mathbf{0}, \mathbf{0}  \right)  \right]^T  + \sum_{i=1}^K \sum_{j \neq i} \epsilon_{ij} \nabla_{\mathbf{x}_k} \widetilde{J}_{ij}\left( \mathbf{x} \right) \left[ \ \overline{\mathbf{f}}_k\left(\mathbf{u}_k^{c \ast }\left( \mathbf{x} \right), \mathbf{0}, \mathbf{0}  \right) \right]^T   \right.	\qquad  \notag \\
	& \hspace{3cm}  \left.  +  \nabla_{\mathbf{x}_k}J_k \left(\mathbf{x}_k \right)   \left[   \sum_{j \neq k} \epsilon_{kj} \mathbf{u}_j^{c \ast }\left( \mathbf{x} \right)   \nabla_{\epsilon_{kj} \mathbf{u}_j}  \overline{\mathbf{f}}_k\left(\mathbf{u}_k^{c \ast }\left( \mathbf{x} \right), \mathbf{0}, \mathbf{0} \right)  \right]^T \right) + \mathcal{O}\left( \epsilon^2 \right)=0	\label{FirsTTSapp} \\
	 \overset{(f)} \Rightarrow & \sum_{k=1}^K \Bigg(\alpha_k \| \mathbf{x}_k \| _{\mathbf{v}_k,1} + g_k(\mathbf{u}_k^{v \ast }\left( \mathbf{x}_k \right))  + \left(\mathbf{u}_k^{c \ast }\left( \mathbf{x} \right) -\mathbf{u}_k^{v \ast }\left( \mathbf{x}_k \right) \right) \nabla_{\mathbf{u}_k} g_k\left( \mathbf{u}_k^{v \ast }\left( \mathbf{x}_k \right)\right)	\notag \\
	 & \hspace{2.7cm} + \nabla_{\mathbf{x}_k}J_k\left(\mathbf{x}_k\right)\left[ \ \overline{\mathbf{f}}_k\left(\mathbf{u}_k^{v \ast }\left( \mathbf{x}_k \right), \mathbf{0}, \mathbf{0} \right) + \left(\mathbf{u}_k^{c \ast }\left( \mathbf{x} \right) -\mathbf{u}_k^{v \ast }\left( \mathbf{x}_k \right) \right)  \nabla_{\mathbf{u}_k}\overline{\mathbf{f}}_k\left(\mathbf{u}_k^{v \ast }\left( \mathbf{x}_k \right), \mathbf{0}, \mathbf{0} \right)  \right]^T  \notag \\
	  & \hspace{2.7cm} + \sum_{i=1}^K \sum_{j \neq i} \epsilon_{ij} \nabla_{\mathbf{x}_k} \widetilde{J}_{ij}\left( \mathbf{x} \right) \left[  \ \overline{\mathbf{f}}_k\left(\mathbf{u}_k^{v \ast }\left( \mathbf{x}_k \right), \mathbf{0}, \mathbf{0} \right)  +   \left(\mathbf{u}_k^{c \ast }\left( \mathbf{x} \right) -\mathbf{u}_k^{v \ast }\left( \mathbf{x}_k \right) \right) \nabla_{\mathbf{u}_k} \overline{\mathbf{f}}_k\left(\mathbf{u}_k^{v \ast }\left( \mathbf{x}_k \right) , \mathbf{0}, \mathbf{0}  \right)  \right]^T   \notag \\
	  & \hspace{2.7cm} +  \nabla_{\mathbf{x}_k}J_k \left(\mathbf{x}_k \right)  \bigg[   \sum_{j \neq k} \epsilon_{kj} \left( \mathbf{u}_j^{v \ast }\left( \mathbf{x}_j \right)   \nabla_{\epsilon_{kj} \mathbf{u}_j}  \overline{\mathbf{f}}_k\left(\mathbf{u}_k^{v \ast }\left( \mathbf{x}_k \right), \mathbf{0}, \mathbf{0} \right) \right. 	\notag \\
	  & \hspace{2.7cm} +  \left. \left(\mathbf{u}_j^{c \ast }\left( \mathbf{x} \right)-\mathbf{u}_j^{v \ast }\left( \mathbf{x}_j \right)   \right)  \nabla_{\mathbf{u}_j}\left( \mathbf{u}_j^{v \ast }\left( \mathbf{x}_j \right)   \nabla_{\epsilon_{kj} \mathbf{u}_j}  \overline{\mathbf{f}}_k\left(\mathbf{u}_k^{v \ast }\left( \mathbf{x}_k \right), \mathbf{0}, \mathbf{0} \right) \right)\right) \bigg]^T \Bigg)+ \mathcal{O}\left( \epsilon^2 \right)=0	\notag \\
	  \overset{(g)}\Rightarrow & \sum_{k=1}^K \Bigg(\alpha_k \| \mathbf{x}_k \| _{\mathbf{v}_k,1} + g_k(\mathbf{u}_k^{v \ast }\left( \mathbf{x}_k \right))  + \left(\mathbf{u}_k^{c \ast }\left( \mathbf{x} \right) -\mathbf{u}_k^{v \ast }\left( \mathbf{x}_k \right) \right) \nabla_{\mathbf{u}_k} g_k\left( \mathbf{u}_k^{v \ast }\left( \mathbf{x}_k \right)\right)	\notag \\
	  & + \nabla_{\mathbf{x}_k}J_k\left(\mathbf{x}_k\right)\left[ \ \overline{\mathbf{f}}_k\left(\mathbf{u}_k^{v \ast }\left( \mathbf{x}_k \right) , \mathbf{0}, \mathbf{0}  \right) + \left(\mathbf{u}_k^{c \ast }\left( \mathbf{x} \right) -\mathbf{u}_k^{v \ast }\left( \mathbf{x}_k \right) \right)  \nabla_{\mathbf{u}_k}\overline{\mathbf{f}}_k\left(\mathbf{u}_k^{v \ast }\left( \mathbf{x}_k \right), \mathbf{0}, \mathbf{0}  \right)  \right]^T  \notag \\
	  &+ \sum_{i=1}^K \sum_{j \neq i} \epsilon_{ij} \nabla_{\mathbf{x}_k} \widetilde{J}_{ij}\left( \mathbf{x} \right) \left[ \ \overline{\mathbf{f}}_k\left(\mathbf{u}_k^{v \ast }\left( \mathbf{x}_k \right) , \mathbf{0}, \mathbf{0}  \right) \right]^T  +  \nabla_{\mathbf{x}_k}J_k \left(\mathbf{x}_k \right) \left[   \sum_{j \neq k} \epsilon_{kj} \mathbf{u}_j^{v \ast }\left( \mathbf{x}_j \right)   \nabla_{\epsilon_{kj} \mathbf{u}_j}  \overline{\mathbf{f}}_k\left(\mathbf{u}_k^{v \ast }\left( \mathbf{x}_k \right), \mathbf{0}, \mathbf{0} \right)  \right]^T \Bigg)\notag \\
	  & + \mathcal{O}\left( \epsilon^2 \right)=0	\label{taylorHJB}
\end{align}
where $(e)$ is due to the fact that $\mathbf{u}_k^{c \ast }\left(\mathbf{x} \right) $ attains the minimum  in  (\ref{transformHJB}), $(f)$ is due to the first order Taylor expansion of (\ref{FirsTTSapp}) at $\mathbf{u}_k^{c \ast}\left(\mathbf{x} \right) = \mathbf{u}_k^{v \ast}\left(\mathbf{x}_k \right)$, $(g)$ is due to $ \sum_{i=1}^K \sum_{j \neq i} \epsilon_{ij} \nabla_{\mathbf{x}_k} \widetilde{J}_{ij}\left( \mathbf{x} \right) \left[ \left(\mathbf{u}_k^{c \ast }\left( \mathbf{x} \right) -\mathbf{u}_k^{v \ast } \left(\mathbf{x}_k \right)\right)  \overline{\mathbf{f}}_k\left(\mathbf{u}_k^{v \ast } \left(\mathbf{x}_k \right), \mathbf{0}, \mathbf{0} \right)  \right]^T  \\ +  \nabla_{\mathbf{x}_k}J_k \left(\mathbf{x}_k \right)   \left[   \sum_{j \neq k} \epsilon_{kj}   \left(\mathbf{u}_j^{c \ast }\left( \mathbf{x} \right)-\mathbf{u}_j^{v \ast } \left(\mathbf{x}_j \right)  \right)\nabla_{\mathbf{u}_j}\left( \mathbf{u}_j^{v \ast } \left(\mathbf{x}_j \right)  \nabla_{\epsilon_{kj} \mathbf{u}_j}  \overline{\mathbf{f}}_k\left(\mathbf{u}_k^{v \ast }\left(\mathbf{x}_k \right), \mathbf{0}, \mathbf{0} \right) \right)  \right]^T + \mathcal{O}\left( \epsilon^2 \right)  =  \mathcal{O}\left( \epsilon^2 \right)   $  according to (\ref{policydiff}). Because $\mathbf{u}_k^{v \ast }$ is the optimal policy that achieves the minimum of the per-flow HJB equation in (\ref{perflowHJB}), we have $\alpha_k \| \mathbf{x}_k \| _{\mathbf{v}_k,1} + g_k\left(\mathbf{u}_k^{v \ast }\left( \mathbf{x}_k\right)\right)  + \nabla_{\mathbf{x}_k} J(\mathbf{x}_k)   \left[ \ \overline{\mathbf{f}}_k\left(\mathbf{u}_k^{v \ast }\left( \mathbf{x}_k\right), \mathbf{0}, \mathbf{0}\right) \right]^T  =0$. Therefore, the equation in (\ref{taylorHJB}) can be simplified as
\begin{align}
	 &\sum_{k=1}^K \Bigg(  \left(\mathbf{u}_k^{c \ast } \left(\mathbf{x} \right)-\mathbf{u}_k^{v \ast } \left(\mathbf{x}_k \right) \right)\left( \nabla_{\mathbf{u}_k} g_k\left( \mathbf{u}_k^{v \ast } \left(\mathbf{x}_k \right)\right)+ \nabla_{\mathbf{u}_k}\overline{\mathbf{f}}_k\left(\mathbf{u}_k^{v \ast } \left(\mathbf{x}_k \right), \mathbf{0}, \mathbf{0}\right) \left[ \nabla_{\mathbf{x}_k}J_k\left(\mathbf{x}_k\right) \right]^T \right) \notag \\
	  &+ \sum_{i=1}^K \sum_{j \neq i} \epsilon_{ij} \nabla_{\mathbf{x}_k} \widetilde{J}_{ij}\left( \mathbf{x} \right) \left[ \ \overline{\mathbf{f}}_k\left(\mathbf{u}_k^{v \ast } \left(\mathbf{x}_k \right), \mathbf{0}, \mathbf{0} \right) \right]^T  +  \nabla_{\mathbf{x}_k}J_k \left(\mathbf{x}_k \right) \left[   \sum_{j \neq k} \epsilon_{kj} \mathbf{u}_j^{v \ast } \left(\mathbf{x}_j \right)   \nabla_{\epsilon_{kj} \mathbf{u}_j}  \overline{\mathbf{f}}_k\left(\mathbf{u}_k^{v \ast } \left(\mathbf{x}_k \right), \mathbf{0}, \mathbf{0} \right)  \right]^T \Bigg) 	\notag \\
	 & + \mathcal{O}\left( \epsilon^2 \right)=0		\notag	\\
	 \Rightarrow & \sum_{k=1}^K \left( \left(\mathbf{u}_k^{c \ast } \left(\mathbf{x} \right)-\mathbf{u}_k^{v \ast } \left(\mathbf{x}_k \right) \right)\left( \nabla_{\mathbf{u}_k} g_k\left( \mathbf{u}_k^{v \ast } \left(\mathbf{x}_k \right)\right)+ \nabla_{\mathbf{u}_k}\overline{\mathbf{f}}_k\left(\mathbf{u}_k^{v \ast } \left(\mathbf{x}_k \right), \mathbf{0}, \mathbf{0}\right) \left[ \nabla_{\mathbf{x}_k}J_k\left(\mathbf{x}_k\right) \right]^T \right)  \right) \notag \\
	 & +  \sum_{i=1}^K \sum_{j \neq i} \epsilon_{ij}  \left(\sum_{k=1}^K \nabla_{\mathbf{x}_k} \widetilde{J}_{ij}\left( \mathbf{x} \right) \left[ \ \overline{\mathbf{f}}_k\left(\mathbf{u}_k^{v \ast } \left(\mathbf{x}_k \right), \mathbf{0}, \mathbf{0}\right) \right]^T +\nabla_{\mathbf{x}_i}J_i \left(\mathbf{x}_i \right) \left[  \mathbf{u}_j^{v \ast } \left(\mathbf{x}_j \right)   \nabla_{\epsilon_{ij} \mathbf{u}_j}  \overline{\mathbf{f}}_i\left(\mathbf{u}_i^{v \ast }\left( \mathbf{x}_i \right), \mathbf{0}, \mathbf{0} \right)  \right]^T  \right) \notag \\
	 & +  + \mathcal{O}\left( \epsilon^2 \right)=0		\notag	\\
	 \Rightarrow & \sum_{k=1}^K G_k\left(\mathbf{u}_k^{c \ast} \left(\mathbf{x} \right),\mathbf{u}_k^{v \ast}  \left(\mathbf{x}_k \right)\right)   + \sum_{i=1}^K \sum_{j \neq i} \epsilon_{ij} F_{ij}\left(\mathbf{x}, \left\{ \mathbf{u}_k^{v \ast} \right\} \right)	  + \mathcal{O}\left( \epsilon^2 \right)=0		\label{mostimp}
\end{align}
where $G_k\left(\mathbf{u}_k^{c \ast} \left(\mathbf{x} \right),\mathbf{u}_k^{v \ast}  \left(\mathbf{x}_k \right)\right) \triangleq   \left(\mathbf{u}_k^{c \ast } \left(\mathbf{x} \right)-\mathbf{u}_k^{v \ast } \left(\mathbf{x}_k \right) \right)\left( \nabla_{\mathbf{u}_k} g_k\left( \mathbf{u}_k^{v \ast } \left(\mathbf{x}_k \right)\right)+ \nabla_{\mathbf{u}_k}\overline{\mathbf{f}}_k\left(\mathbf{u}_k^{v \ast } \left(\mathbf{x}_k \right), \mathbf{0}, \mathbf{0}\right) \left[ \nabla_{\mathbf{x}_k}J_k\left(\mathbf{x}_k\right) \right]^T \right) $, $ F_{ij}\left(\mathbf{x}, \left\{ \mathbf{u}_k^{v \ast}\left(\mathbf{x}_k \right) \right\} \right) \triangleq \sum_{k=1}^K \nabla_{\mathbf{x}_k} \widetilde{J}_{ij}\left( \mathbf{x} \right) \left[ \ \overline{\mathbf{f}}_k\left(\mathbf{u}_k^{v \ast } \left(\mathbf{x}_k \right), \mathbf{0}, \mathbf{0}\right) \right]^T +\nabla_{\mathbf{x}_i}J_i \left(\mathbf{x}_i \right) \left[  \mathbf{u}_j^{v \ast } \left(\mathbf{x}_j \right)   \nabla_{\epsilon_{ij} \mathbf{u}_j}  \overline{\mathbf{f}}_i\left(\mathbf{u}_i^{v \ast }\left( \mathbf{x}_i \right), \mathbf{0}, \mathbf{0} \right)  \right]^T$. According to Corollary \ref{corrAppd}, we have that either $G_k\left(\mathbf{u}_k^{c \ast} \left(\mathbf{x} \right),\mathbf{u}_k^{v \ast}  \left(\mathbf{x}_k \right)\right) =0$ or $G_k\left(\mathbf{u}_k^{c \ast} \left(\mathbf{x} \right),\mathbf{u}_k^{v \ast}  \left(\mathbf{x}_k \right)\right) = \sum_{i \geq 0} \epsilon^{1+\delta_i} \tilde{f}_i \left( \left\{\mathbf{u}_k\left( \widetilde{\mathbf{x}}_k^{v \ast} \right) \right\}, B_k\right) $ for some function $\tilde{f}_i $ and positive constants $\{\delta_i: i \geq 0\}$, where $0 < \delta_0 \leq \delta_1 \leq \delta_2 \leq \dots \leq \infty $, $\mathbf{u}_k\left( \widetilde{\mathbf{x}}_k^{v \ast} \right) $ is the virtual  action that achieves the minimum in (\ref{ukv}) when $\mathcal{U}_k=\mathbb{R}$, $B_k$ is the boundary of the sub-system action space $\mathcal{U}_k$.
We then discuss the equation in (\ref{mostimp}) in the following two cases:
\begin{enumerate}
	\item  $G_k\left(\mathbf{u}_k^{c \ast} \left(\mathbf{x} \right),\mathbf{u}_k^{v \ast}  \left(\mathbf{x}_k \right)\right) =0$:  Note that $\mathcal{O}\left( \epsilon^2 \right)$ in (\ref{mostimp}) represents $\sum_{i \geq 0} \epsilon^{2+\delta_i'} \tilde{g}_i \left(\mathbf{x},\left\{ \mathbf{u}_k^{v \ast} \right\} \right)$ for some function $\tilde{g}_i$ where $0=\delta_0' \leq \delta_1' \leq \delta_2' \leq \dots \leq \infty$.  In this case, the equation in (\ref{mostimp}) can be written as
	\begin{align}
	  \sum_{i=1}^K \sum_{j \neq i} \epsilon_{ij}	F_{ij}\left(\mathbf{x}, \left\{ \mathbf{u}_k^{v \ast} \right\} \right)  + \sum_{i \geq 0} \epsilon^{2+\delta_i'} \tilde{g}_i \left(\mathbf{x}, \left\{ \mathbf{u}_k^{v \ast} \right\} \right)=0		\label{caseapp111}
	\end{align}
	  In order for the  equation in (\ref{caseapp111}) to hold for any coupling parameter $\epsilon_{ij}$, we have $F_{ij}\left(\mathbf{x}, \left\{ \mathbf{u}_k^{v \ast} \right\} \right)=0$ $(\forall i, j)$ and $\tilde{g}_i \left(\mathbf{x},\left\{ \mathbf{u}_k^{v \ast} \right\} \right)=0$ $(\forall i)$.
	\item  $G_k\left(\mathbf{u}_k^{c \ast} \left(\mathbf{x} \right),\mathbf{u}_k^{v \ast}  \left(\mathbf{x}_k \right)\right) = \sum_{i \geq 0} \epsilon^{1+\delta_i} \tilde{f}_i \left( \left\{\mathbf{u}_k\left( \widetilde{\mathbf{x}}_k^{v \ast} \right) \right\}, B_k\right)$: in this case, the equation in (\ref{mostimp}) can be written as
	\begin{align}
	  \sum_{k=1}^K \sum_{i \geq 0} \epsilon^{1+\delta_i} \tilde{f}_i \left( \left\{\mathbf{u}_k\left( \widetilde{\mathbf{x}}_k^{v \ast} \right) \right\}, B_k\right) + \sum_{i=1}^K \sum_{j \neq i} \epsilon_{ij}	F_{ij}\left(\mathbf{x}, \left\{ \mathbf{u}_k^{v \ast} \right\} \right)  + \sum_{i \geq 0} \epsilon^{2+\delta_i'} \tilde{g}_i \left(\mathbf{x}, \left\{ \mathbf{u}_k^{v \ast} \right\} \right)=0		\label{caseapp222}
	\end{align}
	In order for the  equation in (\ref{caseapp222}) to hold for any coupling parameter $\epsilon_{ij}$, we have $\tilde{f}_i \left( \left\{\mathbf{u}_k\left( \widetilde{\mathbf{x}}_k^{v \ast} \right) \right\}, B_k\right)=0$ $(\forall i,k)$, $F_{ij}\left(\mathbf{x}, \left\{ \mathbf{u}_k^{v \ast} \right\} \right)=0$ $(\forall i, j)$ and $\tilde{g}_i \left(\mathbf{x}, \left\{ \mathbf{u}_k^{v \ast} \right\} \right)=0$ $(\forall i)$.
\end{enumerate}
Therefore, based on the analysis in the above two cases, we conclude that in order for the  equation in (\ref{mostimp}) to hold for any coupling parameter $\epsilon_{ij}$,  we need $F_{ij}\left(\mathbf{x}, \left\{ \mathbf{u}_k^{v \ast} \right\} \right)=0$ $(\forall i, j)$, i.e., 
\begin{align}
	\sum_{k=1}^K \nabla_{\mathbf{x}_k} \widetilde{J}_{ij}\left( \mathbf{x} \right) \left[ \ \overline{\mathbf{f}}_k\left(\mathbf{u}_k^{v \ast }\left(\mathbf{x}_k \right), \mathbf{0}, \mathbf{0} \right) \right]^T +\nabla_{\mathbf{x}_i}J_i \left(\mathbf{x}_i \right) \left[  \mathbf{u}_j^{v \ast }\left(\mathbf{x}_j \right)   \nabla_{\epsilon_{ij} \mathbf{u}_j}  \overline{\mathbf{f}}_i\left(\mathbf{u}_i^{v \ast }\left(\mathbf{x}_i \right), \mathbf{0}, \mathbf{0} \right)  \right]^T =0	\label{PDEJijDef}
\end{align}

Finally, we obtain the boundary condition of the PDE in (\ref{PDEJijDef}).  Replacing $j$ with $k$ and letting $\mathbf{x}_j = \mathbf{0}$ in (\ref{approxJxx}), we have
\begin{align}		
	J\left(\mathbf{x} ; \boldsymbol{\epsilon}\right)\Big|_{\mathbf{x}_j=\mathbf{0}} &= \sum_{i=1}^K J_i\left(\mathbf{x}_i\right) + \sum_{i=1}^K \sum_{k \neq i} \epsilon_{ik} \widetilde{J}_{ik}\left( \mathbf{x}\right)\Big|_{\mathbf{x}_j=\mathbf{0}} + \mathcal{O}\left( \epsilon^2 \right)	\notag \\
	&\overset{(h)}= \sum_{i=1,  i \neq j}^K J_i\left(\mathbf{x}_i\right) + \sum_{i=1}^K \sum_{k \neq i} \epsilon_{ik} \widetilde{J}_{ik}\left( \mathbf{x}\right)\Big|_{\mathbf{x}_j=\mathbf{0}} + \mathcal{O}\left( \epsilon^2 \right)	\label{ohundequ}
\end{align}
where $(h)$ is due to $J_j\left(\mathbf{0} \right)=0$. According to the definition in (\ref{totalU}), $J\left(\mathbf{x} ; \boldsymbol{\epsilon}\right)\Big|_{\mathbf{x}_j=\mathbf{0}} $ is the optimal total cost when the initial  system state is $\mathbf{x}_i\left(0\right)=\mathbf{x}_i$ ($\forall i \neq j$) and $\mathbf{x}_j\left(0 \right)=\mathbf{0}$. At this initial condition, the $j$-th sub-system stays at the initial zero state to maintain stability. Therefore, when $\mathbf{x}_j = \mathbf{0}$, the original global $K$ dimensional system is equivalent to  a virtual $(K-1)$ dimensional system with global system state being $\left(\mathbf{x}_1, \dots, \mathbf{x}_{j-1},\mathbf{x}_{j+1},\dots,\mathbf{x}_K \right)$. We use $J^v \left(\mathbf{x}_1, \dots, \mathbf{x}_{j-1},\mathbf{x}_{j+1},\dots,\mathbf{x}_K;  \boldsymbol{\epsilon} \right)$ to denote the optimal total cost for the virtual $(K-1)$ dimensional system and hence, we have  
\begin{align}
	J\left(\mathbf{x} ; \boldsymbol{\epsilon}\right)\Big|_{\mathbf{x}_j=\mathbf{0}}=J^v \left(\mathbf{x}_1, \dots, \mathbf{x}_{j-1},\mathbf{x}_{j+1},\dots,\mathbf{x}_K; \boldsymbol{\epsilon} \right)	\label{equbase2}
\end{align}
Furthermore,  similar as (\ref{approxJxx}), we have
\begin{align}			\label{boundequ}
	  & J^v \left(\mathbf{x}_1, \dots, \mathbf{x}_{j-1},\mathbf{x}_{j+1},\dots,\mathbf{x}_K; \boldsymbol{\epsilon} \right) 	\notag \\
	 =  & \sum_{i=1,  i \neq j}^K J_i\left(\mathbf{x}_i\right) + \sum_{i=1, i \neq j}^K \sum_{k \neq i, j} \epsilon_{ik} \widetilde{J}_{ik}^v\left( \mathbf{x}_1, \dots, \mathbf{x}_{j-1},\mathbf{x}_{j+1},\dots,\mathbf{x}_K\right)+ \mathcal{O}\left( \epsilon^2 \right)
\end{align}
where we denote $\widetilde{J}_{ik}^v\left( \mathbf{x}_1, \dots, \mathbf{x}_{j-1},\mathbf{x}_{j+1},\dots,\mathbf{x}_K\right) \triangleq \frac{\partial J^v \left(\mathbf{x}_1, \dots, \mathbf{x}_{j-1},\mathbf{x}_{j+1},\dots,\mathbf{x}_K; \boldsymbol{\epsilon} \right)}{\partial \epsilon_{ik}} \Big|_{\boldsymbol{\epsilon}=\mathbf{0}}$. Based on (\ref{ohundequ}) and (\ref{boundequ}), we have
\begin{align}
	0 &\overset{(i)}= J\left(\mathbf{x} ; \boldsymbol{\epsilon}\right)\Big|_{\mathbf{x}_j=\mathbf{0}} - J^v \left(\mathbf{x}_1, \dots, \mathbf{x}_{j-1},\mathbf{x}_{j+1},\dots,\mathbf{x}_K; \boldsymbol{\epsilon} \right) \notag \\
	& = \text{R.H.S. of } (\ref{ohundequ}) - \text{R.H.S. of } (\ref{boundequ})	\notag \\
	& = \sum_{i=1}^K \sum_{k \neq i} \epsilon_{ik} \widetilde{J}_{ik}\left( \mathbf{x}\right)\Big|_{\mathbf{x}_j=\mathbf{0}} -  \sum_{i=1, i \neq j}^K \sum_{k \neq i, j} \epsilon_{ik} \widetilde{J}_{ik}^v\left( \mathbf{x}_1, \dots, \mathbf{x}_{j-1},\mathbf{x}_{j+1},\dots,\mathbf{x}_K\right)+ \mathcal{O}\left( \epsilon^2 \right)	\notag \\
	&\overset{(j)} = \sum_{i=1, i \neq j}^K	  \epsilon_{ij}  \widetilde{J}_{ij}\left( \mathbf{x}\right)\Big|_{\mathbf{x}_j=\mathbf{0}} + \sum_{i=1, i \neq j}^K	  \epsilon_{ji}  \widetilde{J}_{ji}\left( \mathbf{x}\right)\Big|_{\mathbf{x}_j=\mathbf{0}}   + \mathcal{O}\left( \epsilon^2 \right)	\label{boundequuu}
\end{align}
where $(i)$ is due to (\ref{equbase2}), $(j)$ is due to $\widetilde{J}_{ik}\left( \mathbf{x}\right)\big|_{\mathbf{x}_j=\mathbf{0}} = \frac{\partial J(\mathbf{x}; \boldsymbol{\epsilon})}{\partial \epsilon_{ik}}\big|_{\boldsymbol{\epsilon}=\mathbf{0},\mathbf{x}_j=\mathbf{0}} = \frac{\partial J(\mathbf{x}; \boldsymbol{\epsilon})|_{\mathbf{x}_j=\mathbf{0}}}{\partial \epsilon_{ik}}\big|_{\boldsymbol{\epsilon}=\mathbf{0}}= \\ \frac{\partial J^v \left(\mathbf{x}_1, \dots, \mathbf{x}_{j-1},\mathbf{x}_{j+1},\dots,\mathbf{x}_K; \boldsymbol{\epsilon} \right)}{\partial \epsilon_{ik}}\big|_{\boldsymbol{\epsilon}=\mathbf{0}}=\widetilde{J}_{ik}^v\left( \mathbf{x}_1, \dots, \mathbf{x}_{j-1},\mathbf{x}_{j+1},\dots,\mathbf{x}_K\right)$  ($\forall i\neq j,k \neq j$). In order for (\ref{boundequuu}) to hold  for any coupling parameter $\epsilon_{ij}$, we have $\widetilde{J}_{ij}\left( \mathbf{x}\right)\Big|_{\mathbf{x}_j=\mathbf{0}}=0$ ($\forall i \neq j$) and $ \widetilde{J}_{ji}\left( \mathbf{x}\right)\Big|_{\mathbf{x}_j=\mathbf{0}}=0$ ($\forall i \neq j$). Therefore, the boundary condition for the PDE that defines $\widetilde{J}_{ij}\left( \mathbf{x}\right)$ in (\ref{PDEJijDef}) is given by
\begin{align}			\label{boundequ1}
	  \widetilde{J}_{ij}\left( \mathbf{x}\right)\Big|_{\mathbf{x}_j=\mathbf{0}}=0
\end{align}
According to the \emph{transversality condition} \cite{IntroPDE},  the PDE in (\ref{PDEJijDef}) with boundary condition in (\ref{boundequ1}) has a unique solution.  Replacing the subscript $i$ with $k$ and $k$ with $i$ in (\ref{approxJthm}), (\ref{PDEJijDef}) and (\ref{boundequ1}), we obtain the result in Theorem \ref{RelashipJsJ}.

\section*{Appendix E: Proof of Lemma \ref{NUMprob}}	
Using the first order Taylor expansion of $V\left(\mathbf{x}\left(t+1 \right) \right)$ at $\mathbf{x}\left(t+1 \right)=\mathbf{x}$ (with $\mathbf{x}\left(t  \right)=\mathbf{x}$), minimizing  the R.H.S. of the Bellman equation in  (\ref{OrgBel}) is equivalent to
\begin{align}
	\text{R.H.S. of } (\ref{OrgBel}) &  \Leftrightarrow  \min_{ \mathbf{u }} \left[ \sum_{k=1}^K  g_k\left( \mathbf{u}_k \right) + \mathbb{E} \left[ \sum_{n=1}^\infty  \sum_{k=1}^K   \frac{\nabla_{\mathbf{x}_k}^{\left(n\right)} V_k \left(\mathbf{x}_k\right)}{n!} \left[   \left[ \mathbf{f}_k\left(\mathbf{u}_k, \mathbf{u}_{-k},\boldsymbol{\epsilon}_k\right) + \mathbf{z}_k \right]^{\left(n\right)}  \right]^T \right] \Bigg| \ \mathbf{x}_k, \mathbf{u}  \right]		\notag \\
	& =  \min_{ \mathbf{u }} \left[ \sum_{k=1}^K  g_k\left( \mathbf{u}_k \right) + \sum_{n=1}^\infty  \sum_{k=1}^K   \frac{\nabla_{\mathbf{x}_k}^{\left(n\right)} V_k \left(\mathbf{x}_k\right)}{n!} \left[  \mathbb{E} \left[ \left[ \mathbf{f}_k\left(\mathbf{u}_k, \mathbf{u}_{-k}, \boldsymbol{\epsilon}_k\right) + \mathbf{z}_k \right]^{\left(n\right)} \big| \ \mathbf{x}_k, \mathbf{u}  \right] \right]^T  \right]		\label{appendD1}
\end{align}
where $a \Leftrightarrow b$ means that $a$ is equivalent to $b$.
Using the per-flow fluid value function approximation in (\ref{approxVJ}), the above equation in (\ref{appendD1}) becomes
\begin{align}
	\text{R.H.S. of } (\ref{OrgBel}) & \overset{(a)} \Leftrightarrow \min_{ \mathbf{u }} \Bigg[ \sum_{k=1}^K \underbrace{ \left(  g_k\left( \mathbf{u}_k \right) + \sum_{n=1}^\infty     \frac{\nabla_{\mathbf{x}_k}^{\left(n\right)} J_k \left(\mathbf{x}_k\right)}{n!} \left[  \mathbb{E} \left[ \left[ \mathbf{f}_k\left(\mathbf{u}_k, \mathbf{u}_{-k},\boldsymbol{\epsilon}_k\right) + \mathbf{z}_k \right]^{\left(n\right)} \big| \ \mathbf{x}_k, \mathbf{u}  \right] \right]^T \right) }_{-U_k\left(\mathbf{u}_k,   \mathbf{u}_{-k}, \boldsymbol{\epsilon}_k  \right)}\Bigg]	\notag \\
	& \Leftrightarrow \max_{ \mathbf{u} } \sum_{k=1}^K  U_k\left(\mathbf{u}_k,   \mathbf{u}_{-k},\boldsymbol{\epsilon}_k  \right)
\end{align}
where $a$ is due to $\nabla_{\mathbf{x}_k} V_k \left(\mathbf{x}_k\right)  = \nabla_{\mathbf{x}_k} J_k\left(\mathbf{x}_k\right)$ under per-flow fluid value function approximation.  This proves the lemma.

\section*{Appendix F: Proof of Corollary \ref{collaryAlg}}	
	Under Assumption \ref{assumfg}, as the coupling parameter $\boldsymbol{\epsilon}$ goes to zero, the sum utility  $\sum_{k=1}^K  U_k\left(\mathbf{u}_k,   \mathbf{u}_{-k}, \boldsymbol{\epsilon} \right)$ of the NUM problem in (\ref{utility}) becomes asymptotically strictly concave in  control variable $\mathbf{u}$. Therefore, when $\boldsymbol{\epsilon}=\mathbf{0}$, the NUM problem in (\ref{utility}) is a  strictly concave maximization and hence it has a unique global optimal point.   Then according to Theorem \ref{conditionF}, when $\boldsymbol{\epsilon}=\mathbf{0}$ the limiting point $\mathbf{u}(\infty)$ of algorithm \ref{distgen} is the unique global optimal point of the NUM problem in (\ref{utility}).

\section*{Appendix G: Proof of Lemma \ref{Eg1Per}}	
Based on Lemma \ref{linearAp}, the per-flow fluid value function $J_k \left( q_k \right)$ of the $k$-th Tx-Rx pair is given by the solution of the following  per-flow HJB equation:
\begin{equation}			\label{perfloweg11}
		\min_{\mathbf{p}_k} \mathbb{E} \left[ \beta_k \frac{q_k}{\lambda_k}   + \gamma_k p_k^{\mathbf{H}} -c_k^{\infty} + J_k'\left(q_k \right)  \left( \lambda_k  - \log\left(1 + p_k^{\mathbf{H}} L_{kk} |H_{kk}|^2\right)  \tau \right)  \bigg| q_k\right] =0, \quad q_k \in \mathcal{Q}
	\end{equation}
The optimal policy that attains the minimum in (\ref{perfloweg11}) is given by
\begin{equation}	\label{poweropt}
	p_k^{\mathbf{H} \ast} = \bigg( \frac{J_k'\left(q_k \right)\tau}{\gamma_k} - \frac{1}{L_{kk} |H_{kk}|^2}  \bigg)^+
\end{equation}

Based on (\ref{perfloweg11}) and (\ref{poweropt}), the per-flow HJB equation can be transformed into the following ODE:
\begin{small}
\begin{equation}			\label{transformperf}
	 \beta_k \frac{q_k}{\lambda_k}  - c_k^{\infty} +  \mathbb{E} \left[ \left( J_k'\left(q_k \right)\tau  - \frac{\gamma_k }{L_{kk} |H_{kk}|^2}  \right)^+ \Bigg| q_k \right]  + J_k'\left(q_k \right)  \left[ \lambda_k -  \mathbb{E} \left[ \left(\log\left(\frac{J_k'\left(q_k \right)\tau L_{kk} |H_{kk}|^2}{\gamma_k}\right) \right)^+  \Bigg| q_k \right]  \tau \right]  =0	
\end{equation} 
\end{small}We next calculate $c_k^{\infty}$. Since $c_k^{\infty}$ satisfies the sufficient conditions in (\ref{suffcond11}), we have 
\begin{align}
	E_1\left(\frac{\gamma_k}{\tau L_{kk} J_k'\left(0 \right)}\right)\tau=\lambda_k, \quad J_k'\left(0 \right)\tau e^{- \frac{\gamma_k}{ L_{kk} J_k'\left(0 \right)\tau}} - \frac{\gamma_k}{L_{kk}}  E_1\left(\frac{\gamma_k}{ L_{kk} J_k'\left(0\right)\tau}\right) = c_k^{\infty}
\end{align}
Therefore, $c_k^{\infty}=v_k\tau e^{- \frac{\gamma_k}{ L_{kk} v_k\tau}} - \frac{\gamma_k}{L_{kk}}  E_1\left(\frac{\gamma_k}{ L_{kk} v_k\tau}\right)$ with $v_k$ satisfying $ E_1\left(\frac{\gamma_k}{\tau L_{kk} v_k}\right)\tau=\lambda_k$. 

To solve the ODE in (\ref{transformperf}), we need to calculate the two terms involving expectation operator. Since $H_{kk} \sim \mathcal{CN}(0,1)$, we have  $|H_{kk}|^2 \sim \exp(1)$.  Then, we have
\begin{align}
	&\mathbb{E} \left[  \left( J_k'\left(q_k \right)\tau - \frac{\gamma_k }{L_{kk} |H_{kk}|^2}  \right)^+  \Bigg| q_k \right] = \int_{\frac{\gamma_k}{ L_{kk} J_k'\left(q_k \right)\tau}}^{\infty}  \left( J_k'\left(q_k \right)\tau - \frac{\gamma_k }{L_{kk}x} \right)e^{-x}  \mathrm{d}x				\notag \\
	= & J_k'\left(q_k \right)\tau e^{- \frac{\gamma_k}{ L_{kk} J_k'\left(q_k \right)\tau}} - \frac{\gamma_k}{L_{kk}}  E_1\left(\frac{\gamma_k}{ L_{kk} J_k'\left(q_k \right)\tau}\right)	\label{expec1}	\\
	&\mathbb{E} \left[ \left(\log\left(  \frac{J_k'\left(q_k \right)\tau  L_{kk} |H_{kk}|^2}{\gamma_k}\right) \right)^+  \Bigg| q_k\right]	 =  \int_{\frac{\gamma_k}{\tau L_{kk} J_k'\left(q_k \right)}}^{\infty}  \log\left(\frac{J_k'\left(q_k \right)\tau L_{kk} x}{\gamma_k}\right)e^{-x}  \mathrm{d}x  \notag    \\
	=& \int_{\frac{\gamma_k}{\tau L_{kk} J_k'\left(q_k \right)}}^{\infty}  \log\left(\frac{J_k'\left(q_k \right)\tau L_{kk} }{\gamma_k}\right)e^{-x}   + \log \left(x \right) e^{-x}  \mathrm{d}x  \notag    \\
	=& \left(\log\left(\frac{\gamma_k}{\tau L_{kk}J_k'\left(q_k \right)}\right) + \log\left(\frac{\tau L_{kk}J_k'\left(q_k \right)}{\gamma_k}\right) \right)e^{- \frac{\gamma_k}{\tau L_{kk} J_k'\left(q_k \right)}}  +  E_1\left(\frac{\gamma_k}{\tau L_{kk} J_k'\left(q_k \right)}\right)	\notag	\\
	= & E_1\left(\frac{\gamma_k}{\tau L_{kk} J_k'\left(q_k \right)}\right)	\label{expec2}
\end{align}
where $E_1(z)  \triangleq \int_1^{\infty} \frac{e^{-tz}}{t}\mathrm{d}t = \int_z^{\infty} \frac{e^{-t}}{t}\mathrm{d}t $ is the exponential integral function. Substituting (\ref{expec1}) and (\ref{expec2}) into (\ref{transformperf}) and letting $a_k=\frac{\tau L_{kk}}{\gamma_k}$, we have
\begin{align}		
	 \beta_k \frac{q_k}{\lambda_k \tau}-\frac{c_k^{\infty}}{\tau}   +   J_k'\left(q_k \right) e^{- \frac{1}{a_k J_k'\left(q_k \right)}} -  \frac{1}{a_k}  E_1\left(\frac{1}{a_k J_k'\left(q_k \right)}\right) + J_k'\left(q_k \right)  \left( \frac{\lambda_k}{\tau}   -  E_1\left(\frac{1}{a_k J_k'\left(q_k \right)}\right)  \right)   =0		\label{transformODEapp}
\end{align} 
According to \cite{HandODE}, the parametric solution (w.r.t. $y$) of the ODE in (\ref{transformODEapp}) is given below
\begin{equation}		\label{appendjk}
 \left\{
	\begin{aligned}	 
		q_k(y) &=  \frac{\lambda_k \tau}{\beta_k} \left(  \left(\frac{1}{a_k}+y\right)E_1\left(\frac{1}{a_k y}\right) - \frac{\lambda_k}{\tau}y   -  e^{- \frac{1}{a_k y}} y + \frac{c_k^{\infty}}{\tau}\right)  		 \\
		J_k(y) &=   \frac{\lambda_k \tau}{\beta_k} \left( \frac{\left(1-a_k y\right)}{4a_k}ye^{-\frac{1}{a_k y}} -\frac{\lambda_k }{2 \tau}y^2 +\left(\frac{y^2}{2}-\frac{1}{4a_k^2}\right) E_1\left(\frac{1}{a_k y}\right) \right)  + b_k
	   \end{aligned}
   \right.
  \end{equation}
where $b_k$ is chosen such that the boundary condition $J(0)=0$ is satisfied. To find $b_k$, first find $y_k^0$  such that  $q_k(y_k^0)=0$. Then $b_k$ is chosen such that $J_k(y_k^0)=0$.

\section*{Appendix H: Proof of Corollary \ref{PropJk}}	
First, we obtain the highest order term of $J_k\left(q_k \right)$. The  series expansions of the exponential integral function and exponential function are given below
\begin{align}	\label{expansopmSs}
	E_1(x) = -\gamma_{eu} - \log x - \sum_{n=1}^{\infty} \frac{\left(-x \right)^n}{n ! n}, \quad e^x &=\sum_{n=0}^{\infty} \frac{x^n}{n!}
\end{align}
Then $q_k(y)$ in (\ref{pareJkEg1}) can be written as
\begin{align}
	q_k(y) =  \frac{\lambda_k \tau}{\beta_k} \left(  \left(\frac{1}{a_k}+y\right)\left( -\gamma_{eu} + \log \left(a_k y \right) - \sum_{n=1}^{\infty} \frac{\left(-1\right)^n}{n ! n \left(a_k y \right)^n} \right) - \frac{\lambda_k y}{\tau}   -  \left(\sum_{n=0}^{\infty} \frac{\left(-1 \right)^n}{n!y^n}  \right) y \right)  		\label{expand}
\end{align}
By expanding the above equation, it can be seen that $q_k(y)= \mathcal{O} \left(y \log y \right)$ as $y \rightarrow \infty$. In other words, there exist finite positive constants $C_1$ and $C_2$, such that for sufficiently large $y$, 
\begin{align}	
	& C_2 y \log y \leq q_k \left(y  \right)  \leq C_1 y \log y	\label{order1}		\\
	\Rightarrow & \frac{q_k/C_1}{W\left( q_k/C_1 \right)} \leq y \leq \frac{q_k/C_2}{W\left( q_k/C_2 \right)}	\label{order2111}
\end{align}
where $W$ is the \emph{Lambert} function.  Again, using the  series expansions in (\ref{expansopmSs}), $J_k(y)$ in (\ref{pareJkEg1})  has a similar  property:  there exist finite positive constants $C_1'$ and $C_2'$, such that for sufficiently large $y$,
\begin{align}	\label{order2}
	C_2' y^2 \log y \leq J_k \left(y  \right) \leq C_1' y^2 \log y
\end{align}
Combining (\ref{order2111}) and (\ref{order2}), we can improve the upper bound in (\ref{order2}) as
\begin{align}
	& C_1' y^2 \log y  \leq C_1' \frac{\left( q_k/C_2 \right)^2}{W^2\left( q_k/C_2 \right)} \left( \log\left( q_k/C_2\right)  - \log \left( W\left( q_k/C_2 \right) \right) \right) \notag \\
	\overset{(a)}\leq & C_1' \frac{\left( q_k/C_2 \right)^2}{W\left( q_k/C_2 \right)}  \overset{(b)}\leq C_1' \frac{\left( q_k/C_2 \right)^2}{\log\left( q_k/C_2 \right)-\log\log\left( q_k/C_2 \right)}  \leq \overline{C}_1 \frac{q_k^2}{\log \left( q_k \right)}	\label{upperq21}
\end{align}
for some positive constant $ \overline{C}_1$. $(a)$ is due to $W \left( x\right) = \log x - \log\left(W\left( x\right) \right)$ and $(b)$ is due to $\log x - \log \log x \leq W\left(x\right) \leq \log x$ ($x > e$) \cite{lambert}. Similarly, the lower bound in (\ref{order2}) can be improved as
\begin{align}
	C_2' y^2 \log y  \geq  C_2' \frac{\left( q_k/C_1 \right)^2}{W\left( q_k/C_1 \right)} \geq C_2' \frac{\left( q_k/C_1 \right)^2}{\log\left( q_k/C_1 \right)} \geq \overline{C}_2 \frac{q_k^2}{\log \left( q_k \right)}	\label{upperq22}
\end{align}
for some positive constant $\overline{C}_2$. Therefore, based on (\ref{upperq21}) and (\ref{upperq22}), we have
\begin{align}	
	& \overline{C}_2 \frac{q_k^2}{\log \left( q_k \right)}  \leq  J_k\left(q_k \right) \leq \overline{C}_1 \frac{q_k^2}{\log \left( q_k \right)} \label{orderJkapppre} \\
	\Rightarrow & J_k\left(q_k \right) = \mathcal{O} \left(\frac{q_k^2}{\log \left( q_k \right)}  \right),\quad  \text{as } q_k \rightarrow \infty		\label{orderJkapp}
\end{align}

Next, we  obtain the coefficient of the highest order term $\frac{q_k^2}{\log \left( q_k \right)}$. Again, using the series expansion in (\ref{expansopmSs}), the per-flow HJB equation in (\ref{transformODEapp}) can be written as
\begin{align}
	 \beta_k \frac{q_k}{\lambda_k \tau}   +   J_k'\left(q_k \right) & \left( \sum_{n=0}^{\infty} \frac{\left(-1 \right)^n}{n! \left( a_k J_k'\left(q_k \right) \right)^n} \right)  -  \frac{1}{a_k}  \left( -\gamma_{eu} + \log \left( a_k J_k'\left(q_k \right) \right) - \sum_{n=1}^{\infty} \frac{\left(-1 \right)^n}{n ! n \left( a_k J_k'\left(q_k \right) \right)^n}  \right)  \notag \\
	 &+ J_k'\left(q_k \right)  \left[ \frac{\lambda_k}{\tau}   -  \left(- \gamma_{eu} + \log \left( a_k J_k'\left(q_k \right) \right) - \sum_{n=1}^{\infty} \frac{\left(-1 \right)^n}{n ! n \left( a_k J_k'\left(q_k \right) \right)^n}  \right) \right]   =0		\label{orderqk}
\end{align}
According to the asymptotic property of $J_k\left(q_k \right)$ in (\ref{orderJkapp}), based on the  ODE in (\ref{orderqk}), we have  
\begin{align}	\label{simpJlogJ}
	J_k'\left(q_k \right) \log \left( J_k'\left(q_k \right) \right) = \frac{\beta_k}{\lambda_k \tau}\mathcal{O} \left(q_k\right),\quad  \text{as } q_k \rightarrow \infty
\end{align}
Furthermore, from (\ref{orderJkapppre}),  we have
\begin{align}
	 & 	\overline{C}_2 \frac{2q_k \log \left( q_k \right)-q_k}{\left(\log \left( q_k \right)\right)^2}	\leq J_k'\left(q_k \right) \leq \overline{C}_1 \frac{2q_k \log \left( q_k \right)-q_k}{\left(\log \left( q_k \right)\right)^2}	\notag \\
	\Rightarrow &  \overline{B}_2 \overline{C}_2 \frac{q_k}{\log \left( q_k \right)}	\leq J_k'\left(q_k \right) \leq \overline{B}_1\overline{C}_1 \frac{q_k}{\log \left( q_k \right)}	\label{kprimeorder}		\\
	\Rightarrow & \log \left(  \overline{B}_2 \overline{C}_2 \right) +  \log \left( q_k \right) - \log \log \left( q_k \right)	\leq \log \left( J_k'\left(q_k \right) \right) \leq \log \left( \overline{B}_1\overline{C}_1 \right) + \log \left( q_k \right) - \log \log \left( q_k \right)		\notag \\
	\Rightarrow & \overline{B}_2\overline{C}_2  q_k + o\left( q_k\right) \leq J_k'\left(q_k \right) \log \left( J_k'\left(q_k \right) \right) \leq \overline{B}_1\overline{C}_1  q_k + o\left( q_k\right),\quad  \text{as } q_k \rightarrow \infty	\label{jlogjorder}
\end{align}
where  $\overline{B}_1$ and $\overline{B}_2$  are some constants that are independent of  system parameters.  Comparing (\ref{jlogjorder}) with (\ref{simpJlogJ}), we have 
\begin{align}	
	& \overline{B}_1\overline{C}_1  \propto  \frac{\beta_k}{\lambda_k \tau}, \quad \overline{B}_2\overline{C}_2 \propto  \frac{\beta_k}{\lambda_k \tau}		\label{bcbcprop}	\\
	\overset{(c)} \Rightarrow &  \ \overline{C}_1  \propto  \frac{\beta_k}{\lambda_k \tau}, \hspace{0.5cm} \overline{C}_2 \propto  \frac{\beta_k}{\lambda_k \tau}	\label{c1c2prop}
\end{align}
where $x \propto y$ means that $x$ is proportional to $y$. $(c)$ is due to the fact that  $\overline{B}_1$ and $\overline{B}_2$  are independent of system parameters. Finally, based on (\ref{bcbcprop}) and (\ref{c1c2prop}), we conclude
\begin{align}
	& J_k\left(q_k \right) = \frac{\beta_k}{\lambda_k \tau}\mathcal{O}\left(\frac{q_k^2}{\log \left( q_k \right)} \right),\quad  \text{as } q_k \rightarrow \infty	\\
	& J_k'\left(q_k \right) = \frac{\beta_k}{\lambda_k \tau}\mathcal{O}\left(\frac{q_k}{\log \left( q_k \right)} \right),\quad  \text{as } q_k \rightarrow \infty	
\end{align}
This completes the proof.

\section*{Appendix I: Proof of Lemma \ref{ErrorEg1}}	
We use the linear architecture $\sum_{k=1}^K J_k\left(q_k\right)$ to approximate $J\left(\mathbf{q}; \mathbf{L}\right)$, i.e., $J\left(\mathbf{q}; \mathbf{L}\right) \approx \sum_{k=1}^K J_k\left(q_k\right)$. According to Theorem \ref{RelashipJsJ}, the approximation error is given by 
\begin{align}	\label{Jijapprox}
	\left| J\left(\mathbf{q}; \mathbf{L}\right) - \sum_{k=1}^K J_k\left(q_k\right) \right| =\sum_{k=1}^K \sum_{j \neq k} L_{kj}  \left| \widetilde{J}_{kj}\left(\mathbf{q}\right) \right|  + \mathcal{O}(L^2), \quad \text{as } L \rightarrow 0
\end{align}
$\widetilde{J}_{kj}\left(\mathbf{q}\right)$ is the solution of the following first order PDE,
\begin{align}	\label{PDEappd}
	\sum_{i=1}^K \left(\lambda_i  - \mathbb{E} \left[  \log\left(1 + p_i^{\mathbf{H} \ast} L_{ii} |H_{ii}|^2  \right) | \mathbf{q}  \right]\tau  \right) \frac{\mathrm{d} \widetilde{J}_{kj}\left(\mathbf{q}\right) }{\mathrm{d} q_i} -  \mathbb{E} \left[  \frac{J_k'\left(q_k\right)p_k^{\mathbf{H}\ast}L_{kk}|H_{kk}|^2}{1+p_k^{\mathbf{H}\ast} L_{kk}|H_{kk}|^2} p_j^{\mathbf{H}\ast} |H_{kj}|^2 \Bigg| \mathbf{q}  \right] =0
\end{align}
where $p_i^{\mathbf{H}\ast}$ and  $J_i\left(q_i\right)$ are given in (\ref{perfloweg11}) and (\ref{appendjk}), respectively. Next, we calculate the two terms involving expectation operator  in (\ref{PDEappd}). First, we have $\mathbb{E} \left[  \log\left(1 + p_i^{\mathbf{H}} L_{ii} |H_{ii}|^2  \right) | \mathbf{q}  \right] = E_1\left(\frac{\gamma_i}{\tau L_{ii} J_i'\left(q_i \right)}\right)$ according to (\ref{expec2}). Using the  series expansions in (\ref{expansopmSs}), it can be further written as 
\begin{align}
	& \mathbb{E} \left[  \log\left(1 + p_i^{\mathbf{H}\ast} L_{ii} |H_{ii}|^2  \right) | \mathbf{q}  \right]  	\notag \\
	=&  -\gamma_{eu} + \log \frac{ L_{ii} J_i'\left(q_i \right)\tau}{\gamma_i}  - \sum_{n=1}^{\infty} \frac{\left(-x \right)^n}{n ! n} \left(- \frac{\gamma_i}{ L_{ii} J_i'\left(q_i \right)\tau} \right)^n	\notag \\
	=&  -\gamma_{eu} + \log \frac{ L_{ii} J_i'\left(q_i \right)\tau}{\gamma_i} + o\left(1 \right)		\notag \\
	\overset{(a)}=&  \mathcal{O}\left( \log \left(q_i \right)   \right)-\gamma_{eu}+  \log \frac{ L_{ii} \tau}{\gamma_i} + o\left(1 \right) = \mathcal{O}\left( \log \left(q_i \right)   \right),\quad  \text{as } q_i \rightarrow \infty	     \label{pdeapped1}
\end{align}
where $(a)$ is due to $ \overline{B}_2 \overline{C}_2 \frac{q_i}{\log \left( q_i \right)} \leq J_i'\left(q_i \right) \leq \overline{B}_1\overline{C}_1 \frac{q_i}{\log \left( q_i \right)} \left(\text{according to } (\ref{kprimeorder}) \right) \Rightarrow \log \left ( \overline{B}_2 \overline{C}_2\right) + \log \left( q_i\right)  - \log \log \left(q_i \right) \leq J_i'\left(q_i \right) \leq \log \left( \overline{B}_1\overline{C}_1\right) + \log \left(q_i \right)  + \log \log \left( q_i \right) \Rightarrow  \log \left( q_i\right)  + o\left( \log \left( q_i\right)  \right) \leq J_i'\left(q_i \right) \leq  \log \left( q_i\right)  + o\left( \log \left( q_i\right)  \right) \Rightarrow J_i'\left(q_i \right) = \mathcal{O} \left( \log \left(q_i \right)   \right)$. Second, we calculate $\mathbb{E} \left[  \frac{J_k'\left(q_k\right)p_k^{\mathbf{H}\ast}L_{kk}|H_{kk}|^2}{1+p_k^{\mathbf{H}\ast} L_{kk}|H_{kk}|^2} p_j^{\mathbf{H}\ast} |H_{kj}|^2 \Big| \mathbf{q}  \right]$. Note that $\mathbb{E} \left[  \frac{J_k'\left(q_k\right)p_k^{\mathbf{H}\ast}L_{kk}|H_{kk}|^2}{1+p_k^{\mathbf{H}\ast} L_{kk}|H_{kk}|^2}  p_j^{\mathbf{H}\ast} |H_{kj}|^2 \Big| \mathbf{q}  \right] = \mathbb{E} \left[ \frac{J_k'\left(q_k\right)p_k^{\mathbf{H}\ast}L_{kk}|H_{kk}|^2}{1+p_k^{\mathbf{H}\ast} L_{kk}|H_{kk}|^2}  \Big| \mathbf{q}  \right] \cdot  \mathbb{E} \left[ p_j^{\mathbf{H}\ast}  \Big| \mathbf{q}  \right]  \cdot  \mathbb{E} \left[ |H_{kj}|^2    \right]  $ and each term is calculated as follows:
\begin{align}
	 &\mathbb{E} \left[ \frac{J_k'\left(q_k\right)p_k^{\mathbf{H}\ast}L_{kk}|H_{kk}|^2}{1+p_k^{\mathbf{H}\ast} L_{kk}|H_{kk}|^2}  \Big| \mathbf{q}  \right] =  \int_{\frac{\gamma_k}{ L_{kk} J_k'\left(q_k \right)\tau}}^{\infty}  \left( J_k'\left(q_k \right) - \frac{\gamma_k }{\tau L_{kk} x} \right)e^{-x}  \mathrm{d}x 	\notag	 \\
	 &\hspace{4.5cm}= J_k'\left(q_k \right)  e^{- \frac{\gamma_k}{ L_{kk} J_k'\left(q_k \right)\tau}} - \frac{\gamma_k}{L_{kk}\tau}  E_1\left(\frac{\gamma_k}{ L_{kk} J_k'\left(q_k \right)\tau}\right)		\label{equapped11}	 \\
	&\mathbb{E} \left[ p_j^{\mathbf{H}\ast}  \Big| \mathbf{q}  \right] \overset{(b)}= \frac{J_j'\left(q_j \right)\tau}{\gamma_j} e^{- \frac{\gamma_j}{ L_{jj} J_j'\left(q_j \right)\tau}} - \frac{1}{L_{jj}}  E_1\left(\frac{\gamma_j}{ L_{jj} J_j'\left(q_j \right)\tau}\right)		 \label{equapped2} \\
	&\mathbb{E} \left[ |H_{kj}|^2    \right] = \int_{0}^{\infty} x e^{-x} \mathrm{d} x =1
\end{align}
where $(b)$  is calculated according to (\ref{expec1}). Using the  series expansions in (\ref{expansopmSs}), the equations in (\ref{equapped11})  and (\ref{equapped2})  can be further written as 
\begin{align}
	&\mathbb{E} \left[ \frac{J_k'\left(q_k\right)p_k^{\mathbf{H}\ast}L_{kk}|H_{kk}|^2}{1+p_k^{\mathbf{H}\ast} L_{kk}|H_{kk}|^2}  \Big| \mathbf{q}  \right]    \notag \\
	=& J_k'\left(q_k \right)  \sum_{n=0}^{\infty} \frac{1}{n!}\left(- \frac{\gamma_k}{ L_{kk} J_k'\left(q_k \right)\tau} \right)^n - \frac{\gamma_k}{L_{kk}\tau} \left(  -\gamma_{eu} + \log \frac{ L_{kk} J_k'\left(q_k \right)\tau}{\gamma_k}  - \sum_{n=1}^{\infty} \frac{\left(-x \right)^n}{n ! n} \left(- \frac{\gamma_k}{ L_{kk} J_k'\left(q_k \right)\tau} \right)^n \right)  	\notag \\
	=&  J_k'\left(q_k \right) - \frac{\gamma_k}{ L_{kk} \tau}  + \frac{\gamma_k}{L_{kk}\tau} \left(  \gamma_{eu} + \log \frac{\gamma_k}{ L_{kk} J_k'\left(q_k \right)\tau}   \right)   + o\left(1 \right),\quad  \text{as } q_k \rightarrow \infty			 \\
	&\mathbb{E} \left[ p_j^{\mathbf{H}\ast}  \Big| \mathbf{q}  \right] \notag \\
	=& \frac{J_j'\left(q_j \right)\tau}{\gamma_j} \sum_{n=0}^{\infty} \frac{1}{n!}\left(- \frac{\gamma_j}{ L_{jj} J_j'\left(q_j \right)\tau} \right)^n - \frac{1}{L_{jj}} \left(  -\gamma_{eu} + \log \frac{ L_{jj} J_j'\left(q_j \right)\tau}{\gamma_j}  - \sum_{n=1}^{\infty} \frac{\left(-x \right)^n}{n ! n} \left(- \frac{\gamma_j}{ L_{jj} J_j'\left(q_j \right)\tau} \right)^n \right)		\notag \\
	= & \frac{\tau}{\gamma_j} \left(  J_j'\left(q_j \right) - \frac{\gamma_j}{ L_{jj} \tau}  + \frac{\gamma_j}{L_{jj}\tau} \left(  \gamma_{eu} + \log \frac{\gamma_j}{ L_{jj} J_j'\left(q_j \right)\tau}   \right)   \right),\quad  \text{as } q_j \rightarrow \infty	
\end{align}
Therefore, we have
\begin{align}
	 & \mathbb{E} \left[  \frac{J_k'\left(q_k\right)p_k^{\mathbf{H}\ast}L_{kk}|H_{kk}|^2}{1+p_k^{\mathbf{H}\ast} L_{kk}|H_{kk}|^2} p_j^{\mathbf{H}\ast} |H_{kj}|^2 \Big| \mathbf{q}  \right] 	\notag \\
	 =  & \frac{\tau}{\gamma_j}J_k'\left(q_k \right)  J_j'\left(q_j \right) + o\left( J_k'\left(q_k \right)  J_j'\left(q_j \right) \right)		\notag	\\
	 = &  \frac{\beta_k \beta_j}{\gamma_j \lambda_k \lambda_j \tau} \mathcal{O} \left(\frac{q_k q_j}{\log \left( q_k \right) \log \left(q_j \right)} \right),\quad  \text{as } q_j, q_k \rightarrow \infty 	\label{pdeapped2}
\end{align}
Based on (\ref{pdeapped1}) and (\ref{pdeapped2}), the PDE in (\ref{PDEappd}) can be written as 
\begin{align}		\label{highlestT}
	\sum_{i=1}^K \left(\lambda_i - \tau \mathcal{O} \left( \log \left(q_i \right)   \right)\right) \frac{\mathrm{d} \widetilde{J}_{kj}\left(\mathbf{q}\right) }{\mathrm{d} q_i} = \frac{\beta_k \beta_j}{\gamma_j \lambda_k \lambda_j \tau} \mathcal{O} \left(\frac{q_k q_j}{\log \left( q_k \right) \log \left(q_j \right)} \right) ,\quad  \text{as } q_j, q_k \rightarrow \infty 
\end{align}
To balance the highest order terms on both sizes of (\ref{highlestT}),  $ \widetilde{J}_{kj}\left(\mathbf{q}\right)$  should be at the order of  $\frac{q_k q_j^2}{\log \left( q_k \right) \log \left(q_j \right)} $, $\frac{q_k^2 q_j}{\log \left( q_k \right) \log \left(q_j \right)}$, $\frac{q_k q_j}{\log \left( q_k \right) \log \left(q_j \right)} \text{li}\left(q_k \right)$ where $\text{li}\left( x \right) = \int_{0}^{x} \frac{1}{\log\left( t\right)} \mathrm{d}t$. Furthermore, since $\text{li}\left( x \right)  = \mathcal{O} \left(\frac{x}{\log\left(x \right)} \right) < \mathcal{O}\left( x\right)$,  we have the following asymptotic property of $\widetilde{J}_{kj}\left(\mathbf{q}\right)$:
\begin{align}		\label{Happened}
	\widetilde{J}_{kj}\left(\mathbf{q}\right) = - \frac{\beta_k \beta_j}{\gamma_j \lambda_k \lambda_j \tau^2} \mathcal{O} \left(\frac{q_k q_j^2+ q_j q_k^2}{\log \left( q_k \right) \log \left(q_j \right)} \right),\quad  \text{as } q_j, q_k \rightarrow \infty
\end{align}
Substituting (\ref{Happened}) into (\ref{Jijapprox}), we have
\begin{align}	
	\left| J\left(\mathbf{q}; \mathbf{L}\right) - \sum_{k=1}^K J_k\left(q_k\right) \right| =\sum_{k=1}^K \sum_{j \neq k} L_{kj}   \frac{\beta_k \beta_j}{\gamma_j \lambda_k \lambda_j \tau^2} \mathcal{O} \left(\frac{q_k q_j^2+ q_j q_k^2}{\log \left( q_k \right) \log \left(q_j \right)} \right)  + \mathcal{O}(L^2), \quad \text{as } q_j, q_k \rightarrow \infty, L \rightarrow 0
\end{align}
This proves the lemma.

\end{document}